\begin{document}

\title{Pufferfish Privacy: \\ An Information-Theoretic Study}

\author{Theshani~Nuradha,~\IEEEmembership{Student~Member,~IEEE,}
        and~Ziv~Goldfeld,~\IEEEmembership{Member,~IEEE}
        
\thanks{This paper was presented, in part, at the 2022 IEEE International Symposium on Information Theory \cite{nuradha2022PP}. The work of Z. Goldfeld is partially supported by NSF grants CCF-1947801,  CCF-2046018, and DMS-2210368, and the 2020 IBM Academic Award.}        
        }

\maketitle
\begin{abstract}
Pufferfish privacy (PP) is a generalization of differential privacy (DP), that offers flexibility in specifying sensitive information and integrates domain knowledge into the privacy definition. Inspired by the illuminating formulation of DP in terms of mutual information due to Cuff and Yu, this work explores PP through the lens of information theory. We provide an information-theoretic formulation of PP, termed mutual information PP (MI PP), in terms of the conditional mutual information between the mechanism and the secret, given the public information. We show that MI PP is implied by the regular PP and characterize conditions under which the reverse implication is also true, recovering the relationship between DP and its information-theoretic variant as a special case. We establish convexity, composability, and post-processing properties for MI PP mechanisms and derive noise levels for the Gaussian and Laplace mechanisms. The obtained mechanisms are applicable under relaxed assumptions and provide improved noise levels in some regimes. 
Lastly, applications to auditing privacy frameworks, statistical inference tasks, and algorithm stability are explored.

\end{abstract}
 \begin{IEEEkeywords}
 Auditing for privacy, differential privacy, information measures, privacy mechanisms, Pufferfish privacy.
 \end{IEEEkeywords}

\section{Introduction}

With the exponential increase in personal data shared online and recent advancements in data mining techniques, privacy concerns have become more pressing than ever. Statistical privacy frameworks seek to address these threats in a principled manner subject to formal guarantees \cite{census_haney2017utility}. Differential privacy (DP)~\cite{DMNS06} is a popular framework, which preserves the privacy of individual records while enabling aggregate queries about a database. However, DP only deals with one type of private information (individual records modeled by rows of the database) and does not allow to incorporate domain knowledge into the framework. To address these limitations, a versatile generalization of DP called Pufferfish Privacy (PP) was proposed in~\cite{KM14}. PP enables customization of what constitutes private information and explicitly integrates distributional assumptions into its definition. Nevertheless, the flexibility of the PP definition comes at a cost as the general framework is hard to work with and derive mechanisms for. This work aims to circumvent this impasse by proposing a new structured PP framework along with a natural information-theoretic formulation thereof, which lends well for analysis and enables devising mechanisms and exploring various additional~applications.

\begin{wrapfigure}{R}{0.5\textwidth}
\vspace{-2mm}
{\centering
\includegraphics[width=0.48\textwidth]{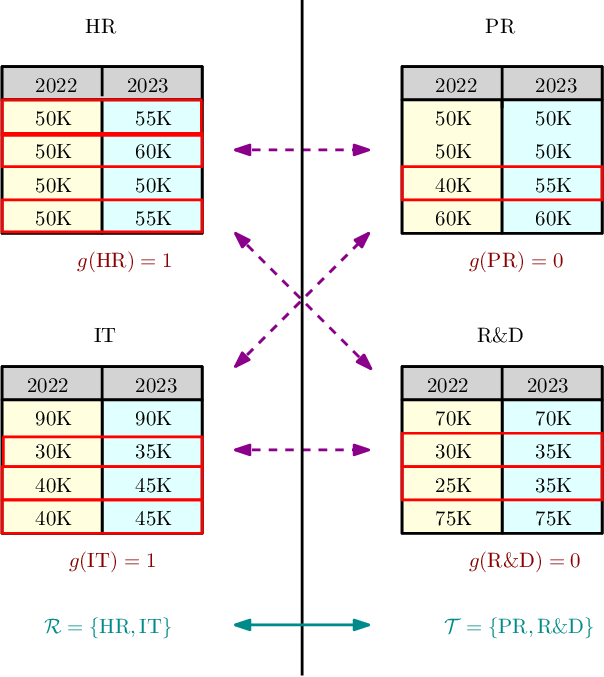} 
    \caption{2022-2033 salary data in four departments: HR, IT, PR, R\&D. The goal is to publish the average 2023 salary in each department (the average of the blue cells) while hiding whether the number raises (marked by red frames) is $\leq 2$ corresponding to $g(\cdot)=0$ or $>2$ corresponding to $g(\cdot)=1$. The average 2022 salaries (yellow cells) are public knowledge.}\label{fig:structured PP example}
 }
\end{wrapfigure}

 \subsection{Pufferfish Privacy}\label{subsec:PP_intro}
Consider salary data from 2022-2023 at a company with four departments: HR, IT, PR, and R\&D. The company wants to publish the average 2023 salary in each department while concealing whether more or less than $m$ employees got a raise.
The average salaries from 2022 are publicly available. More formally, the goal is to publish $f(x)=\frac{1}{n}\sum_{i=1}^nx(i,2023)$ while privatizing whether
$g(x)=\mathds{1}_{\cA_m}$, where $\cA_m=\big\{ |\{i:\,x(i,2022)<x(i,2023)\}| > m  \big\}$,
$x\in\cX\coloneqq \{\mathrm{HR},\mathrm{IT},\mathrm{PR},\mathrm{R\&D}\}$, and $x(i,j)$ is the salary of the $i$th employee during year $j$ in department $x$. The average salary from 2022, i.e., $w(x)=\frac{1}{n}\sum_{i=1}^nx(i,2022))$, is public knowledge. See \cref{fig:structured PP example} for an instance of the described scenario ($n=4$ and $m=2$). 

DP operates by making any pair of neighboring databases indistinguishable, with the definition of neighbors being up to the privacy mechanism designer. In the scenario above, one may apply a DP-based approach by pairing as neighbors every two departments between which there is a difference in the function value $g$ (whether number of employees getting a raise is more than $m$).
For the example from \cref{fig:structured PP example} this amounts to the set of pairs $\{(\mathrm{HR},\mathrm{PR}),(\mathrm{HR},\mathrm{R\&D}),(\mathrm{IT},\mathrm{PR}),(\mathrm{IT},\mathrm{R\&D})\}$ (marked by the dashed purple arrows in the figure). However, by following this approach, we guarantee a stricter privacy requirement than necessary. Indeed, upon observing the privatized version of the published query $f$ and assuming $w$ is publicly known, we only need to make the sets $g^{-1}(0)=\{\mathrm{PR},\mathrm{R\&D}\}$ and $g^{-1}(1)=\{\mathrm{HR},\mathrm{IT}\}$ indistinguishable (marked by the solid dark cyan arrow in \cref{fig:structured PP example}). The benefit of targeting this relaxed notion of privacy is that it enables to achieve improved accuracy and utility. The PP framework is designed to do just that, by enabling full customization of the events that are regarded as private. In addition, PP allows integrating into the framework domain knowledge on the distribution over databases; by considering the set of all possible distributions, this reduces back to the worst-case requirement of DP.

However, the generality of the PP also has drawbacks. For starters, PP does not satisfy general composability \cite{KM14, SoK_DP22}, which is regarded as a privacy axiom---a property that any privacy mechanism should possess. Hence, the outputs of two PP mechanisms can not always be combined to satisfy PP together (as a multi-query output), which limits its usage in practice. In addition, there is a shortage of mechanisms that guarantee PP. The main attempt in that direction is the Wasserstein mechanism from~\cite{SWC17}, which is computationally burdensome as it requires computing the $\infty$-Wasserstein distance between all pairs of conditional distributions of the mechanism's output given any pair of secrets events. Lastly, we note that formal guarantees for PP mechanisms pertaining to privacy-utility tradeoffs, sample complexity bounds for private inference tasks, etc., are largely unavailable due to the hardship of analyzing this framework in full generality. Our goal is to address these shortcomings by introducing some structure into the PP framework to make it more tractable while preserving versatility, and then study the structured variant using tools from information theory.

\subsection{Contributions}

We first propose a novel structured PP framework, where 
the private~and public information is modeled as pairs of functions of the database that are coupled via a bipartite graph. 
This framework captures various privacy notions, from DP \cite{DMNS06} to attribute~privacy (AP) \cite{ZOC20},\footnote{{AP guarantees privacy of functions associated with possibly sensitive attributes in a database, e.g., race or gender.
For instance, referring back to the example in \cref{fig:structured PP example}, if the secret function was the maximum salary of the year 2023 and no public information of the average 2022 salary was available, then that setting would fall under the AP framework. We note that the current example, however, is not captured under AP since the private function is related to two attributes of the database rather than a single attribute as in AP.}}as special cases, while lending well for analysis via tools from information theory. We provide an information-theoretic formulation of the structured PP framework in terms of the conditional mutual information between the mechanism and the secret function given the public one. Generally, the $\epsilon$-mutual information PP ($\epsilon$-MI PP) criteria is implied by $\epsilon$-PP, but we further show that it is sandwiched between $\epsilon$-PP and $(\epsilon,\delta)$-PP in terms of strength under appropriate distributional assumptions and parameter values. The proof relies~on representing PP constraints as bounds on certain  divergences,\footnote{$\infty$-R\'enyi divergence for $\epsilon$-PP and total variation distance for $(0,\delta)$-PP.} and comparing those to the Kullback-Leibler (KL)~divergence (and thus mutual information) via tools like  Pinsker's inequality and the minimax redundancy capacity~theorem. 

{The information-theoretic formulation of the structured PP framework enables a comprehensive analysis of properties, mechanisms, and applications. We begin by establishing properties of $\epsilon$-MI PP mechanisms, encompassing convexity, post-processing, and composability. This shows that our MI PP definition satisfies all the axioms required from a privacy framework \cite{KL12,SoK_DP22}. In particular, while standard PP mechanisms are generally not composable \cite{KM14,SoK_DP22}, our composability results for $\epsilon$-MI PP offer greater flexibility especially in the non-adaptive query setting.}

We next study $\epsilon$-MI PP mechanisms, which is another aspect where the standard PP framework is lacking (the main available mechanisms for standard PP is the Wasserstein mechanism \cite{SWC17}, which is computationally intractable). We derive sufficient conditions on the injected noise level for the Laplace and Gaussian mechanisms that guarantee $\epsilon$-MI PP, and thus also $\epsilon$-MI DP as a special case. The derivation of MI PP mechanisms relies on controlling mutual information via maximum entropy arguments and the entropy power inequality. The resulting noise parameter bounds depend on the conditional variance of the query, which differs from classical results that typically depend on the $\ell^1$- or $\ell^2$-sensitivity of the query; cf. e.g., \cite{DR14,HLM15,BW18,ZWBLRYSLY19}. 
Variance-based parameter bounds are particularly desirable under the PP framework as it encodes prior knowledge on the data distribution. Indeed, it may be the case that sensitivity explodes (e.g., for query functions with unbounded range) but variance is finite due to concentration properties of the distribution class. One drawback of the proposed mechanisms (as well as sensitivity-based ones) is that the injected noise level grows linearly with the dimension. To circumvent this effect, we also propose a Gaussian mechanism that first projects the high-dimensional data onto a low-dimensional space and then adds noise. We obtain parameter bounds for this projection mechanism in terms of the operator norm of conditional covariance matrices and the conditional mean vectors.

Several applications of $\epsilon$-MI PP are explored, starting from auditing for DP \cite{ding2018detecting,jagielski2020auditing,domingo2022auditing}. Auditing black-box mechanisms to certify whether they satisfy a target DP guarantee is challenging, especially in high-dimensional settings. To address this problem, we observe that to audit for DP violations, it suffices to test whether a relaxed privacy notion violates the target privacy level \cite{domingo2022auditing}. We then propose a rigorous hypothesis testing framework for DP violations using the information-theoretic formulation of DP as our test statistic. Since estimating mutual information between high-dimensional variables is statistically burdensome, we introduce a further relaxation to privacy based on sliced mutual information (SMI) \cite{goldfeld2021sliced,goldfeld2022kSMI}, which enjoys parametric empirical convergence rates in arbitrary dimension. Our auditing approach naturally extends to the PP framework. Beyond privacy auditing, we also explore multivariate mean estimation under $\epsilon$-MI PP and derive sample complexity bounds that adapt to the domain knowledge in the PP framework. 
Lastly, we study privacy-utility tradeoffs using $\epsilon$-MI PP and explore its connections to algorithmic stability, which is a standard tool for establishing generalization bounds in statistical learning theory \cite{raginsky2016information,steinke2020reasoning}.

{
\subsection{Related work}
Connections between statistical privacy and information theory have gained increased attention \cite{BK11,DJW13,AAL15,WYZ16,CY16,IW17,mironov2017renyi} as they enable borrowing tools and ideas from one discipline to make progress in the study of the other. In particular, \cite{CY16} established a two-sided connection between DP and the conditional mutual information between the mechanism and any individual record, given the rest of the database. This mutual information-based privacy notion (hereafter abbreviated as MI DP) lends well for an information-theoretic analysis and quantifies privacy via a common currency using which privacy-utility tradeoffs may be explored~\cite{MSSNM14}.
Privacy metrics based on mutual information have been leveraged to analyze and provide guarantees for various inference and learning tasks. MI DP has been used in~\cite{SKT18} to study fundamental privacy-utility tradeoffs in linear regression problems. 
Variants of MI DP have also been used in the context of federated learning study  the generalization error and privacy leakage  \cite{BayesianMIDPFederatedLearning, ImprovedBoundsTogeneralization}, as well as convergence of privacy-preserving training algorithms \cite{MIDPsun2022stochastic}. 
Mutual information-based privacy leakage metrics have also been used in other applications, such as optimal battery charging policies subject to privacy constraints \cite{smartmetering18}, multiple hypothesis testing \cite{guo2022analyzingWithMIDP}, and online location tracing \cite{LocationPrivacyMI8387873}.

Other widely used average-case privacy notion is Kullback-Leibler (KL) DP \cite{KLwang2016average,SoK_DP22} and R\'enyi DP \cite{mironov2017renyi}, both of which serve as relaxations of the classical DP framework. While the main appeal of such average notions is their analytic tractability, they have also been utilized for various applications. KL DP has been applied in settings ranging from collaborative schemes \cite{KLprivacyApp9488663,KLwang2018privacy} and smart grids \cite{KLwang2020disehppc} to the industrial internet of things \cite{KLwang2020ppcs}. It has also been used in tandem with worst-case privacy notions such as DP \cite{KLwang2020disehppc}, by employing them in different stages of the algorithm/scheme of interest. This highlights that even in applications where average-case privacy requirements are not sufficient by themselves, combining them in certain (less sensitive) stages of the system is beneficial, e.g., in terms of utility. KL DP is a special case of R\'enyi DP \cite{mironov2017renyi}  with $\alpha=1$. R\'enyi DP  has been applied to keep track of the privacy budgets in applications including 
optimization \cite{feldman2018privacy}, deep learning \cite{abadi2016deepLearningDP}, and generative adversarial networks \cite{torkzadehmahani2019dp}. The utility and tractability of privacy notions like KL DP, R\'enyi DP, and MI DP serve as inspiration for the information-theoretic formulation of PP proposed herein.
 
 }

\subsection{Organization} 
The rest of the paper is organized as follows. In \cref{Sec: background}, we introduce notation and preliminaries. The structured PP framework, its information-theoretic formulation, and the relation to $\epsilon$- and $(\epsilon,\delta)$-PP are the focus of \cref{Sec:Pufferfish Privacy and Mutual Information}. Properties of the $\epsilon$-MI PP framework and mechanisms are treated in Sections \ref{Sec: properties} and \ref{Sec: Mechanisms}, respectively. In \cref{Sec: Auditing} we design a sample-efficient hypothesis test for privacy auditing based on $\epsilon$-MI PP. Additional applications  to private mean estimation, algorithmic stability, and privacy-utility tradeoffs are covered in \cref{Sec: Additional applications}. Proofs are given in \cref{Sec:proof}, while \cref{Sec: Conclusion} provides concluding remarks and future directions.

\section{Background and Preliminaries} \label{Sec: background}
We set up the notation used throughout this paper, present the DP framework along with its information-theoretic formulation from \cite{CY16}, and introduce the PP paradigm.

\subsection{Notation}

Sets are denoted by calligraphic letters, e.g. $\cX$. For $k,n \in\NN$, we use $\cX^{n \times k}$ for the database space of $n\times k$ matrices (columns correspond to different attributes while rows to different individuals). The $(i,j)$th entry of $x\in\cX^{n\times k}$ is $x(i,j)$. The $i$th row and $j$th column of $x$ are $x(i,\cdot)$ and $x(\cdot,j)$, respectively. The image of a function $g: \cX^{n\times k} \to \RR^d$ is denoted by $\mathrm{Im}(g)$. For $p\geq 1$, $\|\cdot\|_p$ designates the $\ell^p$ norm on $\RR^d$; we omit the subscript when $p=2$ (which is our typical use case). The operator norm for matrices is denoted by $\| \cdot \|_{\op}$. For two numbers $a$ and $b$, we use the notation $a \wedge b = \min \{ a,b \}$ and $a \vee b = \max \{a,b \}$

We denote by $(\Omega,\cF,\PP)$ the underlying probability space on which all random variables (RVs) are defined, with $\EE$ designating expectation. RVs are denoted by upper case letters, e.g., $X$, with $P_X$ representing the corresponding probability law. For $X\sim P_X$, we interchangeably use $\supp(X)$ and $\supp(P_X)$ for the support. The joint law of $(X,Y)$ is denoted by $P_{XY}$, while $P_{Y|X}$ designates the (regular) conditional probability of $Y$ given $X$. 
Conventions for $n\times k$-dimensional random variables are the same as for deterministic elements. The space of all Borel probability measures on $\cS\subseteq\RR^d$ is denoted by $\cP(\cS)$. We write $P\ll Q$ to denote that $P$ is absolutely continuous with respect to (w.r.t.)~$Q$. The $n$-fold product measure of $P\in\cP(\cS)$ is $P^{\otimes n}$. 
Indicator function of a measurable event $\cA \in \cF$ is denoted by $\mathds{1}_\cA$.

For $(X,Y)\sim P_{XY}$, the mutual information between $X$ and $Y$ is denoted by $\sI(X;Y)$. The differential entropy of $X$ is $\sh(X)$. Conditional versions of the above given a third (correlated) RV $Z$ are denoted by $Z$ by $\sI(X;Y|Z)$ and $\sh(X|Z)$, respectively. The Kullback-Leibler (KL) divergence between $P, Q\in\cP(\cX)$ with $P\ll Q$ is 
\[\dkl(P\|Q) :=\EE_P\left[\log\left(\frac{d P}{d Q}\right)\right],\]
where $\frac{d P}{d Q}$ is the Radon-Nikodym derivative of $P$ w.r.t. $Q$. The total variation (TV) distance is defined as
\[\|P-Q\|_{\tv}:=\sup_{\cA}\big|P(\cA) -Q(\cA)\big|,\]
where the supremum is over all measurable sets $\cA$. Both the KL divergence and the TV distance are jointly convex in $(P,Q)$, and are related to one another via Pinsker's inequality \cite{RW09GeneralisedPinsker}: $\|P-Q\|_{\tv}\leq \sqrt{0.5\mathsf{D}_{\mathsf{KL}}(P\|Q)}$. Also recall that $\sI(X;Y)$ can be expressed in terms of KL divergence as $\sI(X;Y)= \mathsf{D}_{\mathsf{KL}}(P_{XY}\|P_X \otimes P_Y),$
where $P_X$ and $P_Y$ are the respective marginals of $X$ and $Y$. For $1\leq p<\infty$, the $p$-Wasserstein distance between $P,Q \in \cP(\cX)$ with finite $p$th absolute moments, i.e., $\EE_P[\|X\|^p],\EE_Q[\|Y\|^p]<\infty$, is $\sW_p(P,Q):=  \inf_{\pi \in \Pi(P,Q)} \big(\EE_\pi\big[\|X-Y\|^p\big]\big)^{1/p}$, where  $\Pi(P,Q)$ is the set of couplings of $P$ and $Q$. The $\infty$-Wasserstein distance is given by $\sW_\infty(P,Q)\coloneqq\lim_{p\to\infty}\sW_p(P,Q)=\inf_{\pi \in \Pi(P,Q)}\sup_{x,y\in\supp(\pi)}\|x-y\|$.

For multi-index $\alpha=(\alpha_1, \ldots, \alpha_d) \in \NN_0$, the partial derivative operator of order $\|\alpha\|_1 $ is $D^\alpha= \frac{\partial^{\alpha_1}}{\partial^{\alpha_1} x_1} \ldots \frac{\partial^{\alpha_d}}{\partial^{\alpha_d} x_d} $. For an open set $\cU \subseteq \RR^d$ and $s\in\NN_0$, let $\sC^s(\cU)$ be the class of functions whose partial derivatives up to order $s$ all exist and are continuous on $\cU$. The H\"older function class of smoothness $s\in\NN_0$ and radius $b\geq 0$ is then defined as $\sC^s_b(\cU):=\{f \in \sC^s(\cU): \max_{\alpha: \|\alpha\|_1 \leq s} \|D^\alpha  f \|_{\infty,\cU} \leq  b  \}$. The restriction of $f: \RR^d \to \RR$ to $\cX \subseteq \RR^d$ is denoted by~$f|_{\cX}$. For compact $\cX$, slightly abusing notation, we set $ \| \cX\| := \sup_{x\in \cX} \|x\|$.

\subsection{Differential Privacy}
DP allows answering queries about aggregate quantities while protecting the individual entries in the database \cite{DMNS06}. To that end, the output of differentially private mechanism should be indistinguishable for neighboring databases---those that differ only in a single record (row). Formally, we say~that $x,x'\in \cX^{n\times k}$ are neighbors, denoted $x\sim x'$,  if $x(i,\cdot)\neq x'(i,\cdot)$ for some $i=1,\ldots,n$, and agree on all other rows.

\begin{definition}[Differential privacy] \label{def: DP}
Fix $\epsilon,\delta>0$. A~randomized mechanism\footnote{A randomized mechanism is described by a (regular) conditional probability distribution given the data, i.e., $P_{M|X}$.} ${M}: \cX^{n \times k} \to \cY$ is $(\epsilon,\delta)$-differentially private if for all $x \sim x'$ with $x,x' \in \cX^{n \times k}$, and $\cA \subseteq \cY $ measurable, we have 
\begin{equation}
\PP\big(M(x) \in \cA \big) \leq e^{\epsilon} \hspace{1mm} \PP\big(M(x') \in \cA\big) + \delta.\label{eq:dp_def}
\end{equation}
\end{definition}

The formulation of DP can be extended to arbitrary neighboring relations between databases, which is known as generic DP \cite{SoK_DP22}. Namely, neighbors can be defined as pairs that agree on all entries except any prespecified portion of the database (as opposed to just the rows, as in standard DP). For instance, viewing databases that agree up to their columns as neighbors, gives rise to a variant of the AP framework \cite{ZOC20}.

\medskip
An information-theoretic formulation of DP was proposed in~\cite{CY16} in terms of the conditional mutual information between each row of the database and the mechanism, given the rest of the rows. We next define $\epsilon$-mutual information DP ($\epsilon$-MI~DP) and then recall the main equivalence result of \cite{CY16}.

\begin{definition}[$\epsilon$-MI DP]\label{def:MI DP}
A Randomized mechanism $M: \cX^{n\times k} \to \cY$ is $\epsilon$-MI DP, if 
\begin{equation}
    \sup_{\substack{P_X \in \cP(\cX^{n \times k}),\\i=1,\ldots,n}} \sI\big(X(i,\cdot);M(X)\big|\big(X(j,\cdot)\big)_{j\neq i}\big) \leq \epsilon.\label{EQ:MI_DP}
\end{equation}
\end{definition}
Theorem 1 of \cite{CY16} states that $\epsilon$-DP (i.e., $(\epsilon,\delta)$-DP with $\delta=0$) implies $\epsilon$-MI DP, which further implies $(\epsilon',\sqrt{2\epsilon})$-DP, for any $\epsilon'\geq 0$. Thus, $\epsilon$-MI DP is in fact sandwiched between $\epsilon$-DP and $(\epsilon,\delta)$-DP in terms of its strength. It was also shown in \cite{CY16} that $\epsilon$-MI DP satisfies properties such as convexity, post-processing, and adaptive/non-adaptive composition.

\subsection{Pufferfish Privacy}
For a database space $\cX^{n \times k}$, the PP framework~\cite{KM14} consists of three components: (i) a set of secrets $\mathcal{S}$, that contains measurable subsets of $\cX^{n \times k}$; (ii) a set of secret pairs $\mathcal{Q} \subseteq \mathcal{S} \times \mathcal{S}$ that needs to be statistically indistinguishable in the $(\epsilon,\delta)$ sense (see~\eqref{eq:pp_def}); and (iii) a class of data distributions $\Theta \subseteq \cP(\cX^{n \times k})$, that captures prior beliefs or domain knowledge. As formulated next, the goal of PP is to make all secret pairs in $\cQ$  indistinguishable w.r.t. those prior beliefs $P_X \in \Theta$.  
\begin{definition}[Pufferfish privacy]\label{def:pp_def}
Fix $\epsilon,\delta>0$. A~randomized mechanism ${M}: \cX^{n \times k} \to \cY$ is $(\epsilon,\delta)$-private in the pufferfish framework $(\mathcal{S},\mathcal{Q}, \Theta )$ if for all $P_X \in \Theta$, $(\cR,\cT) \in \mathcal{Q}$ with $P_X(\cR),P_X(\cT)> 0$, and $\cA \subseteq \cY $ measurable, we have
\begin{equation}
\PP\big(M(X) \in \cA \big| \cR\big) \leq e^{\epsilon} \hspace{1mm} \PP\big(M(X) \in \cA\big| \cT\big) + \delta.\label{eq:pp_def}
\end{equation}
\end{definition}

DP from Definition~\ref{def: DP} is a special case of PP when $\cS=\cX^{n\times k}$, $\cQ$ contains all neighboring pairs of databases, and $\Theta=\cP(\cX^{n \times k})$ (i.e., no distributional assumptions are made, and privacy is guaranteed in the worst case). PP also subsumes any other famework under generic DP \cite{SoK_DP22} (i.e., alternative neighboring relations) by choosing $\cQ$ accordingly. 
Another special case of PP is AP \cite{ZOC20}, which privatizes global statistical properties of data attributes. In this case, $\cS$ is the value of a function evaluated on the data, $\mathcal{Q}$ contains pairs of function values, and $\Theta$ captures assumptions on how the data was sampled and correlations across attributes thereof. These special cases are discussed in detail in Remark \ref{rem:pp_reductions} ahead.

\section{Pufferfish Privacy and Mutual Information} \label{Sec:Pufferfish Privacy and Mutual Information}

Towards an information-theoretic characterization of PP, it is convenient to focus on a slightly more structured formulation that explicitly decomposes pairs of secrets into private and public parts. The considered PP framework is presented next, followed by an information-theoretic characterization.

\subsection{Structured Pufferfish Privacy Framework}

We focus on a special case of the general framework, where pairs $(\cR,\cT)\in\cQ$ are decomposed into a private part (on which they should be indistinguishable) and a common part (interpreted as public information). In Remark~\ref{rem:pp_reductions} we demonstrate how the considered formulation reduces to popular privacy notions like DP~\cite{DMNS06} and AP~\cite{ZOC20}. Our formulation is constructed as follows:

\begin{enumerate}[wide, labelindent=10pt]
    \item \underline{Private/public functions:} Let $\cG$ and $\cW$ be finite sets, 
    containing functions on $\cX^{n\times k}$. For $g\in\cG$, we interpret $g(X)$ as a private feature of the database $X\sim P_X\in \Theta$, while $w(X)$, $w\in\cW$, represents publicly available information. 
    \item \underline{Function pairs:} To encode which private-public function pairs constitute a secret (i.e., an element of~$\cS$) we use a bipartite graph. Consider the graph $(\cG,\cW,\cE)$, where $\cE$ is a given edge set between the two partitions $\cG$ and $\cW$. We write $g\sim w$ if $\{g,w\}\in\cE$ for some $g\in\cG$ and $w\in\cW$. The operational meaning of an edge $g\sim w$ is that $g(X)$ must be concealed even if the adversary has access to $w(X)$. For example, for DP we take $g_i(x)=x(i,\cdot)$ as a specific row of the database and $w_i(x)=\big(x(j,\cdot)\big)_{j\neq i}$ as the rest of the database, where $i=1,\ldots,n$; then set $\cG=\{g_i\}_{i=1}^n$, $\cW=\{w_i\}_{i=1}^n$, and $\cE=\big\{\{g_i,w_i\}\big\}_{i=1}^n$, as depicted in Figure~\ref{fig:DP_function_pairs}.)

\begin{figure}[t!]
\begin{center}\hspace{13mm}
\begin{psfrags}
\psfragscanon
\psfrag{A}[][][0.9]{\hspace{-1mm}$n$}
\psfrag{B}[][][0.9]{$k$}
\psfrag{C}[][][0.9]{\color{brown}\hspace{-20mm}$g_i(x)=x(i,\cdot)$}
\psfrag{D}[][][0.9]{ \hspace{-30mm}$w_i(x)=\big(x(j,\cdot)\big)_{j\neq i}$ }
\psfrag{E}[][][0.9]{\color{brown}\hspace{-22mm}(private portion)}
\psfrag{F}[][][0.9]{\hspace{-25mm}(public portion)}
\psfrag{G}[][][0.9]{$x \in \cX^{n\times k}$}
\includegraphics[scale = .7]{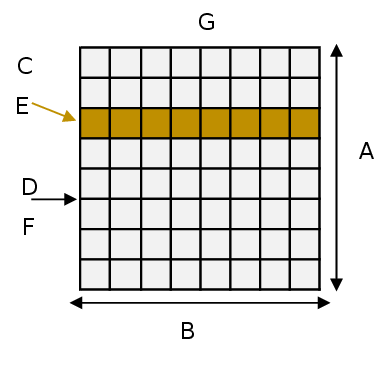}
\vspace{-5mm}
\caption{Function pairs for DP: The $i$th row of $x \in \cX^{n \times k}$ is the private portion, while the rest of the database is the corresponding public part. 
\vspace{-2mm}} 
\label{fig:DP_function_pairs}
\psfragscanoff
\end{psfrags}
\end{center}
\end{figure}

    \item \underline{Secret event:} Each secret event (namely, an element of $\cS$) corresponds to a specific value that a private-public function pair takes, i.e., for $\cG\ni g\sim w\in\cW$, $a\in\mathrm{Im}(g)$, and $c\in\mathrm{Im}(w)$, $\cS$ comprises all events of the form $\cA_{g,w}(a,c):=\big\{g(X)=a,\,w(X)=c\big\}$. 
    \item \underline{Secret event pairs:} Elements of $\cQ\subseteq \cS\times\cS$ are pairs that share the same public information (i.e., the value for $w(X)$) but differ in their private portions (the value of $g(X)$). 
\end{enumerate}

We are now ready to define the structured PP framework. 

\begin{definition}[Structured PP framework]\label{def:specialized_PP}
Fix $\epsilon,\delta>0$ and consider a bipartite graph $(\cG,\cW,\cE)$ with sets of functions $\cG$ and $\cW$ as described above. A randomized mechanism $M:\cX^{n\times k}\to\cY$ is $(\epsilon,\delta)$-private in the structured pufferfish framework $(\cG,\cW,\cE,\Theta)$ if it satisfies Definition~\ref{def:pp_def} with
\begin{align*}
 \mathcal{S}\mspace{-2mu} &=\mspace{-2mu}    \big\{\cA_{g,w}(a,c):\cG\ni g\sim w\in\cW ,\,a \in \mathrm{Im}(g),\, c \in \mathrm{Im}(w)\big\} 
 \cr
 \mathcal{Q}\mspace{-2mu}&=\mspace{-2mu} \big\{\{\cA_{g,w}(a,c) , \cA_{g,w}(b,c) \big\}\mspace{-2mu}: \cG\ni g\mspace{-2mu}\sim\mspace{-2mu} w\in\cW,\,c \in \mathrm{Im}(w),
\ a,b \in \mathrm{Im}(g),\,a \neq b \big\}
\end{align*}
and a set of data distributions $\Theta\subseteq\cP(\cX^{n\times k})$.
\end{definition}

The structured PP framework captures various prominent privacy notions, such as DP \cite{DMNS06} and AP \cite{ZOC20}.

{ 

\begin{remark}[Semantics of the structured PP framework] \label{rem:semantics of structured PP}
Structured PP framework provides the following privacy guarantee: for any database $X$ generated from a distribution in the class $\Theta$, an adversary that knows the function value $w(X)$, for $w\in \cW$, and the output of the privatization mechanism $M(X)$ draws the same conclusions regardless of the value of the private function $g(X)$, $g \in \cG$. This applies to many realistic scenarios, such as the one described in  \cref{subsec:PP_intro} (see also \cref{fig:structured PP example}). Noting that this example indeed falls under the structured PP framework, we have that the value of $g$ (whether more than half of the employees in each department received a raise) is protected even if the adversary has the access to $w$ (average salaries of year 2022 across the department) and $M$ (the privatized average 2023 salary).
\end{remark}
}

\begin{remark}[Special cases] \label{rem:pp_reductions}
The structured PP framework reduces to various important privacy notions. We provide two such examples pertaining to DP and AP.
\begin{enumerate}
    \item DP corresponds to a structured PP framework with $\Theta= \cP(\cX^{n \times k})$, private functions $g_i(x)=x(i,\cdot)$, public functions $w_i(x)=\big(x(j,\cdot)\big)_{j\neq i}$, where $i=1,\ldots,n$, and an edge set $\cE=\big \{ \{g_i,w_i\} \big\}_{i=1}^n$. This construction naturally extends to any privacy framework where secret events are singletons (databases). Then, each private function $g$ acts on some prespecified portion of the database, and is connected by an edge to a public function $w$ that acts on the remaining data entries. 
    \item The AP framework privatizes attributes of the database, which are captured by certain functions $\tilde{g}_j:\cX^k\to\RR$ of the columns $j=1,\ldots,k$. Following the setup of \cite{DMNS06}, we take $g_j(x)=\tilde{g}_j\big(x(\cdot,j)\big)$, for $j=1,\ldots,k$; as AP includes no public information we set $\cW=\cE=\emptyset$ and let $\Theta$ be the class of distributions of interest. Alternatively, one may consider a variant of AP with public information, which is the portion of the database except the considered column. In that case, $\cW$ is a set of functions $\smash{w_j(x)=\big(x(\cdot,i)\big)_{i\neq j}}$, where $j=1,\ldots,k$, and $\cE=\big \{ \{g_j,w_j\} \big\}_{j=1}^k$. 
\end{enumerate}
\end{remark}

{Given the definition of the structured PP framework, it is natural to ask for mechanisms that attain it. The Wasserstein mechanism from \cite{SWC17} can be used to guarantee general PP. However, 
that approach is computationally intractable. A simpler mechanism can be devised by following the approach of \cite[Theorem~1]{ZOC20} for AP under appropriate Gaussianity assumptions. Specifically, fix a query $f: \cX^{n \times k} \to \RR$ and suppose that for any $X\sim P_X\in\Theta$, $g\in\cG$, and $w\in\cW$ with $g\sim w$, we have that the conditional distribution of $f(X)$ given $\big(g(X),w(X)\big)$ is Gaussian. Under this assumption, the conditional variance $\mathrm{Var}\big(f(X) \big | g(X)=a, w(X)=c \big)$ does not depend on the values $(a,b)$ and the Gaussian noise-injection mechanism $M(X)=f(X)+Z$, with $Z \sim \cN(0,\sigma^2)$ satisfies $(\epsilon,\delta)$-structured PP whenever
\[\sigma^2 \geq \sup_{\substack{P_X \in \Theta,\\g\in\cG,w\in\cW:\\g \sim w, \\ c \in \mathrm{Im}(w)}} 2\left(\epsilon^{-1}\Delta^{P_X}_{f,g,w}(c)\right)^2\log(1.25/\delta)  -\mathrm{Var}\big(f(X) \big | g(X)=a_0, w(X)=c_0 \big), \]
where $(a_0,c_0)\in\mathrm{Im}(g)\times \mathrm{Im}(w)$ are arbitrary and 
\[\Delta^{P_X}_{f,g,w}(c) \coloneqq \sup_{a,b \in \mathrm{Im}(g)} \big| \EE\big[ f(X) \big | g(X)=a, w(X)=c  \big] -  \EE\big[ f(X) \big | g(X)=b, w(X)=c  \big] \big |.\]
While the above Gaussianity requirement may hold by assuming, e.g., Gaussian data and linear functions, it is generally quite restrictive. To devise tractable mechanisms beyond this Gaussian setting, we next propose an information-theoretic reformulation of the structured PP framework. This reformulation lends well for analysis and enables deriving sufficient conditions on parameters of noise-injection mechanism that guarantee structured PP in general.

}

\subsection{Information-Theoretic Formulation}
To provide an information-theoretic formulation of the structured PP framework, we first define $\epsilon$-MI PP. 

\begin{definition}[$\epsilon$-MI PP]\label{def:MI_PP_def}
Let $(\cG,\cW,\cE)$ be a bipartite graph as in Definition ~\ref{def:specialized_PP} and $\Theta\subseteq\cP(\cX^{n\times k})$. A randomized mechanism $M:\cX^{n\times k}\to\cY$ is $\epsilon$-MI PP in the framework $(\cG,\cW,\cE,\Theta)$ if
\[
\sup_{\substack{P_X \in \Theta,\\g\in\cG,w\in\cW:\\g \sim w}} \sI\big(g(X);M(X)|w(X)\big) \leq \epsilon.
\]
\end{definition}
Evidently, $\epsilon$-MI PP as defined above recovers the notion of $\epsilon$-MI DP from \cite{CY16} (see Definition~\ref{def:MI DP}) by taking $(\cG,\cW,\cE,\Theta)$ as described in Part 1 of Remark ~\ref{rem:pp_reductions}.

{
\begin{remark}[Revisiting semantics of the structured PP]\label{rem:semantics_MI_PP}
The $\epsilon$-MI PP formulation explicitly encodes the semantics of the structured PP framework, as explained in \cref{rem:semantics of structured PP}. Namely, for any database distribution $P_X \in \Theta,\cG \ni g \sim w \in \cW$, the mechanism's output $M(X)$ should not convey more than $\epsilon$ information bits about any secret function $g(X)$, $g\in\cG$, even when $w(X)$, $w\in\cW$, is available as side information. 
\end{remark}

}

The following theorem characterizes the relative strength of the structured PP framework from Definition~\ref{def:specialized_PP} compared to $\epsilon$-MI PP, showing that the latter lies between $\epsilon$-PP (i.e., $(\epsilon,\delta)$-PP with $\delta=0$) and $(\epsilon,\delta)$-PP for appropriate parameter values. 

\begin{theorem}[Relative strength]\label{thm:equivalence}
Consider the structured $(\epsilon,\delta)$-PP framework $(\cG,\cW,\cE,\Theta)$ from Definition~\ref{def:specialized_PP}. Let $\epsilon'>0$ be arbitrary and set $\epsilon''=\epsilon\wedge \frac 12 \epsilon^2$. Then 
\[
\epsilon\textnormal{-PP} \implies  \epsilon'' \textnormal{-MI PP }. 
\]
and if $\Theta=\cP(\cX^{n \times k})$, then we further have
\[
\epsilon\textnormal{-PP} \implies  \epsilon'' \textnormal{-MI PP } \implies (\epsilon', \sqrt{2\epsilon''})\textnormal{-PP}.
\]

Moreover, the inverse implication 
\[
  (\epsilon,\delta)\textnormal{-PP} \implies \epsilon^{\star} \textnormal{-MI PP }  
\]
holds under either of the following conditions:
\begin{enumerate}
    \item $\big|\supp\big(M(X)\big)\big|<\infty$ or $\max_{g\in\cG}| \mathrm{Im}(g)|<\infty$, whence \[\epsilon^{\star}=2\sh_b(\delta^{'}) +2 \delta^{'} \log \left(\big|\supp\big(M(X)\big)\big|\wedge \big(\max_{g \in \cG}|\mathrm{Im}(g)|+1\big)\right)\] where $\sh_b(\alpha)=-\alpha\log(\alpha)-(1-\alpha)\log(1-\alpha)$, for $\alpha\in[0,1]$, is the binary entropy function in nats and $\delta^{'} =1-2(1-\delta)/(e^\epsilon+1)\in [0,1]$.
    \item The joint density $f_{M(X),g(X),w(X)}$ and conditional density $f_{M(X)|g(X),w(X)}$  exists, whence 
     \[\epsilon^\star = 
    \left(1- \frac{2(1-\delta)}{e^\epsilon+1} \right) \left\{ {\sup_{\substack{P_X \in \Theta,\\(g,w)\in\cG\times\cW:\, g \sim w,\\ a,b \in \mathrm{Im}(g),\, c \in \mathrm{Im}(w)}}} \frac{1}{2}\left( \frac{\log \big(\alpha_{a,b,c}^{-1}\big)}{1-\alpha_{a,b,c}} - \beta_{a,b,c} \right) \wedge  {\sup_{\substack{P_X \in \Theta,\\(g,w)\in\cG\times\cW:\, g \sim w,\\ a \in \mathrm{Im}(g),\, c \in \mathrm{Im}(w)}}}\log \left( \frac{u_{a,c}}{\ell_{a,c}} \right) \right\} ,\]
    where $\alpha_{a,b,c}^{-1}=\sup_{x\in \supp(M(X))}\frac{f_{M(X)|g(X),w(X)}(x|a,c)}{f_{M(X)|g(X),w(X)}(x|b,c)}$, $\beta_{a,b,c}=\inf_{x\in \supp({M(X)})} \frac{f_{M(X)|g(X),w(X)}(x|a,c)}{f_{M(X)|g(X),w(X)}(x|b,c)}$, $u_{a,,c}=\sup_{x\in \supp (M(X))}f_{M(X),g(X),w(X)}(x,a,c)$ and $\ell_{a,c}=\inf_{x\in \supp(M(X))} f_{M(X)|g(X),w(X)}(x,a,c)$. 
   
\end{enumerate}
\end{theorem}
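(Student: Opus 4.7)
The plan is to handle the three implications separately; throughout, I abbreviate $P_{M(X)|g(X)=a,w(X)=c}$ by $P_{M|a,c}$ and route the argument through divergence comparisons between such conditional laws.

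\emph{Forward direction.} For each admissible $(g\sim w,a,b,c)$, $\epsilon$-PP forces $e^{-\epsilon}\leq dP_{M|a,c}/dP_{M|b,c}\leq e^{\epsilon}$, i.e.\ an $\infty$-R\'enyi divergence bound of $\epsilon$. The extremal problem $\max\EE_Q[t\log t]$ subject to $\EE_Q[t]=1$ and $t\in[e^{-\epsilon},e^{\epsilon}]$ is solved by a two-point distribution on the endpoints, giving $\dkl(P_{M|a,c}\|P_{M|b,c})\leq \epsilon\tanh(\epsilon/2)\leq \epsilon\wedge \tfrac{1}{2}\epsilon^2=\epsilon''$. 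Joint convexity of $\dkl$ in its second argument allows replacing $P_{M|b,c}$ by the mixture $P_{M|w=c}$ without enlarging the bound; averaging over $a$ with weights $P_{g|w=c}$ then yields $\sI\big(g(X);M(X)\big|w(X)=c\big)\leq \epsilon''$. Averaging over $c$ and taking the supremum over $P_X\in\Theta$, $g\sim w$ completes the forward implication.

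\emph{First inverse.} To prove $\epsilon''$-MI PP $\Rightarrow (\epsilon',\sqrt{2\epsilon''})$-PP under $\Theta=\cP(\cX^{n\times k})$, it suffices to establish $(0,\sqrt{2\epsilon''})$-PP, since $(0,\delta)$-PP trivially implies $(\epsilon',\delta)$-PP for every $\epsilon'\geq 0$. Given a target triple $(a,b,c)$, pick $x_a\in g^{-1}(a)\cap w^{-1}(c)$ and $x_b\in g^{-1}(b)\cap w^{-1}(c)$, and set $P_X=\tfrac{1}{2}\delta_{x_a}+\tfrac{1}{2}\delta_{x_b}\in\cP(\cX^{n\times k})$. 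The conditional MI then reduces to the Jensen--Shannon expression $\tfrac{1}{2}\dkl(P_{M|a,c}\|\bar P)+\tfrac{1}{2}\dkl(P_{M|b,c}\|\bar P)$, with $\bar P=\tfrac{1}{2}(P_{M|a,c}+P_{M|b,c})$. Using $\|P_{M|a,c}-\bar P\|_{\tv}=\tfrac{1}{2}\|P_{M|a,c}-P_{M|b,c}\|_{\tv}$ and Pinsker's inequality termwise gives $\sI\big(g(X);M(X)|w(X)\big)\geq \tfrac{1}{2}\|P_{M|a,c}-P_{M|b,c}\|_{\tv}^{2}$; inversion under the $\epsilon''$-MI PP hypothesis delivers the desired $\sqrt{2\epsilon''}$ TV gap.

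\emph{Second inverse.} For $(\epsilon,\delta)$-PP $\Rightarrow \epsilon^{\star}$-MI PP, the common starting point is the classical conversion of $(\epsilon,\delta)$-indistinguishability into a TV bound, $\|P_{M|a,c}-P_{M|b,c}\|_{\tv}\leq \delta':=1-2(1-\delta)/(e^{\epsilon}+1)$. Under Case~1, I would propagate this pairwise TV estimate to $\sI(g(X);M(X)|w(X))$ via the entropy continuity bound $|\sh(P)-\sh(Q)|\leq \sh_b(\|P-Q\|_{\tv})+\|P-Q\|_{\tv}\log(|\cA|-1)$; two applications (one to $\sh(M|w)$ and one to $\sh(M|g,w)$, compared against a common reference) combine through $\sI(g;M|w)=\sh(M|w)-\sh(M|g,w)$ to produce the factor of two, while the alphabet size $|\supp(M(X))|\wedge(|\mathrm{Im}(g)|+1)$ comes from selecting whichever side---the output $M(X)$ or the private feature $g(X)$---gives the smaller Fano-type perturbation. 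Under Case~2, I would first reduce $\sI(g;M|w)$ to $\sup_{a,b,c}\dkl(P_{M|a,c}\|P_{M|b,c})$ by joint convexity, and then bound each pairwise KL in two complementary ways: (i) by solving the constrained extremal problem $\max \EE_Q[t\log t]$ over $t\in[\beta_{a,b,c},\alpha_{a,b,c}^{-1}]$ with $\EE_Q[t]=1$ and combining with the TV bound $\delta'$, which yields the $\tfrac{1}{2}\big(\log(\alpha^{-1})/(1-\alpha)-\beta\big)$ term; and (ii) by a pointwise bound of the form $\log(u_{a,c}/\ell_{a,c})$ on the joint-to-conditional density ratio. Taking the smaller of the two controls and multiplying by $\delta'$ produces $\epsilon^{\star}$.

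The most delicate part is Case~2 of the second inverse: extracting the precise form $\tfrac{1}{2}\big(\log(\alpha^{-1})/(1-\alpha)-\beta\big)$ from the constrained extremal problem and verifying its correct multiplicative coupling with $\delta'$ requires some convex-analytic bookkeeping, since the extremal KL configuration on the asymmetric interval $[\beta_{a,b,c},\alpha_{a,b,c}^{-1}]$ is not symmetric. In contrast, the forward direction and first inverse adapt the Cuff--Yu argument~\cite{CY16} almost verbatim, with the sole modification being the conditioning on $w(X)$ accompanying the public function.
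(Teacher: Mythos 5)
Your proposal is essentially correct and, in two places, genuinely diverges from the paper's route. For the forward direction the paper only invokes monotonicity of R\'enyi divergences to get $\dkl(P_{M|a,c}\|P_{M|b,c})\leq\epsilon$, which on its face delivers only the $\epsilon$ part of $\epsilon''=\epsilon\wedge\frac12\epsilon^2$; your extremal two-point computation giving $\epsilon\tanh(\epsilon/2)\leq\epsilon\wedge\frac12\epsilon^2$ is correct (I checked the weights: the maximum of $\EE_Q[t\log t]$ over $t\in[e^{-\epsilon},e^\epsilon]$ with $\EE_Q[t]=1$ is indeed $\epsilon\tanh(\epsilon/2)$) and actually supplies the quadratic regime that the paper's sketch leaves implicit. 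For the first inverse, the paper goes through the minimax redundancy--capacity theorem, Pinsker applied to the capacity-achieving reference measure $Q^\star$, and the triangle inequality; your two-point prior / Jensen--Shannon / termwise-Pinsker argument reaches the same $\sqrt{2\epsilon''}$ constant more directly and is a legitimate alternative. One small repair is needed there: with Dirac masses at $x_a,x_b$ you only bound $\|P_{M|X=x_a}-P_{M|X=x_b}\|_{\tv}$, whereas $(\epsilon',\delta)$-PP requires the bound between $P_{M(X)|\cR}$ and $P_{M(X)|\cT}$ for \emph{every} prior in $\Theta$; either run the same argument with the mixture prior $\frac12 P_{X|\cR}+\frac12 P_{X|\cT}\in\cP(\cX^{n\times k})$, or note that joint convexity of TV promotes the pointwise bound to arbitrary mixtures. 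For the second inverse, your Case 1 (entropy continuity in TV, two applications, alphabet size taken as the minimum) matches the paper's adaptation of Cuff--Yu's Lemma 3, and your Case 2 second term matches the paper's use of a continuity-of-MI-in-TV lemma with constant $\sup_a\log(u_{a,c}/\ell_{a,c})$. The one genuinely incomplete step is the first term of Case 2: the paper obtains $\frac{\delta'}{2}\bigl(\log(\alpha^{-1})/(1-\alpha)-\beta\bigr)$ by directly citing the reverse Pinsker inequality of Sason, whereas you propose to rederive it from a constrained extremal problem and concede the bookkeeping is unresolved; as written this term is asserted rather than proved, so you should either carry out that optimization or invoke the reverse Pinsker bound as the paper does.
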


Theorem~\ref{thm:equivalence} is proven in Section ~\ref{proof:equivalence}. The first implication follows by reformulating $\epsilon$-PP in terms of the $\infty$-R\'enyi divergence, translating that to an $\epsilon$ bound on the corresponding KL divergence, and then use joint convexity to arrive at $\epsilon''$-MI PP. When $\Theta$ is the set of all database distributions, the second implication is derived via the minimax redundancy capacity representation and Pinsker's inequality. The inverse implications first translates $(\epsilon,\delta)$-PP into a bound on the TV distance between corresponding conditional distributions and then employs either continuity of entropy w.r.t. the TV distance to control mutual information or the reverse Pinsker inequality.{Note that the privacy guarantee provided by $\epsilon$-PP is stronger than that of $\epsilon$-MI PP, as evident from the implication $\epsilon$-PP $\implies \epsilon$-MI PP.
}

{
\begin{remark}[$\epsilon$-KL PP]\label{rem:KL_PP} $\epsilon$-KL DP \cite{KLwang2016average,SoK_DP22} can be generalized to the setting of structured PP as follows. Let $(\cG,\cW,\cE)$ be a bipartite graph as in Definition ~\ref{def:specialized_PP} and $\Theta\subseteq\cP(\cX^{n\times k})$. A randomized mechanism $M: \cX^{n \times k} \to \cY$ is $\epsilon$-KL PP in the framework $(\cG,\cW,\cE,\Theta)$ if 
\[
 \dkl\big( P_{M(X)|\cA_{g,w}(a,c)} \big \| P_{M(X)|\cA_{g,w}(b,c)} \big) \leq \epsilon, \quad \forall P_X \in \Theta,\, \cG \ni g \sim w \in \cW,\, a,b \in \mathrm{Im}(g),\, c \in \mathrm{Im}(w),
\]
where $\cA_{g,w}(a,c)=\big\{g(X)=a,\,w(X)=c\big\}$. $\epsilon$-KL PP as defined above sits between the structured PP from \cref{def:specialized_PP} and $\epsilon$-MI PP from \cref{def:MI_PP_def} in terms of strength. This is evident from the proof of the first implication in \cref{thm:equivalence}, which effectively shows that $\epsilon$-PP $\implies \epsilon''$-KL PP $\implies \epsilon''$-MI PP, for $\epsilon''=\epsilon\wedge \frac 12 \epsilon^2$ (see \eqref{EQ:KL_MI} in Section ~\ref{proof:equivalence}). The above observation also applies for an extension of $\alpha$-R\'enyi DP \cite{mironov2017renyi} to the structured PP setting. 
\end{remark}
 
 }
\section{Properties of $\epsilon$-MI PP } \label{Sec: properties}

We now explore the properties of $\epsilon$-MI PP, encompassing convexity, post-processing, and composability (adaptive and non-adaptive). Modern guidelines for privacy frameworks \cite{KL12} pose properties such as convexity and post-processing (also known as transformation invariance) as base requirements. Composability is another important property that implies that the joint distribution of the outputs of (possibly adaptively chosen) privacy mechanisms is in itself private. These properties are shown to hold for the general $\epsilon$-PP framework in \cite{KM14}. The next theorem shows $\epsilon$-MI PP satisfies them as well.

\begin{theorem}[Properties of $\epsilon$-MI PP mechanisms] \label{thm:MIPP_properties}
The following properties hold: 
\begin{enumerate}[wide, labelindent=10pt]
     \item \underline{Convexity}:  Let $\epsilon>0$, and $M_1,\ldots,M_k$ be $\epsilon$-MI PP mechanisms. Take $I$ as a $k$-ary categorical random variable with parameters $(p_1,\ldots,p_k)$. Then the mechanism $M:=M_I$ (i.e., $M=M_i$ with probability $p_i$, for $i=1,\ldots,k$) also satisfies $\epsilon$-MI PP.
     \item \underline{Post-processing}: If mechanism $M:\cX^{n\times k}\to\cY$ satisfies $\epsilon$-MI PP, then for any randomized function $A:\cY\to\cZ$, the processed mechanism $A \circ M$ also satisfies $\epsilon$-MI PP.
     \item \underline{Adaptive composability}: 
    Let $M_1,\ldots,M_k$ be sequentially and adaptively chosen $\epsilon_1\ldots,\epsilon_k$-MI PP mechanisms, i.e., 
    \[ \sup_{\substack{P_X \in \Theta,\\g\in\cG,w\in\cW:\\g \sim w}} \mathsf{I}\big(g(X);M_i(X)\big|w(X),M_1(X),...,M_{i-1}(X)\big) \leq \epsilon_i,\quad \forall i=1,\ldots,k.\] 
    Then the composition $M^k=(M_1,\ldots,M_k)$ satisfies $\big(\sum_{i=1}^k \epsilon_i\big)$-MI PP.
\end{enumerate}
\end{theorem}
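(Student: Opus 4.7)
The plan is to derive each property from a standard identity or inequality for conditional mutual information, applied uniformly over the supremum that defines $\epsilon$-MI PP; in each case I would fix an arbitrary $P_X \in \Theta$ and an edge $g \sim w$, bound $\sI\big(g(X); M(X) \big| w(X)\big)$ by the target quantity, and then take the supremum. For \textbf{convexity}, I encode the mixture by introducing an auxiliary selector $I$ independent of $X$ with distribution $(p_1,\ldots,p_k)$, so that $M = M_I$. Since $I \perp (g(X), w(X))$, we have $\sI\big(g(X); I \big| w(X)\big) = 0$, and hence by the chain rule
\[
\sI\big(g(X); M \big| w(X)\big) \leq \sI\big(g(X); M, I \big| w(X)\big) = \sI\big(g(X); I \big| w(X)\big) + \sI\big(g(X); M \big| w(X), I\big) = \sum_{i=1}^k p_i\, \sI\big(g(X); M_i \big| w(X)\big),
\]
where the last step uses that $M = M_i$ on the event $\{I = i\}$ and $I \perp X$. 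Each summand is at most $\epsilon$ by hypothesis, yielding the claim after taking the supremum.

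For \textbf{post-processing}, the internal randomness of $A$ is independent of $X$, so $A(M(X))$ is conditionally independent of $(g(X), w(X))$ given $M(X)$. This produces the Markov chain $g(X) \to M(X) \to A(M(X))$ conditional on $w(X)$, and the conditional data processing inequality then gives $\sI\big(g(X); A(M(X)) \big| w(X)\big) \leq \sI\big(g(X); M(X) \big| w(X)\big) \leq \epsilon$ uniformly in $P_X \in \Theta$ and $g \sim w$. For \textbf{adaptive composability}, the chain rule for conditional mutual information gives
\[
\sI\bigl(g(X); M^k \,\big|\, w(X)\bigr) = \sum_{i=1}^k \sI\bigl(g(X); M_i \,\big|\, w(X), M_1, \ldots, M_{i-1}\bigr),
\]
and each summand is bounded by $\epsilon_i$ by the adaptive $\epsilon_i$-MI PP hypothesis on $M_i$, which already holds uniformly over $P_X$ and $g \sim w$. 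Summing and passing to the supremum concludes the proof.

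The individual steps are short; the main bookkeeping obstacle I anticipate is cleanly verifying that the auxiliary randomness (the selector $I$ in convexity and the internal randomness of $A$ in post-processing) is truly independent of the database $X$, so that the chain-rule identities decompose as claimed and the Markov relation used for post-processing is valid in the conditional form. Once this independence is secured, the entire argument reduces to the chain rule for mutual information together with the conditional data processing inequality.
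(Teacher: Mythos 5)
Your proposal is correct and follows essentially the same route as the paper: the selector-variable chain-rule argument for convexity, the conditional data processing inequality for post-processing, and the chain rule for conditional mutual information for adaptive composability. The independence bookkeeping you flag is exactly the justification the paper invokes (independence of $I$ from $\big(X,M_1(X),\ldots,M_k(X)\big)$), so there is nothing missing.
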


Theorem~\ref{thm:MIPP_properties} is proven in Section~\ref{proof:MIPP_properties} using basic properties of mutual information, such as the chain rule, the data processing inequality, and its nullification under independence. The simplicity of the argument highlights the virtue of the information-theoretic formulation of the PP framework.

\medskip

We move to discuss non-adaptive composition. In this case, the mechanisms $M_1,\ldots,M_k$ from property~(3) of Theorem \ref{thm:MIPP_properties} are chosen conditionally independent given the database. This instance is of practical importance since it includes noise injection mechanisms (e.g., Gaussian and Laplace), that are the focus on the next section. The following proposition is proven in Section~\ref{Proof:thm:general_nonAdaptivityMIPP}.

\begin{proposition}[Non-adaptive composability]\label{thm:general_nonAdaptivityMIPP}
Let $M_1,\ldots,M_k$ be MI PP mechanisms with the parameters $\epsilon_1,\ldots,\epsilon_k$, respectively, which are chosen non-adaptively, i.e., $P_{M_1,\ldots,M_K|X}=\prod_{i=1}^kP_{M_i|X}$. Then the composition $M^k$ is $\big(\sum_{i=1}^k \epsilon_i+ \eta\big)$-MI PP, where 
\[\eta = \sup_{\substack{P_X \in \Theta,\\g\in\cG,w\in\cW:\\g \sim w}}\sum_{i=2}^k \sI\big(M_i(X);M^{i-1}(X)|w(X),g(X)\big),\]
and $M^{i-1}(X)=\big(M_1(X),\ldots,M_{i-1}(X)\big)$.
\end{proposition}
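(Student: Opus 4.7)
The plan is to exploit the chain rule for conditional mutual information twice, in complementary ways, and use non-negativity of (conditional) MI to discard an inconvenient term. Fix an arbitrary $P_X \in \Theta$ and $g \sim w$. The starting point is the standard chain rule expansion
\[
\sI\bigl(g(X); M^k(X) \,\big|\, w(X)\bigr) = \sum_{i=1}^k \sI\bigl(g(X); M_i(X) \,\big|\, w(X), M^{i-1}(X)\bigr),
\]
with the convention $M^0(X) = \emptyset$, so that the $i=1$ summand is just $\sI(g(X); M_1(X)|w(X)) \leq \epsilon_1$ by the individual $\epsilon_1$-MI PP guarantee of $M_1$.

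For each $i \geq 2$, I would next apply the chain rule to $\sI\bigl((g(X), M^{i-1}(X)); M_i(X) \,\big|\, w(X)\bigr)$ in the two natural orderings and equate them:
\[
\sI\bigl(g(X); M_i(X)|w(X),M^{i-1}(X)\bigr) + \sI\bigl(M^{i-1}(X); M_i(X)|w(X)\bigr) = \sI\bigl(g(X); M_i(X)|w(X)\bigr) + \sI\bigl(M^{i-1}(X); M_i(X)|w(X),g(X)\bigr).
\]
Solving for the leftmost term and discarding the non-negative quantity $\sI(M^{i-1}(X); M_i(X)|w(X))$ yields
\[
\sI\bigl(g(X); M_i(X)|w(X),M^{i-1}(X)\bigr) \leq \sI\bigl(g(X); M_i(X)|w(X)\bigr) + \sI\bigl(M^{i-1}(X); M_i(X)|w(X),g(X)\bigr) \leq \epsilon_i + \sI\bigl(M^{i-1}(X); M_i(X)|w(X),g(X)\bigr),
\]
where the last step invokes the individual $\epsilon_i$-MI PP bound on $\sI(g(X); M_i(X)|w(X))$.

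Substituting these per-stage bounds into the chain rule expansion, summing, and finally taking the supremum over $(P_X, g, w)$ with $g \sim w$ gives the advertised $\bigl(\sum_{i=1}^k \epsilon_i + \eta\bigr)$-MI PP bound. The non-adaptivity assumption $P_{M_1,\ldots,M_k|X} = \prod_i P_{M_i|X}$ is what legitimizes invoking the standalone $\epsilon_i$-MI PP property for each $M_i$ against the same data distribution $P_X \in \Theta$ and function pair $g \sim w$; without it one would need to condition on the past outputs inside each individual guarantee, landing in the adaptive setting of Theorem~\ref{thm:MIPP_properties}(3). I do not anticipate any real obstacle here—the argument is pure information-theoretic bookkeeping enabled by the chain rule and the non-negativity of conditional MI—and the dropped term $-\sI(M^{i-1}(X); M_i(X)|w(X))$ could be retained to slightly tighten $\eta$ if desired.
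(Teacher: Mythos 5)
Your proof is correct and essentially the same as the paper's: the paper also expands $\sI(g(X);M^k(X)|w(X))$ by the chain rule (formalized via induction) and bounds each summand by $\epsilon_i+\sI(M^{i-1}(X);M_i(X)|w(X),g(X))$, using the entropy identity $\sh(M_i|w,M^{i-1})\leq \sh(M_i|w)$ plus an add-and-subtract of $\sh(M_i|w,g)$, which is algebraically identical to your two-ordering chain-rule step with the nonnegative term $\sI(M^{i-1}(X);M_i(X)|w(X))$ discarded. Your observation that retaining that dropped term would tighten $\eta$ is also consistent with the paper's derivation.
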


Proof of Proposition~\ref{thm:general_nonAdaptivityMIPP} (in Section~\ref{Proof:thm:general_nonAdaptivityMIPP}) follows from the repetitive application of chain rule for mutual information and by the fact, entropy being reduced by conditioning.

\begin{remark}[Bounds on $\eta$]
If $|\supp\big(M_i(X))| < \infty$ for each $i=2,\ldots,k$, then $\eta \leq \sum_{i=2}^k\log \left| \supp\big(M_i(X)\big)\right|$. Thus, if the cardinality of the output of the mechanism is small, then so is $\eta$.
Alternatively, if each mechanism conditioned on the database $X$ is log-concave (this is satisfied by the Laplace and Gaussian noise injection mechanisms introduced in Section~\ref{Sec: Mechanisms}) and its output is one-dimensional, then 
\[ \eta \leq \sup_{\substack{P_X \in \Theta,\\g\in\cG,w\in\cW:\\g \sim w}} \frac{1}{2}\sum_{i=2}^k  \EE \left[\log \left( \frac{ \pi e \mathrm{Var}\big(M_i(X)\big|g(X),w(X)\big) }{4 \mathrm{Var}\big(M_i(X)\big|X\big)}\right) \right].\]
This follows from the Gaussian distribution achieving maximum entropy under a variance constraint, and the lower bound for the entropy of log-concave distributions \cite{marsiglietti2018lower}. 
\end{remark}

\begin{remark}[Composition when secret pairs are databases]
It was shown in \cite{KM14} that standard $\epsilon$-PP mechanisms compose in PP frameworks in which secret pairs $(\cR,\cT)\in\cQ$ correspond to pairs of databases (i.e., when $\cS$ contains only singletons; see Definition \ref{def:pp_def}). This also holds for $\epsilon$-MI PP mechanisms by observing that in this case we have $\eta=0$ in Proposition~\ref{thm:general_nonAdaptivityMIPP}. Indeed, as $\big(g(X),w(X)\big)$ specify a database, the conditional independence of the mechanism given $X$ nullifies the mutual information. 
\end{remark}

The general non-adaptive setting, without assuming that secrets specify databases, was studied in \cite{KM14}, where it was shown that composability does not hold in general. \cite{KM14} then identified a (rather restrictive) sufficient condition on the class of distributions $\Theta$, termed \textit{universally composable} (UC) distributions, under which non-adaptive composability holds for $\epsilon$-PP. The class of UC distributions is defined next.

 \begin{definition}[UC distributions]\label{def:UC-distributions}
The class $\Theta_{\mathsf{UC}}$ of UC distributions contains all $P_X\in\cP(\cX^{n \times k})$, such that for all $\cG\ni g\sim w\in\cW $ and $(a,c)\in \mathrm{Im}(g)\times\mathrm{Im}(w)$ with $P_X\big(A_{g,w}(a,c)\big)>0$, we have $P_{X|A_{g,w}(a,c)}=\delta_x$, for some $x\in\cX^{n \times k}$, where $\delta_x$ is the Dirac measure at $x$.
\end{definition}
In words, UC distributions are ones under which the database is specified by non-null secret events.

\medskip

$\epsilon$-MI PP also composes under the UC condition, but turns out to be more stable than the standard PP framework w.r.t. addition on non-UC distributions to $\Theta$. The next corollary, which follows directly from Proposition~\ref{prop:Universal composability of MI PP}, quantifies this fact.

\begin{corollary}[Universal composability]\label{prop:Universal composability of MI PP}
Let $M_1,\ldots,M_k$ be mutual information PP mechanisms with the parameters $\epsilon_1,\ldots,\epsilon_k$, respectively, which are chosen non-adaptively. Then the composition $M^k$ is $\big(\sum_{i=1}^k \epsilon_i\big)$-MI PP, provided either of the following conditions holds:
\begin{enumerate}
    \item [(i)] $\Theta \subseteq \Theta_{\mathsf{UC}}$; or 
    \item[(ii)] $\Theta_{\mathsf{UC}}\subseteq\Theta$ and $M_1,\ldots,M_k$ satisfy standard PP with the same $\epsilon_1,\ldots,\epsilon_k$ parameters. 
\end{enumerate}
\end{corollary}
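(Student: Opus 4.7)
The plan is to apply Proposition~\ref{thm:general_nonAdaptivityMIPP}, which already yields $\big(\sum_i \epsilon_i + \eta\big)$-MI PP for any non-adaptive composition of MI PP mechanisms, and to show that the correction
\[
\eta \,=\, \sup_{\substack{P_X \in \Theta\\ g\sim w}} \sum_{i=2}^k \sI\big(M_i(X);\, M^{i-1}(X)\,\big|\, g(X), w(X)\big)
\]
vanishes under each of the two conditions, which gives the desired $\big(\sum_i \epsilon_i\big)$-MI PP bound.

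For condition (i), the key observation is that under any $P_X \in \Theta \subseteq \Theta_{\mathsf{UC}}$, conditioning on $\big(g(X), w(X)\big) = (a, c)$ concentrates $X$ at a single point $x_{a,c} \in \cX^{n\times k}$, so the $\sigma$-algebras generated by $\big(g(X), w(X)\big)$ and by $X$ coincide modulo $P_X$-null sets. Non-adaptivity yields $P_{M^k | X} = \prod_i P_{M_i | X}$, which makes $M_1, \ldots, M_k$ mutually conditionally independent given $X$ and therefore also given $\big(g(X), w(X)\big)$. Every summand in $\eta$ is thus zero, so $\eta = 0$ and the claim follows.

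For condition (ii), the UC reduction no longer produces conditional independence for non-UC distributions in $\Theta$, so I would leverage the standard PP assumption instead. Since $\Theta_{\mathsf{UC}} \subseteq \Theta$, every two-point mixture $\tfrac{1}{2}\delta_x + \tfrac{1}{2}\delta_{x'}$ that is UC with respect to all edges of $(\cG, \cW, \cE)$ belongs to $\Theta$, and evaluating standard $\epsilon_i$-PP at such a mixture collapses to the pointwise ratio bound $P_{M_i | X = x}(y)/P_{M_i | X = x'}(y) \leq e^{\epsilon_i}$ for the neighboring pair. Non-adaptivity then promotes this multiplicatively to $P_{M^k | X = x}(y)/P_{M^k | X = x'}(y) \leq e^{\sum_i \epsilon_i}$, and averaging the resulting inequality against the conditional $P_{X | g(X) = a, w(X) = c}$ under any $P_X \in \Theta$ restores standard $\big(\sum_i \epsilon_i\big)$-PP of $M^k$ on the full class $\Theta$. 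Invoking the first implication of Theorem~\ref{thm:equivalence} ($\epsilon$-PP $\Rightarrow$ $\epsilon''$-MI PP with $\epsilon'' \leq \epsilon$) then delivers the MI PP conclusion.

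The main obstacle is condition (ii): one must verify that $\Theta_{\mathsf{UC}}$ contains enough two-point distributions to probe every neighboring pair $(x, x')$ across the entire edge set $\cE$. For DP this is immediate—neighbors differing in a single row remain distinguishable under every $(g_j, w_j)$—but in more intricate structured PP frameworks some two-point mixtures may fail to be universally UC. In that case a backup route is to apply the universal composability theorem of \cite{KM14} directly on $\Theta_{\mathsf{UC}}$ to obtain standard $\big(\sum_i \epsilon_i\big)$-PP on that subclass, convert to MI PP via Theorem~\ref{thm:equivalence}, and transfer the bound to the larger $\Theta$ through a monotonicity or extremality argument on conditional mutual information.
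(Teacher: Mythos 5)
Your treatment of condition (i) coincides with the paper's. The corollary is presented there as a direct consequence of Proposition~\ref{thm:general_nonAdaptivityMIPP}, the point being exactly yours: under $P_X\in\Theta\subseteq\Theta_{\mathsf{UC}}$ the event $\{g(X)=a,\,w(X)=c\}$ pins down the database, so non-adaptivity ($P_{M^k|X}=\prod_i P_{M_i|X}$) renders $M_1,\ldots,M_k$ conditionally independent given $\big(g(X),w(X)\big)$ and every summand of $\eta$ vanishes.

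For condition (ii) the paper supplies no argument (and note that the proposition alone cannot cover it, since $\eta$ need not vanish on the non-UC members of $\Theta$), so your primary route is a genuine contribution; it is essentially sound but needs two repairs. First, ``averaging against the conditional $P_{X|g(X)=a,w(X)=c}$'' is not the right operation: the two sides of the pointwise inequality $P_{M^k|X=x}(\cA)\leq e^{\sum_i\epsilon_i}P_{M^k|X=x'}(\cA)$ live under different conditionals ($a$ versus $b$), so you must integrate against a \emph{coupling} of $P_{X|A_{g,w}(a,c)}$ and $P_{X|A_{g,w}(b,c)}$ (any coupling works, since the pointwise bound holds for all admissible pairs), and only then do the marginals produce $\PP\big(M^k(X)\in\cA\,\big|\,A_{g,w}(a,c)\big)$ and $\PP\big(M^k(X)\in\cA\,\big|\,A_{g,w}(b,c)\big)$. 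A cleaner variant, more in the spirit of the paper's Theorem~\ref{thm:equivalence} proof, bounds $\sI\big(g(X);M^k(X)|w(X)\big)$ directly: joint convexity of $\dkl$ pulls the mixture over $P_{X|A_{g,w}(\cdot,c)}$ outside, tensorization of $\dkl$ over the product $\prod_i P_{M_i|X=x}$ splits it into $k$ terms, and the pointwise $\infty$-R\'enyi bounds give $\sum_i\epsilon_i$ without the detour through the $\epsilon\wedge\tfrac12\epsilon^2$ implication. Second, the obstacle you flag---that $\tfrac12\delta_x+\tfrac12\delta_{x'}$ may fail to be UC when some other edge $g'\sim w'$ does not separate $x$ from $x'$, so the needed pointwise bounds may not be extractable from PP over $\Theta\supseteq\Theta_{\mathsf{UC}}$---is real, but your fallback does not repair it: proving $\big(\sum_i\epsilon_i\big)$-MI PP on $\Theta_{\mathsf{UC}}$ controls the supremum over the \emph{smaller} class, whereas the claim requires the supremum over $\Theta\supseteq\Theta_{\mathsf{UC}}$, and no monotonicity argument transfers an upper bound on a supremum from a subclass to a superclass. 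Any valid transfer must pass through per-database bounds and convexity, i.e., through your primary route.
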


{The proof of \cref{prop:Universal composability of MI PP} is given in \cref{proof: Universal composability of MI PP}. }
Notably, Case (ii) above states that non-adaptive composability of $\epsilon$-PP mechanisms holds in the sense of $\epsilon$-MI PP whenever $\Theta$ contains all UC distributions. This means that $\epsilon$-MI PP non-adaptive composability is stable to addition of non-UC distributions to $\Theta$, so long that all UC distributions are there (e.g., when $\Theta =\cP(\cX^{n \times k})$). Standard $\epsilon$-PP does not share this stability: even if $\Theta$ contains all UC distribution, adding even a single non-UC distribution to this set will compromise the composability of the classic PP framework.

\section{Mechanisms} \label{Sec: Mechanisms}

This section leverages the information-theoretic formulation to devise Laplace and Gaussian noise-injection $\epsilon$-MI PP mechanisms whose noise level is specified in terms of elementary quantities. {As a special case of MI PP, we obtain mechanisms for MI DP---a framework proposed and studied in \cite{CY16}, but mechanisms were not considered in that work. Mechanisms achieving MI DP tailored for specific applications, such as linear regression and coded federated learning, were developed in \cite{SKT18,BayesianMIDPFederatedLearning}. In contrast, this sequel provides mechanisms that are applicable in general, under minimal assumptions on the setting.}

\subsection{Laplace Mechanism}

Given a query function $f: \mathcal{X}^{n\times k} \to \RR^d $ and a database $X\sim P_X\in\Theta$, a noise-injection mechanism for privately publishing $f(X)$ outputs $M(X)=f(X)+Z$, where $Z$ is a noise variable that follows a prescribe distribution with appropriately turned parameters. The following theorem characterizes parameter values for the Laplace mechanism (i.e., when $Z$ follows the Laplace distribution) that guarantee $\epsilon$-MI~PP.

\begin{theorem}[Laplace mechanism] \label{thm:LaplaceMechanismMIPP}
Fix $\epsilon>0$ and a structured PP framework $(\cG,\cW,\cE,\Theta)$. Let $f: \mathcal{X}^{n\times k} \to \RR^d $ be the query for privatization and consider the Laplace mechanism $M_\mathsf{L}(X):=f(X)+Z_\mathsf{L}$, where $Z_\mathsf{L} \sim \mathrm{Lap}(0,b)^{\otimes d}$ is a $d$-dimensional {{product}} Laplace distribution with the scale parameter $b>0$. If
\[b \geq \sup_{P_X \in \Theta,\, w \in \cW^\star} \frac{\sum_{j=1}^d \EE \sqparen{\sqrt{\mathrm{Var}\big(f_j(X)|w(X)\big)}}}{d(e^{\frac{\epsilon}{d}}-1)},\]
where $f_j(X)$ is the $j$th entry of $f(X)=\big(f_1(X),\ldots,f_d(X)\big)$ and $\cW^\star=\{w \in \cW:\,  \exists\, g \in \cG,\,g \sim w \}$, then $M_\mathsf{L}$ is $\epsilon$-MI PP. 
\end{theorem}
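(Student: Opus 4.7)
The plan is to bound the conditional mutual information $\sI(g(X);M_\mathsf{L}(X)|w(X))$ for an arbitrary $g\sim w$ and $P_X\in\Theta$, and show it is at most $\epsilon$ whenever $b$ satisfies the stated lower bound. First, I exploit the structure $M_\mathsf{L}(X)=f(X)+Z_\mathsf{L}$ with $Z_\mathsf{L}\perp X$: this implies $g(X)-f(X)-M_\mathsf{L}(X)$ is a Markov chain conditioned on $w(X)$, since conditioned on $f(X)$ the mechanism output is just $f(X)+Z_\mathsf{L}$ and hence independent of everything else derived from $X$. The conditional data processing inequality then yields
\[
\sI\big(g(X);M_\mathsf{L}(X)\mid w(X)\big)\,\leq\,\sI\big(f(X);M_\mathsf{L}(X)\mid w(X)\big).
\]

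Next, I express the right-hand side as a difference of differential entropies,
\[
\sI\big(f(X);M_\mathsf{L}(X)\mid w(X)\big)=\sh\big(M_\mathsf{L}(X)\mid w(X)\big)-\sh(Z_\mathsf{L}),
\]
using $\sh(M_\mathsf{L}(X)\mid f(X),w(X))=\sh(Z_\mathsf{L})=d\ln(2eb)$. Then I apply subadditivity of entropy coordinatewise, $\sh(M_\mathsf{L}(X)\mid w(X))\leq\sum_{j=1}^d\sh\big(M_{\mathsf{L},j}(X)\mid w(X)\big)$, which reduces the problem to bounding the conditional entropy of each scalar noisy query. For each coordinate and each realization $w(X)=c$, I invoke the maximum-entropy characterization of the Laplace law under a first-absolute-moment constraint: if $Y$ has $\EE[|Y-\mu|]\leq m$ for some $\mu\in\RR$, then $\sh(Y)\leq\ln(2em)$. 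Choosing $\mu=\EE[f_j(X)\mid w(X)=c]$ and using the triangle inequality together with Cauchy--Schwarz ($\EE[|f_j(X)-\mu|\mid w(X)=c]\leq\sqrt{\mathrm{Var}(f_j(X)\mid w(X)=c)}$) and $\EE[|Z_{\mathsf{L},j}|]=b$, I obtain
\[
\sh\big(M_{\mathsf{L},j}(X)\mid w(X)=c\big)\,\leq\,\ln\Bigl(2e\bigl(\sqrt{\mathrm{Var}(f_j(X)\mid w(X)=c)}+b\bigr)\Bigr).
\]

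Integrating over $w(X)$ and summing over $j$, subtracting $d\ln(2eb)$ gives
\[
\sI\big(f(X);M_\mathsf{L}(X)\mid w(X)\big)\,\leq\,\EE\!\left[\sum_{j=1}^d\ln\!\left(1+\frac{\sqrt{\mathrm{Var}(f_j(X)\mid w(X))}}{b}\right)\right].
\]
To match the sharp $d(e^{\epsilon/d}-1)$ factor, I apply AM--GM to the sum of logarithms, $\sum_j\ln(1+a_j)\leq d\ln\bigl(1+\tfrac{1}{d}\sum_j a_j\bigr)$, followed by Jensen's inequality (concavity of $\ln$) to move the expectation inside, yielding
\[
\sI\big(g(X);M_\mathsf{L}(X)\mid w(X)\big)\,\leq\, d\,\ln\!\left(1+\frac{\sum_{j=1}^d\EE\bigl[\sqrt{\mathrm{Var}(f_j(X)\mid w(X))}\bigr]}{d\,b}\right).
\]
Requiring the right-hand side to be at most $\epsilon$ is equivalent to the stated bound on $b$, and taking the supremum over $P_X\in\Theta$ and $w\in\cW^\star$ (with the observation that only $w$'s incident to some $g\in\cG$ matter) closes the argument.

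The main obstacle I anticipate is obtaining the precise denominator $d(e^{\epsilon/d}-1)$: a naive bound $\ln(1+x)\leq x$ would produce the weaker denominator $\epsilon$ and lose the dimensional dependence that captures the benefit of aggregating coordinates. The AM--GM step followed by Jensen is therefore essential, and I want to be careful that the order of AM--GM (inside the expectation) and Jensen (pulling the expectation into the log) is correct since both inequalities point the same way when composed properly. A minor technical point is verifying the Markov property for the data-processing step when $g(X)$ and $f(X)$ are arbitrary (possibly vector-valued) measurable functions, which follows cleanly from independence of $Z_\mathsf{L}$ and $X$.
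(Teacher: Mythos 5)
Your proposal is correct and follows essentially the same route as the paper's proof: bound $\sI\big(g(X);M_\mathsf{L}(X)|w(X)\big)$ by $\sh\big(M_\mathsf{L}(X)|w(X)\big)-\sh(Z_\mathsf{L})$, apply coordinatewise subadditivity, the maximum-entropy property of the Laplace law under an absolute-deviation constraint, Cauchy--Schwarz to pass to the conditional standard deviation, and two applications of Jensen (over coordinates and over $w(X)$) before solving for $b$. The only cosmetic difference is that you invoke the conditional data processing inequality to replace $g(X)$ by $f(X)$, whereas the paper directly lower-bounds $\sh\big(f(X)+Z_\mathsf{L}|g(X),w(X)\big)\geq\sh(Z_\mathsf{L})$ — the two steps are equivalent here.
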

The derivation of Theorem~\ref{thm:LaplaceMechanismMIPP} is presented in Section \ref{Proof: Laplace mechanism MIPP} and relies on the fact that the Laplace distribution maximizes differential entropy subject to an expected absolute deviation constraint. 

\begin{remark}[Comparison with Wasserstein mechanism]
Compared to computing $\infty$-Wasserstein distances for each secret pair for each distribution, variance is an elementary quantity that can be computed with relative ease. There are also scenarios where the noise level induced by Wasserstein mechanism is infinite and thus infeasible, while our Laplace mechanism derives a feasible, finite noise level. For instance, consider the setup of AP~\cite{ZOC20}: if $\Theta$ contains a distribution with respect to which $f(X)$ and $g(X)$ are jointly Gaussian for some $g\in \cG$, variance is finite while $\infty$-Wasserstein distance may diverge \footnote{Let the said joint distribution $P_{f(X),g(X)}$ be $\cN( \mu, \Sigma)$ with $\mu=(\mu_f, \mu_g)^\mathsf{T}$ and $\Sigma=[ (\sigma^2_f, \rho \sigma_f \sigma_g);(\rho \sigma_f \sigma_g, \sigma^2_g)]$, Then, $W_2^2\big(P_{f(X)|g(X)=a},P_{f(X)|g(X)=b}) \big) =|a-b|^2 \sigma^2_f \rho^2 / \sigma^2_g-(1-\rho^2)\sigma^2_f. $ When supremized over $(a,b) \in \mathrm{Im}(g)$, 2-Wasserstein distance diverges. Due to the monotonicity of Wasserstein distance, indeed $\infty$-Wasserstein distance explodes.}.

\end{remark}

The next corollary specializes Theorem~\ref{thm:LaplaceMechanismMIPP} to $\epsilon$-MI DP (see Part 1 of Remark~\ref{rem:pp_reductions}) by controlling the conditional variance in terms of $\ell^1$-sensitivity. The proof is deferred to Section~\ref{Sec: Proof of Cor LaplaceMIDP}.

\begin{corollary}[Laplace mechanism for $\epsilon$-MI DP]\label{cor:LaplaceMIDP}
Under the setup of Theorem~\ref{thm:LaplaceMechanismMIPP}, Laplace mechanism with
\[b \geq 
\sup_{\substack{P_X \in \cP(\cX^{n \times k}),\\i=1,\ldots,n}}\frac{\sum_{\ell=1}^d \EE \sqparen{\sqrt{\mathrm{Var}\big(f_\ell(X)|\big(X(j,\cdot)\big)_{j\neq i}\big)}}}{d(e^{\frac{\epsilon}{d}}-1)},\] is $\epsilon$-MI DP. Furthermore, the the statement remains true if the right-hand-side (RHS) above is replaced with $\frac{\Delta_1(f)}{\sqrt{2}d(e^{\frac{\epsilon}{d}}-1)}$, where  $\Delta_1(f):=\max_{x\sim x'} \| f(x) -f(x')\|_1 $.

\end{corollary}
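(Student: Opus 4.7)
The plan is to derive both claims from Theorem~\ref{thm:LaplaceMechanismMIPP} by specializing the structured PP framework to the DP instantiation of Remark~\ref{rem:pp_reductions}. Concretely, I would take $g_i(x)=x(i,\cdot)$, $w_i(x)=(x(j,\cdot))_{j\neq i}$ for $i=1,\ldots,n$, edge set $\cE=\{\{g_i,w_i\}\}_{i=1}^n$, and $\Theta=\cP(\cX^{n\times k})$. With these choices, $\cW^\star=\{w_i\}_{i=1}^n$, the condition $g\sim w$ reduces to pairing each $g_i$ with $w_i$, and the supremum in Theorem~\ref{thm:LaplaceMechanismMIPP} collapses to a supremum over $P_X\in\cP(\cX^{n\times k})$ and $i\in\{1,\ldots,n\}$. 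This already matches the first (variance-based) sufficient condition in the corollary, so the first claim follows by a direct invocation of Theorem~\ref{thm:LaplaceMechanismMIPP}.

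For the second (sensitivity-based) bound, I would uniformly control the sum of conditional standard deviations appearing in the first bound by $\Delta_1(f)$. The structural observation is that conditioning on $w_i(X)=c$ freezes all rows of $X$ other than the $i$th row, so if $X'$ is an independent copy satisfying $w_i(X')=c$, then $X$ and $X'$ agree on every row except the $i$th; in particular $X\sim X'$ almost surely and hence $\|f(X)-f(X')\|_1\leq \Delta_1(f)$ almost surely. Using the iid-coupling identity
\[
\mathrm{Var}\big(f_\ell(X)\big|w_i(X)=c\big)=\frac{1}{2}\,\EE\big[(f_\ell(X)-f_\ell(X'))^2\,\big|\,w_i(X)=w_i(X')=c\big],
\]
summing over $\ell\in\{1,\ldots,d\}$, and applying the vector-level inequality $\sum_\ell a_\ell^2\leq (\sum_\ell|a_\ell|)^2$ coordinate-wise to $a_\ell=f_\ell(X)-f_\ell(X')$ yields $\sum_{\ell=1}^d \mathrm{Var}(f_\ell(X)|w_i(X)=c)\leq \Delta_1(f)^2/2$ uniformly in $c$ and in $P_X$. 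A Cauchy--Schwarz step transfers this control to the sum of standard deviations, and plugging the resulting uniform bound into the variance-based sufficient condition of the corollary produces the claimed sensitivity-based noise level.

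The main obstacle is conceptual rather than computational: one must pick a coupling argument that converts an $\ell^1$-sensitivity constraint (which is itself a coordinate-summed quantity) into a bound on the sum of per-coordinate standard deviations, while keeping the $d$-dependence sharp. A naive per-coordinate Popoviciu-type estimate $\mathrm{Var}(f_\ell(X)|w_i(X))\leq \Delta_1(f)^2/4$ applied $d$ times is loose, because it discards the fact that $\Delta_1(f)$ already bounds a coordinate sum; the right move is to work at the vector level via $\|f(X)-f(X')\|_2^2\leq \|f(X)-f(X')\|_1^2$ before Cauchy--Schwarz, which is what captures the correct scaling with $d$. Everything else is routine bookkeeping on top of Theorem~\ref{thm:LaplaceMechanismMIPP}.
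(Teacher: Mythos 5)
Your proposal is correct and takes essentially the same route as the paper: the paper likewise invokes the i.i.d.-copy identity $\mathrm{Var}(Z)=\tfrac12\EE[(Z-Z')^2]$, notes that conditioning on $\big(X(j,\cdot)\big)_{j\neq i}$ makes the two copies neighboring databases, and bounds $\sum_\ell\big(f_\ell(X)-f_\ell(X')\big)^2\le \Delta_2^2(f)\le\Delta_1^2(f)$ (your $\|\cdot\|_2^2\le\|\cdot\|_1^2$ step is the same inequality) before transferring to the sum of standard deviations and feeding the result back into the Laplace entropy bound. One caveat that applies equally to your argument and to the paper's: the Cauchy--Schwarz transfer only yields $\sum_\ell\sqrt{v_\ell}\le\sqrt{d}\,\Delta_1(f)/\sqrt{2}$, which justifies $b\ge \Delta_1(f)/\big(\sqrt{2d}\,(e^{\epsilon/d}-1)\big)$ rather than the extra factor of $d$ in the denominator as stated in the corollary, so the stated constant appears to carry a $\sqrt{d}$ slip that neither proof closes.
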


\begin{remark}[Classic Laplace mechanisms for DP]
Classical Laplace mechanisms achieve $\epsilon$-DP~when $b~\geq~\Delta_1(f) / \epsilon$~\cite{DR14}, and $(\epsilon,\delta)$-DP when $b \geq~ \Delta_1(f)/\big(\epsilon -~ \log(1-\delta)\big)$~\cite{HLM15}.  Evidently, for small $\epsilon$ (termed the `high privacy regime'), both the classic and the $\epsilon$-MI DP mechanism from Corollary~\ref{cor:LaplaceMIDP} induce noise of order $O(1/\epsilon)$. 

\end{remark}

\begin{remark}[Domain knowledge for DP]\label{rem:LaplaceMIDP_Domain}\
Compared to the classic sensitivity-based Laplace mechanisms for DP, the bound in Corollary \ref{cor:LaplaceMIDP} depends on the variance of $f$ and allows to incorporate domain knowledge. 
Consider the product Gaussian family
    \[\Theta_\mathsf{G}(m,s)=\Big\{\prod_{i=1}^n \theta_i: \, \theta_i=\cN(\mu_i,\sigma_i^2), \ |\mu_i| \leq m, \ \sigma_i^2\leq s \Big\},\] 
and let $f(X)=n^{-1}\sum_{i=1}^nX_i$ be the average of the database entries (the argument holds for any linear query). The noise derived from our mechanism is $\sqrt{s}/\big(n(e^{\epsilon}-1)\big)<\infty$, while $\Delta_1(f)=\infty$ here since $X$ has unbounded support. Thus, the sensitivity-based mechanisms are vacuous for this case, while our bound provides feasible noise levels. In these situations, the classic approach involves truncating the space~\cite{kamath2020private} which is not necessary under our framework, whenever the variance is finite. 
In Section~\ref{sec: Private mean estimation} we also discusses the benefits of domain knowledge for private mean estimation tasks. 
\end{remark}

\subsection{Gaussian Mechanism}

We next characterize parameter values for the Gaussian $\epsilon$-MI~PP mechanism. 

\begin{theorem}[Gaussian mechanism] \label{thm:GaussianMechanismMIPP}
Fix $\epsilon>0$ and a structured PP framework $(\cG,\cW,\cE,\Theta)$. Let $f: \mathcal{X}^{n\times k} \to \RR^d $ and consider the Gaussian mechanism $M_\mathsf{G}(X):=f(X)+Z_\mathsf{G}$, where $Z_\mathsf{G} \sim \cN(0,\sigma^2 \mathrm{I}_d)$ is a $d$-dimensional {isotropic} Gaussian of parameter $\sigma>0$. If
\[\sigma^2 \geq \sup_{P_X \in \Theta,\, w \in \cW^\star} \frac{ \sum_{j=1}^d \EE \sqparen{\mathrm{Var}\big(f_j(X)\big|w(X)\big)}}{d(e^{\frac{2\epsilon}{d}}-1)},\]
with $\cW^\star=\{w \in \cW:\,  \exists\, g \in \cG,\,g \sim w \}$, then $M_\mathsf{G}$ is $\epsilon$-MI PP.
\end{theorem}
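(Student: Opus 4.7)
The plan is to bound $\sI\big(g(X); M_\mathsf{G}(X) \big| w(X)\big)$ uniformly by $\epsilon$ and invert the resulting inequality in $\sigma^2$. The strategy mirrors the Laplace mechanism proof (Theorem~\ref{thm:LaplaceMechanismMIPP}), with the $\ell^1$-based max-entropy step replaced by the Gaussian maximum entropy inequality under a covariance constraint. Write
\[
\sI\big(g(X); M_\mathsf{G}(X) \big| w(X)\big) = \sh\big(M_\mathsf{G}(X)\big|w(X)\big) - \sh\big(M_\mathsf{G}(X)\big|g(X), w(X)\big),
\]
and treat each term separately. For the subtrahend, conditioning cannot increase entropy, and since $Z_\mathsf{G}$ is independent of $X$,
\[
\sh\big(M_\mathsf{G}(X)\big|g(X), w(X)\big) \geq \sh\big(M_\mathsf{G}(X)\big|X\big) = \sh(Z_\mathsf{G}) = \tfrac{d}{2}\log(2\pi e \sigma^2);
\]
notably, this bound depends on neither $g$ nor $w$.

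For the minuend, fix $c \in \mathrm{Im}(w)$ and condition on $w(X)=c$. Letting $\Sigma_c$ denote the conditional covariance of $f(X)$ given $w(X)=c$, the conditional covariance of $M_\mathsf{G}(X)$ equals $\Sigma_c + \sigma^2 I_d$ (using $Z_\mathsf{G} \perp X$). Gaussian maximum entropy under a covariance constraint yields $\sh\big(M_\mathsf{G}(X)\big|w(X)=c\big) \leq \tfrac{1}{2}\log\big((2\pi e)^d \det(\Sigma_c + \sigma^2 I_d)\big)$, and Hadamard's inequality bounds the determinant by $\prod_{j=1}^d \big(\mathrm{Var}(f_j(X)|w(X)=c) + \sigma^2\big)$. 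Averaging over $c$ and applying Jensen's inequality for the concave $\log$ produces
\[
\sh\big(M_\mathsf{G}(X)\big|w(X)\big) \leq \tfrac{1}{2}\sum_{j=1}^d \log\Big(2\pi e \big(\EE[\mathrm{Var}(f_j(X)|w(X))] + \sigma^2\big)\Big).
\]

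Subtracting the two bounds gives $\sI \leq \tfrac{1}{2}\sum_{j=1}^d \log\big(1 + \EE[\mathrm{Var}(f_j(X)|w(X))]/\sigma^2\big)$, and a final use of the concavity of $\log$ (equivalently, AM-GM) upgrades this to $\sI \leq \tfrac{d}{2}\log\big(1 + (d\sigma^2)^{-1}\sum_j \EE[\mathrm{Var}(f_j(X)|w(X))]\big)$. Enforcing this to be at most $\epsilon$ and taking the supremum over $P_X \in \Theta$ and $w \in \cW^\star$ rearranges to the stated threshold on $\sigma^2$. The supremum ranges only over $w\in\cW^\star$ rather than edges in $\cE$ because neither entropy bound retains a dependence on $g$; this $g$-freeness is the structural observation that keeps the argument tight. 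No step presents a substantial obstacle — the only real source of slack is Hadamard's inequality, which is loose when the conditional coordinates of $f(X)$ are correlated and will motivate the subsequent projection-based Gaussian mechanism.
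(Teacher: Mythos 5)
Your proposal is correct and follows essentially the same route as the paper's proof: the same entropy decomposition, the lower bound $\sh(M_\mathsf{G}(X)|g(X),w(X))\geq \sh(Z_\mathsf{G})$, Gaussian maximum entropy under a covariance constraint, Hadamard's inequality, and two applications of concavity of $\log$. The only cosmetic difference is the order in which you apply Jensen over $c$ and the AM--GM step over coordinates, which does not affect the final bound.
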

The derivation of Theorem~\ref{thm:GaussianMechanismMIPP} (Section \ref{proof:GaussianMechanismMIPP}) and uses the fact that the Gaussian distribution maximizes differential entropy subject to a second moment constraint. 

\begin{remark}[Comparison of Laplace and Gaussian mechanisms for $\epsilon$-MI PP]
In the high-privacy regime (i.e., when $\epsilon$ is small), we have $e^{\epsilon}-1 =\Theta(\epsilon)$. 
The expected noise variance of Laplace mechanism is $\EE[ \|Z_{\mathsf{L}}\|^2]=\Theta(d/\epsilon^2)$, while the Gaussian mechanism has noise variance $\EE[ \|Z_{\mathsf{G}}\|^2]=\Theta(d/\epsilon)$. The Gaussian $\epsilon$-MI PP mechanism thus injects noise with a lower variance than the Laplace mechanism in this case. 
{This is illustrated in \cref{fig:sub-first} for the setting where $f$ is the average of each column of a database in the space $\{0,1\}^{n \times d}$ with fixed $n=100$ and varying $d$. Specifically, the figure shows Laplace and Gaussian noise variance needed to achieve $\epsilon$-MI DP for $d=1,2,5,10,30$.}
\end{remark}

\begin{corollary}[Gaussian mechanism for DP]\label{cor:GaussianMechanismMIDP} 
Under the setup of Theorem~\ref{thm:GaussianMechanismMIPP}, Gaussian mechanism with 
\[\sigma^2 \geq 
 \sup_{\substack{P_X \in \cP(\cX^{n \times k}),\\i=1,\ldots,n}}
\frac{ \sum_{\ell=1}^d \EE \sqparen{\mathrm{Var}\big(f_\ell(X)\big| \big(X(j,\cdot)\big)_{j\neq i}\big)}}{d(e^{\frac{2\epsilon}{d}}-1)},\] is $\epsilon$-MI DP. Furthermore, the the statement remains true if the RHS above is replaced with $\frac{\Delta^2_2(f)}{2d(e^{\frac{2\epsilon}{d}}-1)}$, where $\Delta_2(f):=\max_{x\sim x'} \| f(x) -f(x')\|$. Additionally, if $\mathcal{X}$ is compact and $f: \mathcal{X}^{n \times k}\to \RR $ is continuous, then
$\sigma^2 \geq \Delta^2_2(f)/\big(4(e^{{2\epsilon}}-1)\big)$ is sufficient to achieve $\epsilon$-MI DP.

\end{corollary}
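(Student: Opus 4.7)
The plan is to derive all three claims by specializing Theorem~\ref{thm:GaussianMechanismMIPP} to the structured PP framework that realizes DP (as spelled out in Remark~\ref{rem:pp_reductions}) and then progressively upper bounding the conditional variance appearing in the sufficient $\sigma^2$-condition. The first claim is immediate: taking $\cG=\{g_i\}_{i=1}^n$ with $g_i(x)=x(i,\cdot)$, $\cW=\{w_i\}_{i=1}^n$ with $w_i(x)=\big(x(j,\cdot)\big)_{j\neq i}$, $\cE=\big\{\{g_i,w_i\}\big\}_{i=1}^n$, and $\Theta=\cP(\cX^{n\times k})$, one has $\cW^\star=\cW$, so the supremum in Theorem~\ref{thm:GaussianMechanismMIPP} collapses to a supremum over $P_X\in\cP(\cX^{n\times k})$ and $i=1,\ldots,n$, which is exactly the stated bound.

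For the $\ell^2$-sensitivity refinement, I would use the identity $\mathrm{Var}(Y|W)=\tfrac{1}{2}\EE\big[(Y-Y')^2\bigm|W\big]$, where $Y'$ is an i.i.d.\ copy of $Y$ given $W$. Applying this componentwise to $Y=f(X)$ and summing over $\ell=1,\ldots,d$ yields
\[
\sum_{\ell=1}^d \mathrm{Var}\big(f_\ell(X)\bigm|W\big)=\tfrac{1}{2}\EE\big[\|f(X)-f(X')\|^2\bigm|W\big].
\]
Since $W=\big(X(j,\cdot)\big)_{j\neq i}$ pins down every row of the database except the $i$th, the two conditional copies $X$ and $X'$ agree on all rows other than $i$, so $X\sim X'$ and $\|f(X)-f(X')\|\leq \Delta_2(f)$ almost surely. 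Taking expectation and substituting into the bound of the first part gives the claimed sensitivity expression $\Delta_2^2(f)/\big(2d(e^{2\epsilon/d}-1)\big)$.

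For the compact-$\cX$, continuous-$f$, $d=1$ refinement, the conditional variance bound can be sharpened by an additional factor of two via Popoviciu's inequality. Conditional on $W=w$, the random variable $f(X)$ is supported on the set $f\big(\{x\in\cX^{n\times k}:w_i(x)=w\}\big)$; this is a continuous image of a compact set, hence compact, and any two of its elements arise from neighboring databases, so the set has diameter at most $\Delta_2(f)$. Popoviciu's inequality then yields $\mathrm{Var}\big(f(X)\bigm|W=w\big)\leq \Delta_2^2(f)/4$, which, substituted into the first claim with $d=1$ and $e^{2\epsilon/d}-1=e^{2\epsilon}-1$, produces $\sigma^2\geq \Delta_2^2(f)/\big(4(e^{2\epsilon}-1)\big)$.

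There is no real obstacle: once Theorem~\ref{thm:GaussianMechanismMIPP} is in hand, the argument is entirely a pair of elementary variance inequalities. The only point requiring mild care is the last step, where the compactness of $\cX$ and continuity of $f$ are exactly the hypotheses that guarantee $f(X)|W=w$ is supported on a compact interval of length at most $\Delta_2(f)$, so that Popoviciu's inequality applies.
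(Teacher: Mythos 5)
Your proposal is correct and follows essentially the same route as the paper: specialize Theorem~\ref{thm:GaussianMechanismMIPP} to the DP instantiation of the structured framework, bound the summed conditional variance by $\Delta_2^2(f)/2$ via the identity $\mathrm{Var}(Y|W)=\tfrac12\EE[(Y-Y')^2|W]$ together with the observation that the two conditional copies are neighboring databases, and finally sharpen to $\Delta_2^2(f)/4$ in the compact, continuous, scalar case via Popoviciu's inequality applied to the conditional law of $f(X)$ given the fixed remaining rows. No gaps; the argument matches the paper's proof step for step.
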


\begin{remark}[Classic Gaussian mechanisms for DP]

By Theorem \ref{thm:equivalence} with the condition $\Theta= \cP(\cX^{n \times k})$, which holds in the DP setting, we have that $\epsilon$-MI DP implies regular $(\epsilon',\sqrt{2\epsilon})$-DP, for any $\epsilon'\geq 0$ . For comparison, the classical Gaussian mechanism achieves $(\epsilon',\sqrt{2\epsilon})$-DP with $\sigma^2\geq2\log(1.25/ \sqrt{2\epsilon}) \Delta^2_2(f)/ {\epsilon'}^2$ for $\epsilon^\prime \leq 1$~\cite{DR14}. It can be shown that our mechanism requires a lower noise level than the classic one whenever $\epsilon' < 2 \big(d(e^{2\epsilon/d}-1) \log(1.25/\sqrt{2\epsilon})\big)^{1/2}$. {\cref{fig:sub-second} shows the region of $(\epsilon',\epsilon)$ values for which our mechanism injects noise of a lower variance for $d=30$. We also note that this region is monotonically decreasing (in the sense of inclusion) with $d$.}

\begin{figure}[!t]
\begin{subfigure}{.5\textwidth}
  \centering
  \includegraphics[width=.8\linewidth]{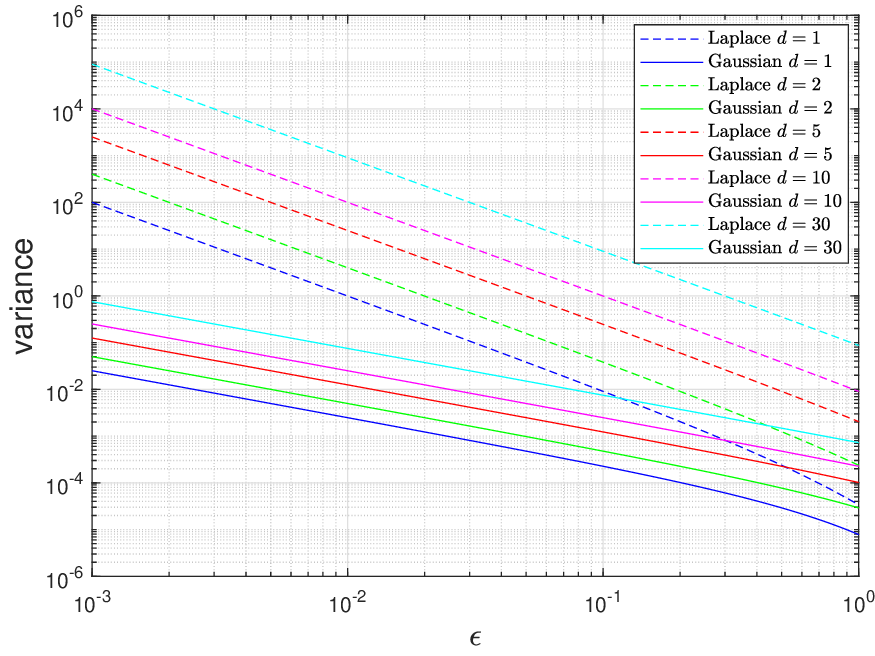}  
  \caption{}
  \label{fig:sub-first}
\end{subfigure}
\hspace{4mm}
\begin{subfigure}{.5\textwidth}
  \centering
  \includegraphics[width=.8\linewidth]{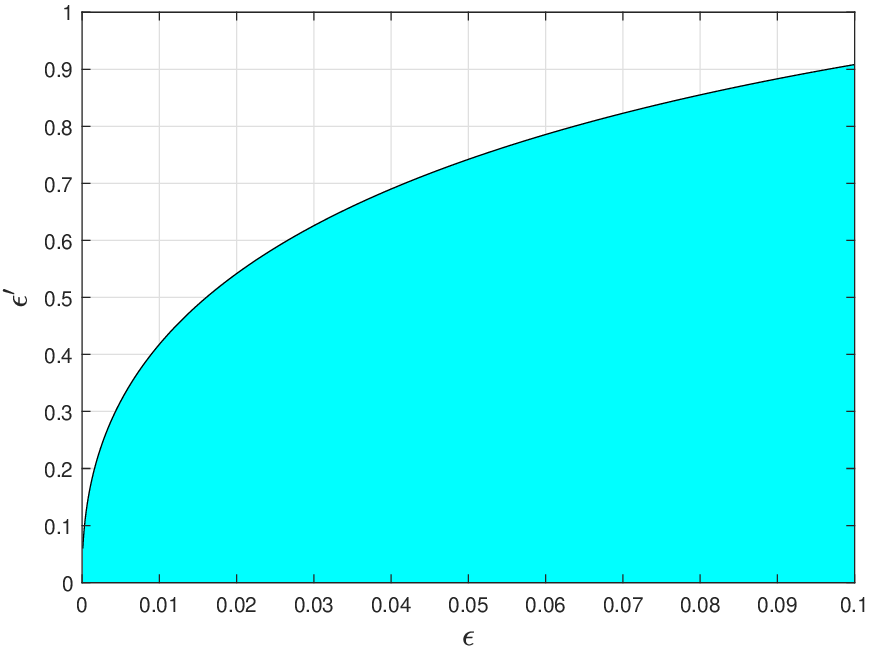}  
  \caption{}
  \label{fig:sub-second}
\end{subfigure}
\vspace{2mm}
\caption{(a) Laplace and Gaussian noise variance injected to achieve $\epsilon$-MI DP for the following setting where $f$ is the average of each column of the database in the space $\{0,1\}^{n \times d}$ with fixed $n=100$ and varying $d=1,2,5,10,30$. (b) The region where the MI DP Gaussian noise injection mechanism adds noise with smaller variance compared to classical mechanism for achieving $(\epsilon',\sqrt{2 \epsilon})$-DP with $d=30$}\label{fig:region Better}
\end{figure}

\end{remark}

The noise levels derived in Theorem~\ref{thm:GaussianMechanismMIPP} scales linearly with the increasing dimension of the query. This may result in large noise values which may compromise utility. 
Projecting the high-dimensional queries to a low-dimensional space may provide a better privacy-utility tradeoff if the dimension of the projection space is chosen appropriately. 

\begin{theorem}[Gaussian mechanism with projections] \label{Thm: Gaussian mechanism with slicing and noise injection}
Fix $\epsilon>0$, a structured PP framework $(\cG,\cW,\cE,\Theta)$. Let $f: \mathcal{X}^{n\times k} \to \RR^d $ be the query function, $\rA\in\RR^{d\times \ell}$  be a projection matrix with $\ell \leq d$, and consider the Gaussian mechanism $M^{\mathsf{proj}}_\mathsf{G}(X):=\rA^\intercal f(X)+Z_\mathsf{G}$, where $Z_\mathsf{G} \sim \cN(0,\sigma^2 \mathrm{I}_\ell)$. Then $M^{\mathsf{proj}}_\mathsf{G}$ is $\epsilon$-MI PP if either of the following conditions hold:
\begin{enumerate}[wide, labelindent=10pt]
    \item $\rA=[\phi_1, \ldots, \phi_\ell]$ is deterministic and
    \[\sigma^2 \geq \sup_{P_X \in \Theta,\, w \in \cW^\star} \frac{ \EE \left[\|\Sigma_{f|w} \|_{\mathrm{op}}\right] \max_{1 \leq j \leq \ell} \|\phi_j\|^2 }{(e^{\frac{2\epsilon}{\ell}}-1)},\]
    where $\|\cdot\|_{\mathrm{op}}$ is the operator norm, $\Sigma_{f|w}$ is the conditional covariance matrix of $f(X)$ given $w(X)$, and $\cW^\star$ is as in Theorem~\ref{thm:LaplaceMechanismMIPP}.
    \item $\rA=[\Phi_1, \dots, \Phi_\ell]$ is random, chosen independently of the database $X$ and the mechanism $M$, with $\EE[\|\Phi_j\|^2] =1$ and $\ \EE[\Phi_j]=\mathbf{0}$  for all $j=1,\ldots,\ell$, and
    \[\sigma^2 \geq  \sup_{P_X \in \Theta,\, w \in \cW^\star}  \frac{ \EE \left[\|\Sigma_{f|w} \|_{\op} + \|\mu_{f|w}\|_2^2 \right]  }{(e^{\frac{2\epsilon}{\ell}}-1)},\]
    where $\mu_{f|w}:=\EE\big[f(X)\big|w(X)\big]$.
\end{enumerate}
\end{theorem}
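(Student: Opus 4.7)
I would follow the same recipe as for Theorem~\ref{thm:GaussianMechanismMIPP}: decompose
\[\sI\big(g(X);M^{\mathsf{proj}}_\mathsf{G}(X)\,\big|\,w(X)\big)=\sh\big(M^{\mathsf{proj}}_\mathsf{G}(X)\,\big|\,w(X)\big)-\sh\big(M^{\mathsf{proj}}_\mathsf{G}(X)\,\big|\,g(X),w(X)\big),\]
upper bound the first term via Gaussian maximum entropy, and lower bound the second by the entropy of the injected noise. Since $Z_\mathsf{G}$ is independent of $(X,\rA)$, adding $f(X)$ and $\rA$ to the conditioning set does not change the conditional entropy of $M^{\mathsf{proj}}_\mathsf{G}(X)-\rA^\intercal f(X)=Z_\mathsf{G}$, yielding $\sh(M^{\mathsf{proj}}_\mathsf{G}\,|\,g,w)\ge \sh(Z_\mathsf{G})=\tfrac{\ell}{2}\log(2\pi e\sigma^2)$ in both cases. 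For the minuend, the Gaussian max-entropy inequality combined with AM--GM on eigenvalues and Jensen's inequality gives
\[\sh\big(M^{\mathsf{proj}}_\mathsf{G}\,\big|\,w(X)\big)\le \tfrac{\ell}{2}\log\Big(\tfrac{2\pi e}{\ell}\,\EE\big[\mathrm{tr}\big(\mathrm{Cov}\big(M^{\mathsf{proj}}_\mathsf{G}\,\big|\,w(X)\big)\big)\big]\Big).\]

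It then suffices to control $\EE\big[\mathrm{tr}(\mathrm{Cov}(\rA^\intercal f(X)\,|\,w(X)))\big]$ by the case-specific quantity. In Case~1 the matrix $\rA$ is deterministic, so $\mathrm{tr}(\mathrm{Cov}(\rA^\intercal f(X)\,|\,w))=\sum_{j=1}^\ell \phi_j^\intercal \Sigma_{f|w}\phi_j\le \ell\,\|\Sigma_{f|w}\|_{\op}\max_j\|\phi_j\|^2$ by the variational characterization of the operator norm. In Case~2 I would exploit the independence $\rA\perp(X,Z_\mathsf{G})$, which persists under conditioning on $w(X)$: the zero-mean assumption $\EE[\Phi_j]=\mathbf{0}$ then forces each coordinate of $\rA^\intercal f(X)$ to be conditionally mean-zero, and its conditional second moment unpacks as
\[\EE\big[\Phi_j^\intercal f(X)f(X)^\intercal \Phi_j\,\big|\,w\big]=\mathrm{tr}\big((\Sigma_{f|w}+\mu_{f|w}\mu_{f|w}^\intercal)\,\EE[\Phi_j\Phi_j^\intercal]\big).\]
The identity $\mathrm{tr}(\EE[\Phi_j\Phi_j^\intercal])=\EE[\|\Phi_j\|^2]=1$, the PSD bound $\mathrm{tr}(AB)\le \|A\|_{\op}\mathrm{tr}(B)$, and the operator-norm triangle inequality $\|\Sigma_{f|w}+\mu_{f|w}\mu_{f|w}^\intercal\|_{\op}\le \|\Sigma_{f|w}\|_{\op}+\|\mu_{f|w}\|^2$ complete the bound.

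Summing over $j=1,\ldots,\ell$ and adding the contribution $\ell\sigma^2$ of the independent noise reduces the problem to checking $\tfrac{\ell}{2}\log(1+C/\sigma^2)\le \epsilon$, where $C$ equals $\EE[\|\Sigma_{f|w}\|_{\op}]\max_j\|\phi_j\|^2$ in Case~1 and $\EE[\|\Sigma_{f|w}\|_{\op}+\|\mu_{f|w}\|^2]$ in Case~2. Solving for $\sigma^2$ and then taking the supremum over $(P_X,w)$ reproduces the advertised thresholds. The step I expect to require the most care is the bookkeeping in Case~2: one must verify that the independence $\rA\perp (f(X),w(X))$ survives conditioning on $w(X)$, so that the cross-term $\EE[\Phi_j]^\intercal\mu_{f|w}$ vanishes and the trace identity applies---otherwise one picks up a stray linear-in-$\mu_{f|w}$ term that would spoil the clean threshold.
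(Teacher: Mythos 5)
Your proposal is correct and follows essentially the same route as the paper: the paper simply invokes Theorem~\ref{thm:GaussianMechanismMIPP} for the $\ell$-dimensional query $\rA^\intercal f(X)$ (which packages the max-entropy/noise-entropy argument you re-derive) and then bounds $\EE\big[\mathrm{Var}\big(\phi_j^\intercal f(X)\,\big|\,w(X)\big)\big]=\EE\big[\phi_j^\intercal\Sigma_{f|w}\phi_j\big]\leq\EE\big[\|\Sigma_{f|w}\|_{\op}\|\phi_j\|^2\big]$ in Case~1 and, in Case~2, uses $\EE[\Phi_j]=\mathbf{0}$ together with $\rA\perp X$ to reduce the conditional variance to $\EE\big[\Phi_j^\intercal(\Sigma_{f|w}+\mu_{f|w}\mu_{f|w}^\intercal)\Phi_j\big]\leq\EE\big[\|\Sigma_{f|w}\|_{\op}+\|\mu_{f|w}\|^2\big]$, exactly as in your trace computation. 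The independence point you flag at the end is indeed the one subtlety, and your resolution (independence of $\rA$ from $X$ persists under conditioning on $w(X)$, so the cross term vanishes) is the same one the paper relies on.
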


\begin{remark}[Gaussian projection matrix]
The random matrix whose entries are sampled independently from the Gaussian distribution with 0 mean and $1/d$ variance satisfies the requirements of Theorem~\ref{Thm: Gaussian mechanism with slicing and noise injection}, Part (2). A similar approach was proposed in \cite{ProjectionGaussiankenthapadi2012privacy} for the task of estimating distances between users without being leaking private information. Their $\epsilon$-DP mechanism first projected the query onto a random lower-dimensional space via Johnson-Lindenstrauss transform and then injected Gaussian noise. Theorem~\ref{Thm: Gaussian mechanism with slicing and noise injection} thus enables using $\epsilon$-MI PP in such settings. 
\end{remark}
\begin{remark}[Projection dimension] 
For a $d$-dimensional query, the Gaussian mechanism without projections (Theorem~\ref{thm:GaussianMechanismMIPP}) adds noise proportional to $d$, which may be prohibitive when $d\gg 1$. Theorem \ref{Thm: Gaussian mechanism with slicing and noise injection} shows that by incorporating a projection matrix, one may inject noise that is proportional to $\ell$, where $\ell \ll d$. In practice, the projection dimension $\ell$ should be tuned to optimize the privacy-utility tradeoff, keeping in mind that larger $\ell$ would typically necessitate larger $\sigma^2$ values to guarantee privacy at a prescribed level~\cite{ProjectionGaussiankenthapadi2012privacy}. 
\end{remark}

\subsection{Mechanisms with Explicit Dependence on Private Functions}

The noise levels derived in Theorem \ref{thm:GaussianMechanismMIPP} and \ref{thm:LaplaceMechanismMIPP} depend on the private function class $\cG$ only through $\cW^\star$. However, it may be desirable to capture the dependence on $\cG$ more explicitly. This is particularly relevant when there is no public information (e.g., the AP framework from \cite{ZOC20}; cf. Remark~\ref{rem:pp_reductions} Part 2) or if there is a single public function $w$ corresponding to all private $g\in\cG$. The following theorem provides noise levels with such explicit dependence.

\begin{theorem}[Gaussian $\mspace{-5.3mu}$mechanism$\mspace{-5.3mu}$ with$\mspace{-5.3mu}$ dependence$\mspace{-5.3mu}$ on~$\mspace{-5.3mu}\cG$] \label{Thm:GaussianMechanismMIPP_with entropy law}
Under the setup from Theorem \ref{thm:GaussianMechanismMIPP} and assuming $\inf_{\substack{g\in\cG,w\in\cW:\\ g\sim w}}\sh\big(f(X)\big|g(X),w(X)\big)>-\infty$, the Gaussian mechanism $M_\mathsf{G}$ achieves $\epsilon$-MI PP, if
\[\sigma^2 \geq \sup_{\substack{P_X \in \Theta,\\g\in\cG,w\in\cW:\\g \sim w}} \frac{ A - d \, e^{2\epsilon/d} B}{d ( e^{2\epsilon / d}-1)}\vee 0\,,\]
with $A=\sum_{j=1}^d \EE\big[\mathrm{Var}\big(f_j(X)\big|w(X)\big)\big]$ and $B= \frac{1}{2\pi}\exp\big(\frac{2}{d} \sh\big(f(X) \big|g(X),w(X)\big)-1\big)$.
\end{theorem}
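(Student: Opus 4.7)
The plan is to bound the conditional mutual information $\sI\big(g(X); M_\mathsf{G}(X)\big|w(X)\big)$ via the entropy-difference identity
\[
\sI\big(g(X); M_\mathsf{G}(X)\big|w(X)\big) = \sh\big(M_\mathsf{G}(X)\big|w(X)\big) - \sh\big(M_\mathsf{G}(X)\big|g(X),w(X)\big),
\]
upper-bounding the first term via Gaussian maximum entropy (which is essentially the argument of Theorem~\ref{thm:GaussianMechanismMIPP}) and lower-bounding the second term via the entropy power inequality (EPI). The EPI step is what introduces the new quantity $B$ and tightens the bound relative to Theorem~\ref{thm:GaussianMechanismMIPP}.

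For the first term, since $Z_\mathsf{G}$ is independent of $X$, each coordinate satisfies $\mathrm{Var}\big(M_{\mathsf{G},j}(X)\big|w(X)\big)=\mathrm{Var}\big(f_j(X)\big|w(X)\big)+\sigma^2$. Using sub-additivity of differential entropy across coordinates, the Gaussian max-entropy property, Jensen's inequality (concavity of $\log$) to pull the expectation over $w(X)$ inside, and finally AM-GM across coordinates, I obtain
\[
\sh\big(M_\mathsf{G}(X)\big|w(X)\big) \leq \frac{d}{2}\log\!\left(2\pi e\!\left(\frac{A}{d}+\sigma^2\right)\right).
\]

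For the second term, I would apply the conditional EPI to the decomposition $M_\mathsf{G}(X) = f(X)+Z_\mathsf{G}$. Since $Z_\mathsf{G}$ is independent of $X$, it is in particular conditionally independent of $f(X)$ given $\big(g(X),w(X)\big)$, and the conditional EPI yields
\[
e^{2\sh(M_\mathsf{G}(X)|g(X),w(X))/d} \geq e^{2\sh(f(X)|g(X),w(X))/d} + e^{2\sh(Z_\mathsf{G})/d} = 2\pi e\,(B+\sigma^2),
\]
where I use $\sh(Z_\mathsf{G}) = \frac{d}{2}\log(2\pi e\sigma^2)$ together with the identity $2\pi e\, B = e^{2\sh(f(X)|g(X),w(X))/d}$ that follows directly from the definition of $B$. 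Taking logarithms gives $\sh\big(M_\mathsf{G}(X)|g(X),w(X)\big) \geq \frac{d}{2}\log\big(2\pi e(B+\sigma^2)\big)$.

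Combining the two displays, the $2\pi e$ factors cancel and I get
\[
\sI\big(g(X); M_\mathsf{G}(X)\big|w(X)\big) \leq \frac{d}{2}\log\!\left(\frac{A/d+\sigma^2}{B+\sigma^2}\right);
\]
imposing that the right-hand side be at most $\epsilon$ and solving for $\sigma^2$ produces the stated threshold, with the $\vee 0$ clause handling the degenerate case in which the numerator $A - d\,e^{2\epsilon/d}B$ is non-positive (so that any $\sigma^2\geq 0$ already suffices). The main technical point to verify is the conditional EPI: it requires conditional independence of $Z_\mathsf{G}$ and $f(X)$ given $(g(X),w(X))$, which is immediate from the global independence of $Z_\mathsf{G}$ from $X$, and existence of the relevant conditional densities, which is assured because $Z_\mathsf{G}$ is a non-degenerate Gaussian. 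The hypothesis $\inf_{g\sim w}\sh\big(f(X)|g(X),w(X)\big)>-\infty$ ensures $B$ is strictly positive and finite, so that the EPI bound is meaningful; absent it, the theorem would still hold but reduce to the looser Gaussian mechanism of Theorem~\ref{thm:GaussianMechanismMIPP} through the $\vee 0$ truncation.
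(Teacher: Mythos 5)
Your proof is correct and follows essentially the same route as the paper's: the upper bound on $\sh\big(M_\mathsf{G}(X)\big|w(X)\big)$ is the paper's bound \eqref{eq:Gaussian_proof} rewritten in terms of $A$, the lower bound on $\sh\big(M_\mathsf{G}(X)\big|g(X),w(X)\big)$ comes from the same entropy power inequality step with $2\pi e\,B = e^{2\sh(f(X)|g(X),w(X))/d}$, and the final algebra matches. Your added remarks on why the conditional EPI applies and on the role of the finiteness hypothesis are correct but not needed beyond what the paper records.
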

In addition to maximum entropy arguments, the proof of Theorem~\ref{Thm:GaussianMechanismMIPP_with entropy law} uses the entropy power inequality. We may replace the conditional entropy in $B$ by any lower bound that may be easier to compute (cf. e.g., \cite{marsiglietti2018lower}), and $\epsilon$-MI PP will still~hold.

\begin{remark}[Free $\epsilon$-MI PP regime] \label{rem:free_MIPP_regime}
The bound in Theorem~\ref{Thm:GaussianMechanismMIPP_with entropy law} suggests that if $A \leq d \, e^{2\epsilon/d} B$ over the entire optimization domain, $\epsilon$-MI PP holds without noise injection (i.e., $\sigma=0$). It can be shown that $A \geq dB$ for any $P_X\in\cP(\cX^{n \times k})$ and functions $f$, $g$, and $w$. The free privacy regime therefore corresponds to cases where $\epsilon$ is large compared to $d/2$. Since large $\epsilon$ values are rarely of interest in practice, we conclude that a positive noise level is generally needed for $\epsilon$-MI PP. For fixed $\epsilon$ and $d$, the above condition is related to how correlated the query and the private functions are, given the public information. For instance, if $d=1$ and $f(X)$, $g(X)$, and $w(X)$ are jointly Gaussian, we have $A \leq d \, e^{2\epsilon/d} B$ whenever the conditional correlation coefficient between $f(X)$ and $g(X)$ given $\{w(X)=c\}$ satisfies
$ \rho\big(f(X),g(X)\big|w(X)=c \big) \leq \sqrt{ (e^{2\epsilon}-1)e^{-2\epsilon}}.$
Accordingly, weak correlation may lead to free privacy since the query leaks little information about the secret to begin with. Proofs related to the above arguments are presented in Section~\ref{sec:Proofs related to Remark free MI PP} 
\end{remark}

A Gaussian mechanism with explicit dependence on the secret functions was proposed for AP in \cite{ZOC20}, under a rather stringent setting. In their AP formulation there are not public functions (i.e., $\cW=\emptyset$) and taking $d=1$, they assume that $f(X)$ conditioned on $g(X)$ is Gaussian with a constant variance, i.e., $\mathrm{Var}\big(f(X)\big|g(X)=a\big)=\mathrm{Var}\big(f(X)\big|g(X)=b\big)$, for all $a,b \in \mathrm{Im}(g)$. Theorem 1 of \cite{ZOC20} then shows that $(\epsilon,\delta)$-AP is achieved by the Gaussian mechanism with
\[\sigma^2\geq \sup_{\substack{P_X \in \Theta,\\g\in\cG}}\left[\left(\frac{C\Delta_\mathsf{AP}(f)}{\epsilon}\right)^2- \mathrm{Var}\big(f(X)\big|g(X)=a\big)\right]\vee 0,\]
where $C= \sqrt{2\log(1.25/\delta)}$ and 
\[\Delta_\mathsf{AP}(f)\mspace{-3mu}=\mspace{-3mu} \max_{a,b \in \mathrm{Im}(g)} \mspace{-3mu}\big| \EE\big[f(X)\big|g(X)\mspace{-3mu}=\mspace{-3mu}a\big] \mspace{-2mu}-\mspace{-2mu}  \EE\big[f(X)\big|g(X)\mspace{-3mu}=\mspace{-3mu}b\big] \big|.
\]

By means of comparison, the following corollary specializes our Theorem \ref{Thm:GaussianMechanismMIPP_with entropy law} to the setting from \cite{ZOC20}.

\begin{corollary}[Gaussian mechanism for AP]\label{COR:Gaussian_AP_mech} Under the above setting, the
Gaussian mechanism with the variance parameter
\[
 \sigma^2 \geq \sup_{\substack{P_X \in \Theta,\\g\in\cG}} \left[\frac{ \mathrm{Var}(f(X)) - \ e^{2\epsilon} \mathrm{Var}\big(f(X)\big|g(X)=a\big)}{ e^{2\epsilon}-1}\right]\vee 0
\]
satisfies $\epsilon$-MI attribute privacy.
\end{corollary}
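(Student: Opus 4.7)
The plan is to obtain the corollary as a direct specialization of Theorem~\ref{Thm:GaussianMechanismMIPP_with entropy law} to the single-query AP setting, in which $d=1$, $\cW=\emptyset$, and the conditional distribution $f(X)|g(X)=a$ is Gaussian with variance independent of $a\in\mathrm{Im}(g)$. With no public functions present, the conditioning on $w(X)$ throughout Theorem~\ref{Thm:GaussianMechanismMIPP_with entropy law} becomes vacuous (equivalently, one takes $w$ to be a deterministic random variable), so the supremum over edges $g\sim w$ collapses to a supremum over $g\in\cG$, which matches the form of the supremum on the right-hand side of the corollary.

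Under these reductions, the quantity $A$ in Theorem~\ref{Thm:GaussianMechanismMIPP_with entropy law} simplifies to $A=\EE[\mathrm{Var}(f(X))]=\mathrm{Var}(f(X))$, while the quantity $B$ is computed using the Gaussian differential entropy formula:
\[
\sh\big(f(X)\big|g(X)=a\big)=\tfrac{1}{2}\log\!\big(2\pi e\,\mathrm{Var}(f(X)|g(X)=a)\big).
\]
Since this is independent of $a$ by hypothesis, averaging over $g(X)$ leaves $\sh(f(X)|g(X))$ equal to the same expression. Substituting into $B=(2\pi)^{-1}\exp\!\big(2\sh(f(X)|g(X))-1\big)$, the factors of $2\pi$ and $e$ cancel, yielding the clean identity $B=\mathrm{Var}(f(X)|g(X)=a)$.

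Plugging these values of $A$ and $B$ into the bound $\sigma^2\geq(A - e^{2\epsilon}B)/(e^{2\epsilon}-1)\vee 0$ from Theorem~\ref{Thm:GaussianMechanismMIPP_with entropy law} (specialized to $d=1$) and taking the supremum over $P_X\in\Theta$ and $g\in\cG$ yields exactly the expression in the corollary statement. The argument is a mechanical specialization; the only point that merits care is the interpretation of the empty $\cW$ as a trivial conditioning—once this is fixed, no real obstacle remains, and the Gaussian-entropy calculation is what makes the expression collapse to a variance-based bound that no longer involves the entropy term.
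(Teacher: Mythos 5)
Your proof is correct and is exactly the route the paper intends: Corollary~\ref{COR:Gaussian_AP_mech} is a direct specialization of Theorem~\ref{Thm:GaussianMechanismMIPP_with entropy law} with $d=1$ and $\cW=\emptyset$, where $A=\mathrm{Var}(f(X))$ and the Gaussian conditional with constant variance makes $B=\frac{1}{2\pi}\exp\big(2\sh(f(X)|g(X))-1\big)=\mathrm{Var}(f(X)|g(X)=a)$. Your handling of the empty public-function set as trivial conditioning matches the paper's intended reading, and the algebra checks out.
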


Note that both the mechanisms enter the free privacy regime when $f(X)$ is independent of $g(X)$ (Remark~\ref{rem:free_MIPP_regime} above argues that this holds for our mechanism even when there is a weak dependence between $g(X)$ and $f(X)$). Under a multivariate extension of the product Gaussian family from Remark \ref{rem:LaplaceMIDP_Domain}, i.e., 
\[
   \Theta_\mathsf{G}^k(m,s)=\Big\{\cN(\mu,\Sigma)^{\otimes n}: \ \mu=(\mu_1\ldots\mu_k)^\intercal, |\mu_j| \leq m,  \ \Sigma(i,j) \leq s,\ \forall\, i=1,\ldots,n,\, j=1,\ldots,k \Big \} , 
\]
and for $f$ and $g$ linear functions of the database columns (say, average of the column entries), $\Delta_\mathrm{AP}(f)$ is proportional to $\max_{a,b \in \mathrm{Im}(g)} |a-b|$ and thus diverges to infinity. The variance-based bound from Corollary~\ref{COR:Gaussian_AP_mech}, on the other hand, is finite and feasible.

\section{Auditing for Privacy}\label{Sec: Auditing}

{Privacy auditing aims to detect violations in privacy guarantees, reject incorrect algorithms, and provide counterexamples. This concept has been gaining recent attention for the special case of DP auditing. In \cite{jagielski2020auditing}, an heuristic approach based on poisoning attacks was proposed for auditing privacy violations of DP based stochastic gradient descent. 
The idea of formulating DP auditing as a hypothesis test was originally explored in \cite{ding2018detecting} for univariate queries. An extension to the multivariate query setting was proposed in \cite{domingo2022auditing} by relaxing DP to a privacy notion based on kernel R\'enyi divergence, and using an empirical version of the latter as a test statistic. However, all these approaches are tailored for classical DP and are not applicable beyond that setting, e.g., to PP auditing. 

We propose a hypothesis testing pipeline to audit for $\epsilon$-MI DP which readily extends to the PP setting (see \cref{rem:audit_PP} ahead). From relative strength considerations, our approach can, in turn, audit for any stricter privacy notion, such~as $\epsilon$-KL DP, $(\alpha,\epsilon)$-R\'enyi DP, or $\epsilon$-DP itself.  Indeed, since $\epsilon$-MI DP is a relaxation of $\epsilon$-DP, any mechanism that violates the former must also violate the latter. Our audit tests between the null hypothesis $\cH_0$, that $\epsilon$-MI DP holds, and the alternative $\cH_1$ using an estimate of the mutual information from \eqref{EQ:MI_DP} as the test statistic. If the estimate is larger than the threshold, we reject the null and declare the mechanism as violating $\epsilon$-MI DP (and thus also $\epsilon$-DP). When auditing for MI DP, the only source of error in the decision is the statistical estimation error. When auditing DP, on the other hand, an extra slackness may arise from the gap between the DP constraint (say, in terms of $\infty$-Renyi divergence) and the relaxed mutual information-based one. 
}

The main challenge of the proposed approach lies in the inherent hardness of estimating mutual information. For continuous, high-dimensional random variables (which is often the regime of interest in modern privacy applications), the sample complexity of estimating mutual information grows exponentially with dimension \cite{Paninski2003,mcallester2020formal}, making tests based on $\epsilon$-MI DP infeasible. Fortunately, for auditing purposes we may further relax the $\epsilon$-MI DP privacy notion in order to gain sample efficiency of the test. To that end, we propose to employ sliced mutual information (SMI). {Note that this relaxation also comes at a cost in terms of test power due to the gap between mutual information and SMI.}

\subsection{Sliced Mutual Information}

SMI was introduced in \cite{goldfeld2021sliced} as an information measure that preserves many properties of classic (Shannon) mutual information, while being amenable to scalable estimation from high-dimensional samples. SMI is defined as an average of mutual information terms between one-dimensional projections of the considered random variables, namely, for $(X,Y)\sim P_{XY}\in \cP(\RR^{d_x}\times\RR^{d_y})$ it is given by
\begin{equation}
     \smi(X;Y):=\int_{\unitsphx}\int_{\unitsphy}\sI(\theta^\intercal X;\phi^\intercal Y)d\sigma_{d_x}(\theta)d\sigma_{d_y}(\phi),\label{EQ:SMI}
\end{equation}
where $\unitsph:=\{x\in\RR^d:\,\|x\|=1\}$ is the unit sphere in $\RR^d$ and $\sigma_d$ is the uniform distribution on it (cf. \cite{goldfeld2022kSMI} for an extension to $k$-dimensional projections). Proposition~1 of \cite{goldfeld2021sliced} shows that SMI satisfies many properties akin to classic mutual information, such as identification of independence, (sliced) entropy-based decompositions, tensorization, variational forms, and more. By the data processing inequality, SMI is always upper bounded by classic mutual information, i.e., $\smi(X;Y)\leq \sI(X;Y)$, which enables using it to define a relaxed privacy notion.

We also recall the sliced entropy and conditional SMI. For $(X,Y,Z)\sim P_{XYZ}\in\cP(\RR^{d_x}\times\RR^{d_y}\times \RR^{d_z})$ and $(\Theta,\Phi,\Psi)\sim \sigma_{d_x}\otimes\sigma_{d_y}\otimes\sigma_{d_z}$, the sliced entropy of $X$, its conditional version given $Y$, and the condition SMI between $X$ and $Y$ given $Z$ are defined, respectively, by
\begin{align*}
    \ssh(X)&:=\sh(\Theta^\intercal X|\Theta)\\
    \ssh(X|Y)&:=\sh(\Theta^\intercal X|\Theta,\Phi,\Phi^\intercal Y)\\
    \smi(X;Y|Z)&:=\sI(\Theta^\intercal X;\Phi^\intercal Y|\Theta,\Phi,\Psi,\Psi^\intercal Z).
\end{align*}
With these definitions, we have $\smi(X;Y)=\ssh(X)-\ssh(X|Y)$ and $\smi(X;Y,Z)=\smi(X;Y)+\smi(X;Z|Y)$, among others.

\subsection{Sliced Mutual Information Differential Privacy}

A randomized mechanism $M$ is said to satisfy $\epsilon$-SMI DP (w.r.t. the distribution class $\Theta\subseteq \cP(\cX^{n\times k})$ if 
\begin{equation}
    \sup_{\substack{P_X \in \Theta,\\i=1,\ldots,n}} \smi\big(X(i,\cdot);M(X)\big|\big(X(j,\cdot)\big)_{j\neq i}\big) \leq \epsilon;\label{EQ:SMI_DP}
\end{equation}
here we unfold $\big(X(j,\cdot)\big)_{j\neq i}$ into a vector of size $k(n-1)$ and then project it. Evidently, this is similar to the definition of $\epsilon$-MI DI from \cite{CY16} (Definition \ref{def:MI DP}) but with SMI replacing MI and while allowing for distributional assumptions on the database (as in the structured PP framework). We henceforth make the simplifying assumption that the database is independent across records. Namely, denoting the marginal distribution of the $i$th row $X(i,\cdot)$ by $P_i$, we assume that $X\sim P_X=\prod_{i=1}^n P_i$. While the subsequent results are derived under this restriction, we expect the ideas to naturally extend to general data distributions and PP frameworks. The following proposition states the $\epsilon$-SMI DP is a relaxation of $\epsilon$-MI DP.

\begin{proposition}[$\epsilon$-MI DP relaxation]
If a randomized mechanism $M:\cX^{n\times k}\to \cY$ is $\epsilon$-MI DP then it is also $\epsilon$-SMI~DP.
\end{proposition}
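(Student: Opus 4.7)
\medskip

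\textbf{Proof plan.} The plan is to reduce conditional SMI to unconditional SMI using the product structure of $P_X$, then invoke the standard data processing inequality (DPI) for SMI, namely $\smi(A;B)\leq \sI(A;B)$ from \cite{goldfeld2021sliced}. Fix $i\in\{1,\ldots,n\}$ and $P_X=\prod_{j=1}^n P_j\in\Theta$, and set the shorthands $U:=X(i,\cdot)$, $V:=\big(X(j,\cdot)\big)_{j\neq i}$, and $W:=M(X)$. The product assumption gives $U\perp V$. Let $(\Theta,\Phi,\Psi)$ be the projection directions from the definition of conditional SMI, drawn from $\sigma_{d_U}\otimes\sigma_{d_W}\otimes\sigma_{d_V}$ independently of $(U,V,W)$.

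The first step is to observe the two MI identities. On one side, chain rule and $U\perp V$ yield
\[
\sI(U;W,V)=\sI(U;V)+\sI(U;W\,|\,V)=\sI(U;W\,|\,V).
\]
On the other side, chain rule applied to the \emph{sliced} variables, together with the fact that $(\Theta,\Phi,\Psi)$ is independent of $(U,V,W)$, gives
\[
\sI(\Theta^\intercal U;\Phi^\intercal W,\Psi^\intercal V\,|\,\Theta,\Phi,\Psi)=\sI(\Theta^\intercal U;\Psi^\intercal V\,|\,\Theta,\Phi,\Psi)+\sI(\Theta^\intercal U;\Phi^\intercal W\,|\,\Theta,\Phi,\Psi,\Psi^\intercal V).
\]
The first term on the right-hand side vanishes: conditionally on $(\Theta,\Psi)$, $\Theta^\intercal U$ is a function of $U$ alone and $\Psi^\intercal V$ is a function of $V$ alone, so independence of $U$ and $V$ transfers to their projections. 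The second term on the right is exactly $\smi(U;W\,|\,V)$ by the definition quoted in the excerpt.

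The second step is to apply DPI. Since $(\Theta,\Phi,\Psi)\perp(U,V,W)$,
\[
\sI(U;W,V)=\sI(U;W,V\,|\,\Theta,\Phi,\Psi)\geq\sI(\Theta^\intercal U;\Phi^\intercal W,\Psi^\intercal V\,|\,\Theta,\Phi,\Psi),
\]
where the inequality is the conditional DPI applied with $\Theta^\intercal U$, $\Phi^\intercal W$, and $\Psi^\intercal V$ being deterministic functions of $(U,\Theta)$, $(W,\Phi)$, and $(V,\Psi)$, respectively, conditionally on the directions. Combining the two identities above with this inequality yields
\[
\smi(U;W\,|\,V)\leq\sI(U;W\,|\,V).
\]
Taking suprema over $i$ and $P_X\in\Theta$ (restricted to product distributions) delivers the claim.

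\textbf{Expected difficulty.} The reasoning is essentially a manipulation of chain rules together with one invocation of the conditional DPI, so the only subtlety is bookkeeping: one must be careful to note that $\Psi$ appears in both the conditioning set and inside $\Psi^\intercal V$, which is why the ``absorb $V$ into a joint variable and then re-chain'' trick is needed in place of a direct DPI on $\sI(\Theta^\intercal U;\Phi^\intercal W\,|\,\Theta,\Phi,\Psi,\Psi^\intercal V)$ versus $\sI(U;W\,|\,V)$ (conditioning on the lossy summary $\Psi^\intercal V$ rather than on $V$ itself prevents a one-line DPI). The product-distribution hypothesis on $P_X$ is precisely what lets us bypass this by converting the conditional MI into a joint MI.
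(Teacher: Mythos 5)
Your proposal is correct and follows essentially the same route as the paper, which (calling the argument "immediate") displays exactly the chain $\smi(U;W\,|\,V)=\smi(U;W,V)\leq\sI(U;W,V)=\sI(U;W\,|\,V)$ under the product assumption; you simply unpack the two equalities via the sliced chain rule and the independence of projections, and justify the middle inequality by the conditional DPI. The extra bookkeeping you supply is a faithful elaboration of the steps the paper omits.
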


The proof is immediate since under the independence assumption (and hence it is omitted), we have
\begin{align*}
\smi\big(X(i,\cdot);M(X)\big|&\big(X(j,\cdot)\big)_{j\neq i}\big)\\
&=\smi\big(X(i,\cdot);M(X),\big(X(j,\cdot)\big)_{j\neq i}\big)\\
&\leq \sI\big(X(i,\cdot);M(X),\big(X(j,\cdot)\big)_{j\neq i}\big)\\
&= \sI\big(X(i,\cdot);M(X)\big|\big(X(j,\cdot)\big)_{j\neq i}\big).
\end{align*}
Thus, we deduce that if $M$ violates $\epsilon$-SMI DP then it cannot be $\epsilon$-MI DP nor $\epsilon$-DP. In fact, it suffices to find a single distribution $P_X\in\Theta$ for which \eqref{EQ:SMI_DP} does not hold to reject the null hypothesis and declare violation of $\epsilon$-DP. These are the main observations for devising the SMI-based audit for DP (Section \ref{SUBSEC:SMI_audit}). Another key component of the test is scalability with which SMI can be estimated, which is formulated next.

\subsection{Sliced Mutual Information Estimation}

We consider estimating of SMI statistic from the left-hand-side (LHS) of \ref{EQ:SMI_DP}, but for a fixed distribution $P_X\in\Theta$ (a further relaxation). Defining the shorthands $X_i=X(i,\cdot)$, $Y=M(X)$, and $Z_i=\big(X(j,\cdot)\big)_{j\neq i}$, our objective is thus $\max_{i=1,\ldots,n}\smi_i$, where $\smi_i:=\smi(X_i;Y|Z_i)$. We first describe a Monte Carlo based estimation procedure for SMI, employing a generic mutual information estimator between scalar variables. Afterwards, we instantiate the generic estimator via the neural estimation framework of \cite{sreekumar2022neural} and provide formal convergence guarantees.

\medskip
\textbf{Monte Carlo estimate of SMI.} Fix $i=1,\ldots,n$, let $\{(X^j_i,Y^j,Z^j_i)\}_{j=1}^m$ 
be $m$ i.i.d. samples of $(X_i,Y,Z_i)$,
and proceed as follows:
\begin{enumerate}[leftmargin=.45cm]
    \item Draw $\{(\Theta_j,\Phi_j,\Psi_j)\}_{j=1}^p$ i.i.d. projection triples from $\unitsphxi \otimes \unitsphd \otimes \unitsphxR$.
    \item For each $j=1,\ldots,m$ and $\ell=1,\ldots,p$, compute $(\Theta_\ell^\intercal X_i^j, \Phi_\ell^\intercal Y^j, \Psi_\ell^\intercal Z_i^j)$. 
    \item For each $\ell=1,\ldots,p$, we estimate the mutual information $\sI\big(\Theta_\ell^\intercal X_i;\Phi_\ell^\tr Y, \Psi_\ell^\tr Z_i\big)$ using the estimate $\hat{\sI}\big((\Theta_\ell^\tr X_i)^m,(\Phi_\ell^\tr Y)^m,(\Psi_\ell^\tr Z_i)^m \big)$ (to be described shortly), where $$(\Theta_\ell^\tr X_i)^m:=(\Theta_\ell^\tr X^1_i,\ldots,\Theta_\ell^\tr X^m_i)$$ and $(\Phi_\ell^\intercal Y)^m$,  $(\Psi_\ell^\tr Z_i)^m $ are defined similarly.
    \item Take a Monte-Carlo average of the above estimates, resulting in the SMI estimator:
\begin{equation}
\widehat{\smi}^{m,p}_i:=\frac{1}{p}\sum_{\ell=1}^p \hat{\sI}\big((\Theta_\ell^\tr X_i)^m,(\Phi_\ell^\tr Y)^m,(\Psi_\ell^\tr Z_i)^m \big) .\label{EQ:SMI_est_generic}
\end{equation}
\item Set the SMI DP statistic estimator as 
\begin{equation}
    \widehat{\smi}^{m,p} :=\max_{i=1,\ldots,n} \widehat{\smi}^{m,p}_i. \label{EQ:SMI_est}
\end{equation}
\end{enumerate}  

\textbf{Neural estimation of mutual information.} We now instantiate the generic mutual information estimator $\hat{\sI}(\cdot,\cdot,\cdot)$ in Step 3 via the neural estimation framework of \cite{sreekumar2022neural}, and obtain explicit estimation error bounds in terms of $m$, $p$, and the size of the neural network. The appeal of this approach is twofold: (i) the SMI neural estimator \cite{goldfeld2022kSMI} lower bounds the population objective in the large sample limit, thus serving as a further relaxation which is in line with the auditing pipeline; and (ii) neural estimators are efficiently computable via standard gradient-based optimizers and have low memory footprint, even for high-dimensional data and massive sample sets.

Neural estimation of mutual information relies on the Donsker-Varadhan (DV) variational form, whereby
\[
\sI(U;V)=\sup_{f:\RR^{d_u}\times\RR^{d_v}\to\RR}\EE[f(U,V)]-\log\big(\EE\big[e^{f(\tilde{U},\tilde{V})}\big]\big),
\]
where $(U,V)\sim P_{UV}\in\cP(\RR^{d_u}\times\RR^{d_v})$, $(\tilde{U},\tilde{V})\sim P_U\otimes P_V$, and $f$ is a measurable function for which the expectations above are finite. Given i.i.d. data $(U_1,V_1),\ldots,(U_m,V_m)$ from $P_{UV}$, the neural estimator parameterizes the DV potential $f$ by an $\ell$-neuron shallow network and approximates expectations by sample means,\footnote{Negative samples, i.e., from $P_U\otimes P_V$, can be obtained from the positive ones via $(U_1,V_{\sigma(1)}),\ldots,(U_m,V_{\sigma(m)})$, where $\sigma\in S_m$ is a permutation such that $\sigma(i)\neq i$, for all $i=1,\ldots,m$.}  resulting in the estimate
\[\hat{\sI}_\ell(U^m,V^m):=\sup_{g \in \cG_\ell} \frac 1m \sum_{i=1}^m g(U_i,V_i)- \log\left(\frac 1n \sum_{i=1}^m e^{g(U_i,V_{\sigma(i)})}\right),\]
where the neural network function class is defined as
\[
\cG_\ell(a) 
:=\left\{ g:\mathbb{R}^{3} \rightarrow \mathbb{R}:\begin{aligned}
    & g(z)=\sum\nolimits_{i=1}^\ell \beta_i \phi\left(\langle w_i, z\rangle+b_i\right)+\langle w_0, z\rangle + b_0, 
    \\& 
    \max_{1 \leq i \leq \ell}\|w_{i}\|_1 \vee |b_i| \leq 1,~  
 \max_{1 \leq i \leq \ell}|\beta_i| \leq \frac{a}{2\ell},~ |b_0|,\|w_0\|_1 \leq a \end{aligned}\right\},
\]

with $\phi(z)=z\vee0$ as the ReLU activation, and we set the shorthand $\cG_\ell=\cG_\ell(\log \log\ell \vee 1)$. Inserting $\hat{\sI}_\ell$ with $U^m=(\Theta_\ell^\tr X_i)^m$ and $V^m=\big((\Phi_\ell^\tr Y)^m,(\Psi_\ell^\tr Z_i)^m\big)$ as the generic mutual information estimator in \eqref{EQ:SMI_est_generic}, the neural estimator of $\smi_i$ is
\[\widehat{\smi}_{i,\s{NE}}^{\ell,m,p}:=\frac{1}{p}\sum_{j=1}^p \hat{\sI}_\ell \big((\Theta_j^\tr X_i)^m,(\Phi_j^\tr Y)^m,(\Psi_j^\tr Z_i)^m \big),\]
and the corresponding SMI DP statistic estimator is
$\widehat{\smi}^{\ell,m,p}_\s{NE} :=\max_{i=1,\ldots,n} \widehat{\smi}_{i,\s{NE}}^{\ell,m,p}$. Note the $\widehat{\smi}_{i,\s{NE}}^{\ell,m,p}$ is readily implemented by parallelizing $m$ $\ell$-neuron ReLU nets with inputs in $\RR^{3}$ and scalar~outputs.

\medskip
\textbf{Formal guarantees.} Drawing upon the results of \cite{sreekumar2022neural} for neural estimation of $f$-divergences, we provide non-asymptotic error bounds for $\widehat{\smi}^{\ell,m,p}_\s{NE}$, subject to certain regularity assumptions on $P_{X,M(X)}$. 
Suppose that $ \cX^{n \times k} \subset \RR^{n \times k}$ and $\cY\subset \RR^{d}$ are compact sets and that $P_{XY}$ (recall that $Y=M(X)$) has a Lebesgue density $f_{XY}$ supported on $\cX^{n\times k}\times \cY$. 
For $b,\eta\geq 0$, let $\cF(b,\eta)\subseteq\cP(\cX^{n\times k}\times \cY)$ be the distribution class that contains all $P_{XY}$ as above that also satisfy the following property: 
$\exists\, r \in\sC^{4}_{b}(\cU) \ \textnormal{for some open set }\cU \supset \cX^{n \times k} \times \cY $, such that
$\log f_{XY} = r|_{\cX^{n\times k} \times \cY},$ and $\max_{i=1,\ldots n}\sI(X_i;M(X),Z_i) \leq \eta$ (namely, the log-density has a smooth extension to an open set $\cU$ containing the support). In particular, this class contains distributions whose densities are smooth and bounded from above and below and admit the aforementioned smooth extension condition. 
This includes uniform distributions, truncated Gaussians, truncated Cauchy distributions, etc.

We next provide convergence rates for the SMI DP statistic $\widehat{\smi}^{\ell,m,p}_\s{NE}$, uniformly over the class $\cF(b,\eta)$, characterizing the dependence of the error on $s$, $m$, and $\ell$. To simplify the bound we assume $\|\cX\|=\|\cY\|=1$, i.e., that the feature and the mechanism output spaces are normalized. The results readily extend to arbitrary compact domains.

\begin{proposition}[Neural estimation error]\label{PROP:SMI_NE}
For any $\eta,b\geq 0$, we have 
\[
  \sup_{P_{XY\in\cF(b,\eta)}}\EE\left[ \left| \max_{i=1,\ldots,n} \smi_i -\widehat{\smi}^{\ell,m,p}_\s{NE}\right|\right]  \leq C\,   n^3 k^2 \big(\ell^{-\frac{1}{2}}+m^{-\frac 12}+ p^{-\frac 12}\big),
\]
where $C$ is a constant that depend only on
$\eta$ and $b$. 
\end{proposition}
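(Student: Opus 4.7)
The plan is to reduce the error bound to per-index and per-projection components via the triangle inequality, and then invoke the uniform neural estimation error bound from \cite{sreekumar2022neural} for each scalar projection. Concretely, I would first observe that
\[
\left|\max_{i=1,\ldots,n} \smi_i - \max_{i=1,\ldots,n} \widehat{\smi}_{i,\s{NE}}^{\ell,m,p}\right| \leq \max_{i=1,\ldots,n}\left|\smi_i - \widehat{\smi}_{i,\s{NE}}^{\ell,m,p}\right| \leq \sum_{i=1}^n \left|\smi_i - \widehat{\smi}_{i,\s{NE}}^{\ell,m,p}\right|,
\]
so it suffices to bound $\EE\bigl[\bigl|\smi_i - \widehat{\smi}_{i,\s{NE}}^{\ell,m,p}\bigr|\bigr]$ uniformly in $i$ and contribute a linear factor in $n$. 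For each fixed $i$, set $I_i(\theta,\phi,\psi) := \sI(\theta^\intercal X_i;\phi^\intercal Y,\psi^\intercal Z_i)$, so that under the independence-across-records assumption $\smi_i = \EE_{\Theta,\Phi,\Psi}[I_i(\Theta,\Phi,\Psi)]$. Splitting the per-$i$ error via
\[
\left|\smi_i - \widehat{\smi}_{i,\s{NE}}^{\ell,m,p}\right| \leq \left|\smi_i - \tfrac{1}{p}\sum_{j=1}^p I_i(\Theta_j,\Phi_j,\Psi_j)\right| + \tfrac{1}{p}\sum_{j=1}^p\left|I_i(\Theta_j,\Phi_j,\Psi_j) - \hat{\sI}_\ell\bigl((\Theta_j^\intercal X_i)^m,(\Phi_j^\intercal Y)^m,(\Psi_j^\intercal Z_i)^m\bigr)\right|
\]
isolates a Monte-Carlo piece and a neural-estimation piece.

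For the Monte-Carlo piece, by the data-processing inequality one has $0 \leq I_i(\theta,\phi,\psi) \leq \sI(X_i;Y,Z_i) \leq \eta$ for every $P_{XY}\in\cF(b,\eta)$, so $I_i$ is bounded and its variance is at most $\eta^2$. A standard $L^2$ bound on the empirical mean then gives $\EE\bigl[\bigl|\smi_i - p^{-1}\sum_j I_i(\Theta_j,\Phi_j,\Psi_j)\bigr|\bigr] \leq \eta/\sqrt{p}$, uniformly over the class.

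For the neural-estimation piece, I would apply the non-asymptotic error bound of \cite{sreekumar2022neural} to the scalar mutual information $I_i(\theta,\phi,\psi)$ conditional on the drawn projection triple. This requires verifying that the joint density of $(\theta^\intercal X_i,\phi^\intercal Y,\psi^\intercal Z_i)$ on its compact support has a log-density that is $\sC^4$-smooth with constants controlled by $b$ and the ambient dimensions $n,k$. This follows by marginalization: the projected density is obtained by integrating $f_{XY}$ over affine slices of $\cX^{n\times k}\times\cY$, so its smoothness is inherited from the assumed $\sC^4_b$ extension of $\log f_{XY}$, with the smoothness radius blowing up polynomially in the diameter of the projected variables. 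Since $\|\Theta^\intercal X_i\| \leq \sqrt{k}$ and $\|\Psi^\intercal Z_i\| \leq \sqrt{k(n-1)}$ under the normalization $\|\cX\|=\|\cY\|=1$, the resulting smoothness constant depends polynomially on $nk$. The estimator $\hat{\sI}_\ell$ then has expected error $O(\ell^{-1/2}+m^{-1/2})$ with constants polynomial in $nk$, and averaging over the $p$ projections preserves this rate.

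Combining the three contributions — a factor $n$ from the union over indices, the $p^{-1/2}$ Monte-Carlo term, and the $\ell^{-1/2}+m^{-1/2}$ neural term, each with dimension-dependent constants — yields a bound of the form $C\,n^3 k^2(\ell^{-1/2}+m^{-1/2}+p^{-1/2})$ after absorbing all polynomial factors in $n$ and $k$, where $C$ depends only on $b$ and $\eta$. The main obstacle I anticipate is pinning down the precise polynomial dependence of the smoothness radius on $n,k$ after marginalization, and then threading this dependence through the constants in the neural estimation rate of \cite{sreekumar2022neural}; the $n^3 k^2$ factor in the statement suggests that the authors obtain this dependence sharply by carefully tracking the diameters and Jacobians arising in the projected densities.
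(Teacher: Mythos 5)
Your proposal follows essentially the same route as the paper: the paper's proof consists precisely of the bound $\EE\big[\big|\max_i \smi_i-\widehat{\smi}_{\s{NE}}^{\ell,m,p}\big|\big]\leq \sum_{i=1}^n \EE\big[\big|\smi_i-\widehat{\smi}_{i,\s{NE}}^{\ell,m,p}\big|\big]$ followed by an invocation of the SMI neural estimation bound of \cite{goldfeld2022kSMI} (cf. \cite[Proposition 2]{sreekumar2022neural}), whose internal structure is exactly your Monte-Carlo-plus-neural-estimation decomposition with the boundedness $I_i\leq \eta$ and the inherited $\sC^4$ smoothness of the projected log-densities. Your elaboration of the per-index term is a faithful unpacking of that cited result, so the argument is correct and matches the paper's approach.
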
 

\cref{PROP:SMI_NE} follows by bounding 
\[\EE\left[\left|\max_{i=1,\ldots,n}\smi_i-\widehat{\smi}_{\s{NE}}^{\ell,m,p}\right|\right]\leq \sum_{i=1}^n \EE\left[\left|\smi_i-\widehat{\smi}_{i,\s{NE}}^{\ell,m,p}\right|\right],\] and then applying the SMI neural estimation bound from  \cite{goldfeld2022kSMI} to the RHS above (cf. \cite[Proposition 2]{sreekumar2022neural}). Further details are omitted for brevity.

\subsection{Auditing Differential Privacy via $\epsilon$-SMI DP}\label{SUBSEC:SMI_audit}

We now present the hypothesis testing pipeline for auditing $\epsilon$-SMI DP. Notably, violations on the latter also implies a violation of $\epsilon$-DP. We consider a composite hypothesis test between the null $H_0:\ \max_{i=1,\ldots,n} \smi_i \leq \epsilon$ (i.e., $\epsilon$-SMI DP holds), and the alternative $H_1$. Given samples $\smash{\{(X^j_i,Y^j,Z^j_i)\}_{(i,j)=(1,1)}^{(n,m)}}$ from the database and the mechanism, we use the maximized SMI estimator  $\widehat{\smi}^{\ell,m,p}_{\s{NE}}$ as our test statistic. An immediate consequence of \cref{PROP:SMI_NE} and Markov's inequality is the following Type 1 error bound.

\begin{proposition}[Type-I error] \label{prop:type I error} Fix arbitrary $\epsilon,r>0$ and consider the above setup. We have 
\[ \PP\left(  \widehat{\smi}^{\ell,m,p}_{\s{NE}} > \epsilon +r\, \middle| \,H_0\right) \leq C \frac{n^3 k^2}{r} \big(\ell^{-\frac 12}+m^{-\frac{1}{2}}+ p^{-\frac 12}\big),
\]
where $C$ is the constant from \cref{PROP:SMI_NE}.

\end{proposition}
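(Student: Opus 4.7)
The plan is to reduce the tail bound on the test statistic to an expectation bound on the estimation error, and then invoke \cref{PROP:SMI_NE} together with Markov's inequality. The starting observation is that under $H_0$ we have $\max_{i=1,\ldots,n}\smi_i \leq \epsilon$, so the event $\big\{\widehat{\smi}^{\ell,m,p}_{\s{NE}} > \epsilon + r\big\}$ is contained in the event $\big\{\widehat{\smi}^{\ell,m,p}_{\s{NE}} - \max_{i=1,\ldots,n}\smi_i > r\big\}$, which in turn is contained in $\big\{\big|\widehat{\smi}^{\ell,m,p}_{\s{NE}} - \max_{i=1,\ldots,n}\smi_i\big| > r\big\}$.

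Next, I would apply Markov's inequality to the nonnegative random variable $\big|\widehat{\smi}^{\ell,m,p}_{\s{NE}} - \max_{i=1,\ldots,n}\smi_i\big|$, yielding
\[
\PP\!\left(\widehat{\smi}^{\ell,m,p}_{\s{NE}} > \epsilon + r \,\middle|\, H_0\right) \leq \frac{1}{r}\,\EE\!\left[\left|\max_{i=1,\ldots,n}\smi_i - \widehat{\smi}^{\ell,m,p}_{\s{NE}}\right|\right].
\]
Since the expectation on the right is taken under the true joint distribution $P_{XY}$ which, by the hypotheses of the proposition, is assumed to lie in $\cF(b,\eta)$, I can upper-bound it by the supremum over this class. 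Applying \cref{PROP:SMI_NE} then yields the claimed bound with the same constant $C$.

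There is no real obstacle here: the argument is a routine combination of a deterministic set inclusion (valid on the event $H_0$), Markov's inequality, and the uniform neural estimation error bound already established in \cref{PROP:SMI_NE}. The only mild subtlety is ensuring the conditional probability under $H_0$ is handled correctly; this is immediate because the inclusion $\big\{\widehat{\smi}^{\ell,m,p}_{\s{NE}} > \epsilon + r\big\} \subseteq \big\{\big|\widehat{\smi}^{\ell,m,p}_{\s{NE}} - \max_i \smi_i\big| > r\big\}$ is purely deterministic given that $\max_i \smi_i \leq \epsilon$, so no conditioning issues arise when passing to Markov's inequality.
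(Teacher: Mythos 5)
Your argument is correct and is exactly the paper's route: the paper derives this bound as an immediate consequence of \cref{PROP:SMI_NE} and Markov's inequality, which is precisely the set-inclusion-plus-Markov reduction you spell out. Nothing to add.
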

 
Choosing $r$ and the estimator parameters such that the error probability is not significant, for instance, $r\asymp C \frac{n^3k^2}{\alpha}\big(\ell^{-\frac 12}+m^{-\frac 12}+ p^{-\frac 12}\big)$ with $\alpha\in(0,1)$, and $\ell=m=p\asymp n^6k^4$, we obtain an hypothesis test with level $\alpha$ significance. Indeed, under the null, the rejection probability of this test is below $\alpha$. A power analysis of the proposed test (namely, the Type II error) is also of significant interest since it provides guarantees for identifying privacy violating mechanisms. This, however, requires more advanced machinery such as a limit distribution theory for the test statistic using which a power against local alternatives can be quantified. Since a refined statistical analysis of SMI estimators is beyond the scope of this work, we leave the power analysis of the above test for future work.

\begin{remark}[Auditing variants of DP frameworks]
The above hypothesis test can be used to audit for various privacy framework, including $\epsilon$-DP, $(\alpha,\epsilon)$-R\'enyi DP~\cite{mironov2017renyi} with $\alpha \geq 1$, $\epsilon$-MI DP, etc. This is since all these framework are stronger than (and hence imply) $\epsilon$-SMI DP. Furthermore, when the database distribution has finite support and a bounded density that satisfies the conditions from Theorem~\ref{thm:equivalence}, the implication $(\epsilon',\delta')-\textnormal{DP} \implies \epsilon^*-\textnormal{MI DP}$ holds and the hypothesis test can also audit for $(\epsilon,\delta)$-DP algorithms .

\end{remark}

\begin{remark}[Auditing PP frameworks]\label{rem:audit_PP}
These ideas readily extended to auditing of PP frameworks. In particular, when $\cW=\cE=\emptyset$ in the structured PP framework from Definition~\ref{def:specialized_PP}, the above procedure can be adapted even without requiring that the database rows are i.i.d. (as needed in the case of DP).
\end{remark}

\section{Additional Applications} \label{Sec: Additional applications}

\subsection{Private Mean Estimation}\label{sec: Private mean estimation}

Differentially-private mean estimation is a basic private statistical inference tasks, which was widely studied under the classic DP paradigm \cite{kamath2019privately,kamath2020private,karwa2017finite}. 
We now revisit this problem and quantify the sample complexity of $\epsilon$-MI DP multivariate mean estimation. Potential gains of incorporating domain knowledge into the framework are also discussed. 
Let $X\sim P_X\in\cP(\RR^d)$ be $d$-dimensional random variable with mean $\EE[X]=\mu$. Given $n$ i.i.d. samples of $X$, the goal is obtain an accurate estimate $\hat{\mu}_n$ of its mean that also satisfies $\epsilon$-MI DP. We propose Algorithm \ref{alg:d large MI DP} as the procedure for doing~so.

\begin{algorithm}[t!]
\caption{$\epsilon$-MI DP algorithm for mean estimation 
}\label{alg:d large MI DP}
\begin{algorithmic}
\STATE Input: $n$ i.i.d. data samples $X_1,\ldots,X_n$; $\epsilon$; $\beta$
\STATE $m \gets 200 \log(1/\beta) $
\STATE $k \gets \lfloor n/m \rfloor$
\STATE $\sigma^2 \gets \frac{d m^2}{2n^2 \epsilon}$
\FOR{$p=1:m$}
\STATE $\tilde{\mu}_p \gets \frac{1}{k} \big(X_{(p-1)k+1}+\ldots+X_{p k})+Z_p,\quad Z_p\sim\cN(0,\sigma^2\rI_d)$
\ENDFOR
\STATE $\hat{\mu}_n \gets \argmin_{y \in \RR^d} \sum_{p=1}^m \| y- \tilde{\mu}_p\|$ 
\STATE Output: $\hat{\mu}_n$
\end{algorithmic}
\end{algorithm}

\begin{proposition}[Mean estimation under $\epsilon$-MI DP] \label{Prop:High dimensional mean estimation accuracy guarantees}
Fix $\alpha,\beta,\epsilon >0$. Let $X\sim P_X\in\cP(\RR^d)$ have mean $\EE[X]=\mu$ and a bounded absolute second moment $\EE\big[\|X-\mu\|^2\big] < \infty$. Then Algorithm~\ref{alg:d large MI DP} fed with $n \geq n_0$ data samples, where \[n_0 = O\left( \log(1/ \beta) \left(\frac{d}{\alpha^2}  + \frac{d}{\alpha \sqrt{\epsilon}}  \right) \right),\]
outputs an estimated $\hat{\mu}_n$ that satisfies $\epsilon$-MI DP and achieves $\PP\big(\| \hat{\mu}_n -\mu \| \leq \alpha\big)\geq 1- \beta$.
\end{proposition}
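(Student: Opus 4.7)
The proof naturally splits into a privacy and a utility analysis, which I would handle separately.

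\textbf{Privacy.} The plan is to combine post-processing with a disjoint-blocks argument. Since $\hat\mu_n$ is a deterministic function of $(\tilde\mu_1,\ldots,\tilde\mu_m)$, the post-processing property of $\epsilon$-MI DP (Theorem~\ref{thm:MIPP_properties}(2)) reduces the task to showing that the vector $(\tilde\mu_1,\ldots,\tilde\mu_m)$ is $\epsilon$-MI DP. Because the blocks are disjoint, altering a single sample $X_i$ affects only the block mean $\tilde\mu_{p(i)}$ containing index $i$, while every other $\tilde\mu_q$ is a function of $X_{-i}$. Hence, the chain rule yields
\[
\sI\big(X_i;\tilde\mu_1,\ldots,\tilde\mu_m\big|X_{-i}\big)=\sI\big(X_i;\tilde\mu_{p(i)}\big|X_{-i}\big),
\]
so it suffices to bound the conditional MI between $X_i$ and the noisy mean of its own block. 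Applying the variance-based form of Corollary~\ref{cor:GaussianMechanismMIDP} to the block-mean function $f(X_1,\ldots,X_k)=\tfrac{1}{k}\sum_{j}X_j$, one notes that $\mathrm{Var}\big(f_\ell\mid X_{-i}\big)=\mathrm{Var}(X(1,\ell))/k^2$ for each coordinate $\ell$, so $\sum_\ell\EE[\mathrm{Var}(f_\ell|X_{-i})]=\mathrm{tr}(\Sigma_X)/k^2$. The algorithm's choice $\sigma^2=d/(2k^2\epsilon)$ (using $k=n/m$) and the inequality $e^{2\epsilon/d}-1\geq 2\epsilon/d$ then deliver $\epsilon$-MI DP under the standard normalization $\mathrm{tr}(\Sigma_X)\leq d$ (i.e., bounded per-coordinate variance).

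\textbf{Utility.} I would follow a standard median-of-means structure. For each $p$, write $\tilde\mu_p-\mu=(\bar X_p-\mu)+Z_p$ and bound the two pieces separately using Markov's inequality. The sampling part satisfies $\EE\|\bar X_p-\mu\|^2=\mathrm{tr}(\Sigma_X)/k$, while the noise part satisfies $\EE\|Z_p\|^2=d\sigma^2=d^2/(2k^2\epsilon)$. Choosing $k\gtrsim d/\alpha^2$ and $k\gtrsim d/(\alpha\sqrt\epsilon)$ makes both tail probabilities at most $1/6$, so each $\tilde\mu_p$ is "good" (within distance $\alpha$ of $\mu$) with probability at least $2/3$. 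A Chernoff bound on the binomial with $m=200\log(1/\beta)$ then shows that with probability at least $1-\beta$ more than half of the $\tilde\mu_p$'s are good. Invoking the standard robust-statistics fact that the geometric median of a set with a strict majority of points within $\alpha$ of $\mu$ is itself $O(\alpha)$ from $\mu$ (a Nemirovski-type bound) gives the desired guarantee $\|\hat\mu_n-\mu\|\lesssim\alpha$. Combining the two block-size requirements with $n=km$ yields the claimed $n_0=O\big(\log(1/\beta)(d/\alpha^2+d/(\alpha\sqrt\epsilon))\big)$.

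\textbf{Main obstacle.} The more delicate step is the privacy analysis: the noise scale $\sigma^2$ prescribed by the algorithm does not depend on $\mathrm{tr}(\Sigma_X)$, yet $\epsilon$-MI DP is defined as a supremum over all data distributions. The analysis therefore has to either restrict $\Theta$ to distributions with per-coordinate variance bounded by~$1$ (the standard normalization in private mean estimation) or absorb $\mathrm{tr}(\Sigma_X)$ into the implicit constants of the $O(\cdot)$ notation. I expect the cleanest route is to first establish the bound $\sI(X_i;\tilde\mu_{p(i)}|X_{-i})\leq\tfrac{d}{2}\log\!\big(1+\mathrm{tr}(\Sigma_X)/(dk^2\sigma^2)\big)$ directly via a Gaussian maximum-entropy argument (mirroring the proof of Theorem~\ref{thm:GaussianMechanismMIPP}), and then explicitly substitute the algorithm's $\sigma^2$. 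The utility side is more routine, with the only subtlety being the quantitative constant in the geometric-median robustness bound that feeds into the choice $m=200\log(1/\beta)$.
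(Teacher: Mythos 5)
Your proposal is correct and follows essentially the same route as the paper: privacy via post-processing, the disjoint-block chain-rule reduction to $\sI\big(X_i;\tilde\mu_{p(i)}\big|(X_j)_{j\in A_{p(i)}\setminus\{i\}}\big)$, and the variance-based Gaussian-mechanism corollary for each block mean; utility via a median-of-means argument, where the paper simply cites Minsker's Theorem~3.1 in place of your hand-rolled Chernoff-plus-geometric-median-robustness step (and uses a Gaussian tail bound rather than Markov for the noise term, an immaterial difference at constant failure probability). Your flagged concern about $\sigma^2$ not depending on the second moment is well taken and is resolved exactly as you suggest: the paper's own proof works with $\sigma^2 = dc/(2k^2\epsilon)$ where $c=\EE\big[\|X-\mu\|^2\big]$, absorbing the second-moment constant that the algorithm's displayed noise level omits.
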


Proposition~\ref{Prop:High dimensional mean estimation accuracy guarantees} is proven in Section~\ref{Proof of Prop:High dimensional mean estimation accuracy guarantees}. The argument uses Theorem~\ref{thm:GaussianMechanismMIPP} to derive noise levels under which Algorithm \ref{alg:d large MI DP} attains $\epsilon$-MI DP, and then applies median of means techniques (specifically geometric median)  to improve the accuracy of the estimate. 

\begin{remark}[Sensitivity-based mechanisms]
Private mean estimation under other variants of DP, such as $\epsilon$-DP, $(\epsilon,\delta)$-DP, and $\rho$-concentrated DP~\cite{bun2016concentrated}, typically employs standard, sensitivity-based noise injection mechanisms. However, these require knowledge of an upper bound on the mean, i.e., $R$ such that $\|\mu \| \leq R$. For example, \cite{kamath2020private} propose a mechanism that is initiated using a rough proxy of $R$ which is iteratively refined using the data samples. 
The $\epsilon$-MI DP mechanism proposed herein, on the other hand, does not require boundedness so long that the distribution of interest has finite variance. {This is due to the fact that $\epsilon$-MI DP incorporates domain knowledge regarding the class of distributions, as opposed to the aforementioned DP variants that guarantee privacy in the worst-case (and are hence stricter). }

\end{remark}

\begin{remark}[Computational complexity of Algorithm~\ref{alg:d large MI DP}]
For $d=1$, Algorithm~\ref{alg:d large MI DP} boils down to computing the median of the means obtained from $m$ rounds. In the multivariate case, our algorithm requires evaluating the geometric median, which is a computationally hard problem. Nevertheless, there are near-linear time algorithms for approximate geometric median computation, i.e., with complexity $O\big(dm (\log(1/\alpha))^3 \big)$ where $\alpha >0$ is the approximation parameter\cite{cohen2016geometric}.\footnote{Another approach for computing a multivariate median is the smallest-ball median method, but it is 
computationally intractable~\cite{hsu2016lossMedianSmallestBall}.} 
To reduce the computational burden, one may consider coordinate-wise median, whose complexity is $O\big(dm\log(m)\big)$; the linear dependence on $d$ can further be alleviated by parallelization. However, the sample complexity needed to to achieve the same level of accuracy $\alpha$ using the coordinate-wise median worsens to $n_0 = O\big( \log(d/ \beta)\big(d\alpha^{-2}+d\alpha^{-1}\epsilon^{-1/2}\big) \big)$.
\end{remark}

{ 
\begin{remark}[Sample complexity of $\epsilon$-DP algorithms]
The sample complexity of $\epsilon$-differentially private mean estimation was evaluated in \cite{kamath2020private}, under the assumption that $\|\mu\| \leq R$, for some known $R$, and that the $k\geq 2$ moment of $X\sim P_X$ is bounded. Their result reads as
 \[n_0= O\left(\log(d/\beta) \left( \frac{d}{\alpha^2}+ \frac{d}{\epsilon \alpha^{k/k-1}}+ \frac{d\log(R)}{\epsilon} \right) \right).\]
The achieving algorithm first truncates the data into a bounded region so as to limit the amount of injected noise needed for privacy, and then performs mean estimation in a differentially private manner. To find the truncation range they use an iterative procedure which depends on the known $R$. This approach, however, becomes computationally infeasible when dimension is large. While the $\epsilon$-DP sample complexity bound above grows like $1/\epsilon$, the one corresponding to $\epsilon$-MI DP is on the order of $1/\sqrt{\epsilon}$ (see \cref{Prop:High dimensional mean estimation accuracy guarantees}). Nevertheless, it is important to note that the privacy guarantees provided by these approaches are different, with $\epsilon$-DP being strictly stronger.

\end{remark}

}

\subsection{Algorithmic Stability} 

Conditional mutual information was used in \cite{steinke2020reasoning} to study algorithmic stability and, in turn, generalization of machine learning models. We next recall the setup from \cite{steinke2020reasoning}, outline their main stability results, and demonstrate that an algorithm satisfying $\epsilon$-MI DP is stable in that sense. Consider a (possibly randomized) learning algorithm $A: \cX^{n \times k} \to \cY$ that takes as input $n$ samples, each with $k$ features, and outputs an hypothesis from the class $\cY$ (the precise task or loss function are inconsequential here). Let $Q\in\cP(\cX^k)$ be the data distribution and draw a $2n$-sized dataset $X \sim Q^{\otimes {2n}}$. Let $B=(B_1,\ldots,B_n)\sim \mathrm{Ber}(0.5)^{\otimes n}$ be a $n$-lengthed string of i.i.d. random bits independent of everything else. Using $B$, we define an $n$-sized database $X_B$, which will be fed into the learning algorithm as follows. Set $X_{B}(i,\cdot)=X\big(iB_i+(n+i)(1-B_i)\big)$, for $i=1,\ldots,n$, i.e., the $i$th entry of $X_B$ is $X(i,\cdot)$ of $B_i=1$ and $X(n+i,\cdot)$ otherwise. Note that by symmetry, the distribution of this database is $X_B\sim Q^{\otimes n}$.

The following conditional mutual information measure, abbreviated CMI, quantifies how informative the output of an algorithm is about the selected samples (which are determined by the string $B$), given the entire $2n$-sized original database. 

\begin{definition}[CMI \cite{steinke2020reasoning}]
Under the setup above, the CMI of the algorithm $A$ w.r.t. the data distribution $Q$ is given by
\[\mathsf{CMI}_{Q}(A):=  \sI\big(A(X_B);B |X\big),\]
while its distribution-free CMI is
\[\mathsf{CMI}(A):= \sup_{x \in \cX^{2n \times k}} \sI\big(A(x_B);B \big). \]
\end{definition}

Several generalization bounds based on the above definition were proven in \cite{steinke2020reasoning}, which roughly behave as $\big(\mathsf{CMI}_Q(A)/n\big)^{1/2}$ (or the distribution-free analogue); cf., e.g., Theorems 1.2 and 1.3 therein.  In addition,~\cite{steinke2020reasoning} showed that DP algorithms have bounded distribution-free CMI. Specifically, an algorithm $A$ that satisfies $\sqrt{2\epsilon}$-DP has $\mathsf{CMI}(A)~\leq~\epsilon n$. The following result shows that $\epsilon$-MI DP also entails a similar conclusion.

\begin{proposition}[CMI bound via $\epsilon$-MI DP] \label{cor: Algorithm stability via MI DP}
If the algorithm $A: \cX^{n \times k} \to \cY$ satisfies $\epsilon$-MI DP, then we have $\mathsf{CMI}(A) \leq \epsilon n $.
\end{proposition}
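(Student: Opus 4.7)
The plan is to fix the data configuration $x \in \cX^{2n \times k}$ (over which we later take the supremum defining $\mathsf{CMI}(A)$), and to exploit the fact that under this conditioning, the distribution of $X_B$ is a genuine product law on $\cX^{n\times k}$, namely $P_{X_B}^{\,x} = \bigotimes_{i=1}^n \tfrac{1}{2}\bigl(\delta_{x(i,\cdot)}+\delta_{x(n+i,\cdot)}\bigr)$. Because $\epsilon$-MI DP supremizes over all $P_X\in\cP(\cX^{n\times k})$, the hypothesis applies to this particular $P_X$ and yields $\sI\bigl(X_B(i,\cdot);A(X_B)\,\big|\,X_B(j,\cdot)_{j\neq i}\bigr)\leq\epsilon$ for every $i$. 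The remaining task is to translate the left-hand side of the target bound (which involves the bits $B$) into a sum of such row-wise conditional mutual informations.

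First, I would dispose of the degenerate coordinates: if $x(i,\cdot)=x(n+i,\cdot)$, then $B_i$ is independent of $X_B$ and of $A(X_B)$, so it contributes nothing to either side and can be dropped. Assuming henceforth that all pairs are distinct, there is a deterministic bijection (depending only on the fixed $x$) between $B_i$ and $X_B(i,\cdot)$, and hence between $B_{-i}=(B_j)_{j\neq i}$ and $X_B(j,\cdot)_{j\neq i}$. Consequently, for each $i$,
\[
\sI\bigl(A(X_B);B_i\,\big|\,B_{-i}\bigr)\;=\;\sI\bigl(A(X_B);X_B(i,\cdot)\,\big|\,X_B(j,\cdot)_{j\neq i}\bigr)\;\leq\;\epsilon,
\]
where the inequality invokes $\epsilon$-MI DP for the product law above.

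The core step is then a Han-type sub-additivity inequality for conditional mutual information with independent sources: for independent $B_1,\ldots,B_n$ and any random variable $Y$,
\[
\sI(Y;B)\;\leq\;\sum_{i=1}^n \sI(Y;B_i\,|\,B_{-i}).
\]
I would derive this from the identity $\sI(Y;B)=\sum_i H(B_i)-H(B|Y)$ (using independence of the $B_i$'s) combined with the chain rule and the fact that conditioning reduces entropy:
\[
H(B|Y)=\sum_{i=1}^n H\bigl(B_i\,|\,B_1,\ldots,B_{i-1},Y\bigr)\;\geq\;\sum_{i=1}^n H\bigl(B_i\,|\,B_{-i},Y\bigr),
\]
which rearranges to the claimed inequality. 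Applying this with $Y=A(X_B)$ and combining with the previous per-coordinate bound yields $\sI(A(X_B);B)\leq n\epsilon$ for every fixed $x$. Taking the supremum over $x\in\cX^{2n\times k}$ gives $\mathsf{CMI}(A)\leq n\epsilon$, as desired.

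No step is genuinely difficult here; the only mild subtlety is to notice that one cannot directly use the chain-rule identity $\sI(A(X_B);B)=\sum_i \sI(A(X_B);B_i\,|\,B_1,\ldots,B_{i-1})$, since the conditioning there is not rich enough to translate to the MI-DP ``all other rows'' form. The Han-type inequality with $B_{-i}$ in the conditioner is what makes the correspondence to $\epsilon$-MI DP exact, and verifying this is the main observation of the proof.
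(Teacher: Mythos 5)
Your proof is correct. It does, however, take a different (more self-contained) route than the paper, which does not argue the bound directly: the paper notes that $\epsilon$-MI DP implies $\epsilon$-mutual-information stability in the sense of \cite{raginsky2016information} (the per-row conditional mutual information bound, averaged over $i=1,\ldots,n$ for i.i.d.\ data) and then invokes Theorem~4.7 of \cite{steinke2020reasoning} to convert that stability notion into $\mathsf{CMI}(A)\leq \epsilon n$. Your argument unrolls essentially the same chain without the external reference: fixing $x$ makes $X_B$ a product law to which the $\epsilon$-MI DP supremum over $P_X$ applies; the bijection between $B_i$ and $X_B(i,\cdot)$ (after discarding the independent bits at degenerate coordinates with $x(i,\cdot)=x(n+i,\cdot)$) identifies $\sI\big(A(X_B);B_i\,\big|\,B_{-i}\big)$ with the MI-DP quantity; and the Han-type subadditivity $\sI(Y;B)\leq\sum_{i}\sI(Y;B_i\,|\,B_{-i})$ for independent coordinates performs the aggregation --- this last inequality is precisely the content hidden inside the cited theorem. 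What your version buys is transparency and a verifiable, self-contained derivation; what the paper's version buys is brevity and the explicit intermediate notion of mutual-information stability, which it wishes to highlight as the conceptual bridge to generalization bounds. Your closing observation --- that the naive chain rule conditioning on $B_1,\ldots,B_{i-1}$ does not match the ``all other rows'' conditioning of $\epsilon$-MI DP, so one must place $B_{-i}$ in the conditioner --- is indeed the one genuinely nontrivial point of the direct argument.
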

This result follows because $\epsilon$-MI DP algorithms also satisfy $\epsilon$-mutual information stability\footnote{Given a data set $X=(X_1,\ldots,X_n)$ that comprises $n$ i.i.d. samples from $Q\in\cP(\cX^{k})$, an algorithm $A$ is said to be $\epsilon$-mutual information stable if 
$\sup_{Q\in\cP(\cX^{k})} \frac{1}{n} \sum_{i=1}^n \sI \big(A(X);X(i,\cdot) \big| \big(X(j,\cdot)\big)_{j \neq i} \big) \leq~\epsilon.
$}~\cite{raginsky2016information}, which by \cite[Theorem~4.7]{steinke2020reasoning} further implies the CMI bound above. Consequently, learning algorithms that satisfy $\epsilon$-MI DP follow the generalization bounds from \cite{steinke2020reasoning}.

\begin{remark}[Domain knowledge]
Algorithmic stability on restricted classes of distributions may be of interest when information about the data generating process is available. The above ideas and results readily extended to this case by replacing the set of all possible distributions by a subset $\Theta \subset \cP(\cX^{n \times k})$.
\end{remark}

\subsection{Utility of PP and MI PP Mechanisms}\label{app:utility}

Generally, it may be hard to assess the utility of a given PP mechanism. However, the MI PP framework can be used as a proxy to understand the privacy-utility tradeoff in the mechanism design phase. We showcase another scenario here where we can borrow the results connected to information theory in making progress in privacy research via MI PP. In this subsection, we focus on the setup where $\Theta=\{ P_X\}, \cG=\{g\} $ and $\cW= \emptyset$, namely, when the database is drawn from a given distribution and we want to hide one specific function thereof. Let $f(X)$ be the query and $M(X)$ the output of the private mechanism. We assume that the ranges of $f$ and $g$ are finite and cast $\sI(f(X);M(X))$ as the utility metric, i.e., how much information the mechanism's output has of the query. 

Consider the maximal utility of an $\epsilon$-MI PP mechanism $M(X)$ in two situations: when $g(X)$ is or is not available at the mechanism design step. Namely, we define
\begin{align*}
    \cU^\epsilon_1(X) &:= \sup_{\substack{P_{M(X)|f(X),g(X)},\\ \sI(M(X);g(X)) \leq \epsilon}} \sI\big(f(X);M(X)\big)\\
   \cU^\epsilon_2(X)&:= \sup_{\substack{P_{M(X)|f(X)},\\ g(X) \rightarrow f(X) \rightarrow M(X), \\ \sI(M(X);g(X)) \leq \epsilon}} \sI\big(f(X);M(X)\big) 
\end{align*}

The following proposition, which relies on \cite[Lemma~8]{zamani2022bounds}, gives an upper bound on the maximal utility. 

\begin{proposition}[Lemma~8 of \cite{zamani2022bounds}] \label{prop:max_utility}
For any $0 \leq \epsilon < \sI(g(X);f(X))$, 
we have 
\[  \cU^\epsilon_2(X) \leq \cU^\epsilon_1(X) \leq \mathsf{H}\big(f(X)|g(X)\big) + \epsilon.\]
\end{proposition}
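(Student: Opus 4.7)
The plan is to prove the two inequalities separately, with both following from elementary manipulations of mutual information.

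For the first inequality $\cU^\epsilon_2(X) \leq \cU^\epsilon_1(X)$, I would observe that the supremum defining $\cU^\epsilon_2$ is taken over the subclass of conditional distributions $P_{M(X)|f(X),g(X)}$ that factor as $P_{M(X)|f(X)}$ (i.e., satisfy the Markov chain $g(X) \to f(X) \to M(X)$), while the supremum defining $\cU^\epsilon_1$ is over all admissible kernels with the same $\epsilon$-MI PP constraint. Since the feasible set in $\cU^\epsilon_2$ is contained in that of $\cU^\epsilon_1$ and the objective $\sI(f(X);M(X))$ is identical, monotonicity of the supremum gives the claim immediately.

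For the second inequality, the key is to bound $\sI(f(X);M(X))$ via the joint mutual information $\sI(f(X),g(X);M(X))$. By the chain rule,
\[
\sI(f(X),g(X);M(X)) = \sI(f(X);M(X)) + \sI(g(X);M(X)\mid f(X)) \geq \sI(f(X);M(X)),
\]
so it suffices to bound the joint term. Expanding it the other way,
\[
\sI(f(X),g(X);M(X)) = \sI(g(X);M(X)) + \sI(f(X);M(X)\mid g(X)).
\]
The first summand is at most $\epsilon$ by the $\epsilon$-MI PP constraint, and the second is at most $\mathsf{H}(f(X)\mid g(X))$ since conditional mutual information is bounded by conditional entropy (using that $f(X)$ has finite range, so the entropy is finite). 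Combining yields $\sI(f(X);M(X)) \leq \mathsf{H}(f(X)\mid g(X)) + \epsilon$, and taking the supremum over feasible $M(X)$ gives the bound on $\cU^\epsilon_1(X)$.

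There is no real obstacle here; the argument is a direct application of the chain rule and the nonnegativity/boundedness properties of mutual information, and does not use the assumption $\epsilon < \sI(g(X);f(X))$ (which is only a regularity condition ensuring the bound is nontrivial, i.e., that $\cU^\epsilon_1(X)$ does not already equal the unconstrained maximum $\mathsf{H}(f(X))$). The finite-range hypothesis on $f$ and $g$ is used only to guarantee that all entropies appearing in the manipulations are finite, so that the chain rule applies without measure-theoretic subtleties.
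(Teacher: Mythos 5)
Your proof is correct: the first inequality is indeed just monotonicity of the supremum over nested feasible sets, and the second follows from the two chain-rule expansions of $\sI\big(f(X),g(X);M(X)\big)$ together with the constraint $\sI\big(M(X);g(X)\big)\leq\epsilon$ and the bound $\sI\big(f(X);M(X)|g(X)\big)\leq \mathsf{H}\big(f(X)|g(X)\big)$, which is valid since $f$ has finite range. The paper gives no proof of its own here, deferring entirely to the cited Lemma~8 of \cite{zamani2022bounds}; your self-contained argument is the standard one underlying that lemma, and your observation that the hypothesis $\epsilon<\sI\big(g(X);f(X)\big)$ is not actually used for the upper bound is accurate.
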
 

From Proposition~\ref{prop:max_utility}, we observe that the maximum utility which could be expected from any $\epsilon$-MI PP mechanism (regardless of the accessibility to $g(X)$) is at most $\mathsf{H}\big(f(X)|g(X)\big) + \epsilon$. Since $\epsilon$-PP implies $\epsilon$-MI PP, all PP mechanisms are included in the optimization and we obtain a maximal utility bound for them as well. In particular, the bound shows that the increase in utility is at most linear in $\epsilon$.

\medskip
Through the relation to MI PP, we can also show the existence of MI PP mechanisms that achieve a certain utility lower bound. 
\begin{proposition}[Theorem~2 of \cite{zamani2022bounds}]\label{prop:min_utility}
For any $0 \leq \epsilon < \sI(g(X);f(X))$, 
we have 
\[  \cU^\epsilon_1(X) \geq   L_1^\epsilon \vee  L_2^\epsilon,\]
where 
$L_1^\epsilon  = \mathsf{H}\big(f(X)|g(X)\big) -\mathsf{H}\big(g(X)|f(X)\big) + \epsilon$ and 
$L_2^\epsilon= \mathsf{H}\big(f(X)|g(X)\big) -\alpha \mathsf{H}\big(g(X)|f(X)\big) + \epsilon -(1-\alpha) \big( \log( \sI(g(X);f(X))+1) +4\big)$ with $\alpha= \epsilon/ \mathsf{H}(g(X))$.
\end{proposition}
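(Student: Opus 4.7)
The plan is to establish the two lower bounds $L_1^\epsilon$ and $L_2^\epsilon$ separately, each by exhibiting a feasible mechanism in the privacy set $\{P_{M(X)|f(X),g(X)}:\, \sI(M(X);g(X))\leq\epsilon\}$ whose utility $\sI(f(X);M(X))$ meets or exceeds the corresponding value; the $\vee$ in the statement then follows by taking the better of the two mechanisms. Since the proposition replicates Theorem~2 of \cite{zamani2022bounds} once one identifies $g(X)$ with their secret, $f(X)$ with their useful random variable, and $M(X)$ with their released variable, the cleanest route is to verify that our specialized setup ($\Theta=\{P_X\}$, $\cG=\{g\}$, $\cW=\emptyset$, finite ranges of $f$ and $g$) satisfies the hypotheses there and to import their constructions.

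For $L_1^\epsilon$, I would first simplify using the identity $\mathsf{H}(f(X)|g(X))-\mathsf{H}(g(X)|f(X))=\mathsf{H}(f(X))-\mathsf{H}(g(X))$, rewriting the target utility as $\mathsf{H}(f(X))-\mathsf{H}(g(X))+\epsilon$. The natural mechanism releases $f(X)$ through a stochastic channel chosen so that $\sI(M(X);g(X))$ saturates the budget $\epsilon$. A concrete realization is obtained via a Wyner-type common-information decomposition of $f(X)$ into a $g$-correlated component and a $g$-independent component: the mechanism reveals the independent part deterministically and the correlated part through a bounded-leakage channel calibrated so that the induced information about $g(X)$ equals exactly $\epsilon$. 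Combining the chain rule for mutual information with the nonnegativity of $\sI(f(X);M(X)|g(X))$ then yields the stated bound.

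For $L_2^\epsilon$, which can exceed $L_1^\epsilon$ when $\mathsf{H}(g(X)|f(X))$ is large, I would use a time-sharing / mixture construction with weight $\alpha=\epsilon/\mathsf{H}(g(X))$: with probability $\alpha$ release $f(X)$ in the clear (contributing $\alpha\mathsf{H}(f(X))$ to utility and at most $\alpha\mathsf{H}(g(X))=\epsilon$ to leakage by choice of $\alpha$), and with probability $1-\alpha$ release a coarsened summary of $f(X)$ designed to be nearly independent of $g(X)$. Convexity of mutual information in the channel keeps the total leakage within $\epsilon$, while the additive penalty $(1-\alpha)\bigl(\log(\sI(g(X);f(X))+1)+4\bigr)$ accounts for the information-theoretic price of the coarsening and is controlled via a Fano/list-decoding estimate applied to the second branch.

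The main obstacle is reproducing the precise constants in $L_2^\epsilon$, particularly the additive $4$ and the $\log(\sI(g(X);f(X))+1)$ term, which stem from the exact coarsening and decoding procedure used in \cite{zamani2022bounds}. Since that argument is already developed there and our PP setup specializes exactly to their single-distribution information-leakage framework (thanks to $|\Theta|=1$ and $\cW=\emptyset$), I would not re-derive these constants but instead verify hypothesis-by-hypothesis that Theorem~2 of \cite{zamani2022bounds} applies directly, yielding the stated lower bound on $\cU^\epsilon_1(X)$.
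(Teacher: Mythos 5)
Your bottom line—verify that the specialized setup ($\Theta=\{P_X\}$, $\cW=\emptyset$, finite ranges) matches the information-leakage framework of \cite{zamani2022bounds} and then import Theorem~2 wholesale—is exactly what the paper does: it offers no independent proof of this proposition and simply cites that result. Your speculative middle paragraphs do not reflect how \cite{zamani2022bounds} actually constructs the achieving mechanisms (the bounds $L_1^\epsilon$ and $L_2^\epsilon$ come from extended versions of the functional representation lemma and the strong functional representation lemma of Li and El~Gamal—the $\log(\sI(g(X);f(X))+1)+4$ term is the SFRL signature—rather than a Wyner decomposition or a time-sharing/Fano argument), but since you explicitly defer to the cited theorem for those details, this does not affect the correctness of your approach.
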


\cref{prop:min_utility} implies that there exists an $\epsilon$-MI PP mechanism with utility of at least $ L_1^\epsilon \vee  L_2^\epsilon$. Notice that when $\mathsf{H}\big(g(X)|f(X)\big)=0$ (i.e., $g(X)$ is deterministic given $f(X)$), this lower bound is tight. Under this setting, combining Propositions \ref{prop:max_utility} and \ref{prop:min_utility}, one may deduce that there exists an $\epsilon$-MI PP mechanism with utility $ \cU^\epsilon_1(X) = \mathsf{H}\big(f(X)|g(X)\big) + \epsilon$. We stress, however, that this is merely an existence claim that does not reveal how to design a mechanism that achieves this maximum utility.

\section{Proofs} \label{Sec:proof}

\subsection{Proof of Theorem \ref{thm:equivalence}}\label{proof:equivalence}
For the first implication, note that $\epsilon$-PP implies that
\[\sup_{\cA} \ \log \paren{\frac{\PP \big(M(X) \in \cA \big| \cR\big) }{\PP \big(M(X) \in \cA \big| \cT\big)}}  \leq \epsilon,\quad \forall \ (\cR,\cT) \in \cQ.\]
The left-hand side above is the infinite order R\'enyi divergence. By monotonicity of R\'enyi divergences w.r.t. their order \cite{EH14}, we have 
$\dkl \paren{P_{M(X)|\cR} \| P_{M(X)|\cT}} \leq \epsilon.$ Then,
\begin{align*}
\sI\big(g(X);M(X)|w(X)\big)&= \EE \sqparen{\dkl\big(P_{M(X)|g(X),w(X)} \big\| P_{M(X)|w(X)}\big)} \\
&\leq \EE \sqparen{\dkl \big(P_{M(X)|g(X),w(X)} \big\| P_{M(X)|g(X)',w(X)}\big)}\numberthis\label{EQ:KL_MI}
\end{align*}
where the inequality uses convexity of KL divergence, with $g(X)'$ as an i.i.d. copy of $g(X)$. Recalling that under the structured PP framework secret pairs are $\big(A_{g,w}(a,c),A_{g,w}(b,c)\big)$, with $A_{g,w}(a,c)=\big\{g(X)=a,\,w(X)=c\big\}$, $\epsilon$-MI PP follows by the KL divergence bound. 

\medskip
Assuming $\Theta= \cP(\cX^{n \times k})$, the second implication follows by the minimax redundancy capacity theorem~\cite{CoverThomas06}, which gives
\[\sup_{\Theta} \sI \big(g(X); M(X)|w(X)=c \big)\leq\epsilon\ \implies\ \inf_Q \max_{a} \dkl\big(P_{M(X)|g(X)=a, w(X)=c}\big\| Q\big)\leq\epsilon.\]
Let $Q^{\star}$ achieve the infimum on the RHS. Since $M$ is $\epsilon$-MI PP by assumption, we have $\dkl \big(P_{M(X)|g(X)=a, w(X)=c} \big\| Q^{\star}\big) \leq \epsilon$, for all $a\in\mathrm{Im}(g)$. Applying Pinsker's inequality 
together with the triangle inequality, we obtain
\[\big\| P_{M(X)|g(X)=a,w(X)=c}-  P_{M(X)|g(X)=b,w(X)=c}\big\|_{\mathsf{TV}} \leq \sqrt{2 \epsilon},\]
which implies that $M$ is $(0,\sqrt{2\epsilon})$-PP and hence $(\epsilon',\sqrt{2\epsilon})$-PP (recall Definition~\ref{def:pp_def} for $\epsilon' >0$.

\medskip
For the third implication, in both part we rely on an adaptation of Property 3 from \cite{CY16} from DP (as considered therein) to PP. Specifically, that $(\epsilon,\delta)\textnormal{-PP}$ implies $(0,\delta^{'})\textnormal{-PP}$, with $\delta^{'}=1-{2(1-\delta)}/{(e^\epsilon +1)}$. Having that, for Part (1), we follow the argument from the proof of \cite[Lemma~3]{CY16} to show that if $\big|\supp\big(M(X)\big)\big|<\infty$ or $\max_{g\in\cG}| \mathrm{Im}(g)|<\infty$, then $(0,\delta)$-PP implies $\epsilon^{\star}$-MI PP with $\epsilon^{\star}$ as stated in Theorem~\ref{thm:equivalence}.

For Part (2), we recall that $(0,\delta^{'})\textnormal{-PP}$ is equivalent to the TV bound \[\big\| P_{M(X)|g(X)=a,w(X)=c}-  P_{M(X)|g(X)=b,w(X)=c}\big\|_{\mathsf{TV}} \leq \delta',\quad \forall (a,b)\in\mathrm{Im}(g),\ c\in\mathrm{Im}(w),\]
and then control the mutual information term of interest by the above TV norm. To obtain the first component of the $\epsilon^\star$ expression, we use the reverse Pinsker inequality from \cite[Theorem~1]{sason2015reversePinsker} to translate the above TV bound into the following bound the KL divergence:
\[
\dkl \big(P_{M(X)|g(X)=a,w(X)=c} \big\| P_{M(X)|g(X)'=b,w(X)=c}\big)\leq 
    \frac{\delta'}{2}\sup_{\substack{P_X \in \Theta,\\(g,w)\in\cG\times\cW:\, g \sim w,\\ a,b \in \mathrm{Im}(g),\, c \in \mathrm{Im}(w)}} \left( \frac{\log \big(\alpha_{a,b,c}^{-1}\big)}{1-\alpha_{a,b,c}} - \beta_{a,b,c} \right)=:\epsilon^\star_1.
\]
Having that, we bound the mutual information as in \eqref{EQ:KL_MI} to obtain $\epsilon^\star_1$-MI PP. For the second component of the $\epsilon^\star$ expression, we use the following Lemma which follows directly from Corollary~12 of \cite{xu2020continuity}.

\begin{lemma} \label{lem:Bounded Densities}
If joint density $f_{M(X),g(X),w(X)}$ exists, then we have 
\begin{align}
   & \sI\big( g(X);M(X)|w(X)=c \big) 
   \leq \gamma(c\,;g,w,P_X) \EE \left[ \|P_{M(X)|w(X)=c} -P_{M(X)|w(X)=c,g(X)} \|_{\tv} \right], \label{eq: bounded densities MI bound}
\end{align}
where 
\begin{align*}
    \gamma(c\,;g,w,P_X)
    &=\sup_{a \in \mathrm{Im}(g)} \log\left(\frac{\sup_{z \in \supp(M(X))}f_{M(X),g(X),w(X)}(z,a,c)}{\inf_{z' \in \supp(M(X))}f_{M(X),g(X),w(X)}(z',a,c)} \right). 
\end{align*}

\end{lemma}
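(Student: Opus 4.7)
The plan is to decompose the conditional mutual information as an expectation of KL divergences and then apply a reverse-Pinsker-type continuity estimate from \cite{xu2020continuity} to each KL term. Specifically,
\[
\sI\big(g(X);M(X)\,\big|\,w(X)=c\big)\ =\ \EE\left[\dkl\big(P_{M(X)|g(X),w(X)=c}\,\big\|\,P_{M(X)|w(X)=c}\big)\right],
\]
where the expectation is taken over $g(X)$ conditioned on $\{w(X)=c\}$. Corollary~12 of \cite{xu2020continuity} provides a bound of the form $\dkl(P\|Q)\le \lambda\,\|P-Q\|_{\tv}$ whenever the log Radon-Nikodym derivative $\log(dP/dQ)$ is essentially bounded in absolute value by~$\lambda$. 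I would apply this with $P=P_{M(X)|g(X)=a,w(X)=c}$ and $Q=P_{M(X)|w(X)=c}$, treating $a\in\mathrm{Im}(g)$ as a parameter.

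The next step is to identify a valid $\lambda$ in terms of the joint density. Because $f_{M(X)|w(X)=c}(z)$ is a convex mixture of the conditionals $\{f_{M(X)|g(X)=a',w(X)=c}(z)\}_{a'}$ against $P_{g(X)|w(X)=c}$, it is pointwise sandwiched between their infimum and supremum over $a'$. As a result, the ratio $f_{M(X)|g(X)=a,w(X)=c}(z)/f_{M(X)|w(X)=c}(z)$ is controlled by $\sup_{z}f_{M(X),g(X),w(X)}(z,a,c)/\inf_{z'}f_{M(X),g(X),w(X)}(z',a,c)$; the marginal $f_{g(X),w(X)}(a,c)$ cancels in numerator and denominator, which is precisely why the quantity inside the logarithm in the definition of $\gamma(c;g,w,P_X)$ can be written directly in terms of the joint density. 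Taking the logarithm and then supremizing over $a\in\mathrm{Im}(g)$ yields an upper bound on $\lambda$ that is uniform in $a$, namely $\gamma(c;g,w,P_X)$.

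Because this constant no longer depends on $a$, it can be pulled out of the expectation over $g(X)$, reducing the expected KL to $\gamma(c;g,w,P_X)\cdot\EE\big[\|P_{M(X)|g(X),w(X)=c}-P_{M(X)|w(X)=c}\|_{\tv}\big]$, which is exactly the asserted inequality. The main technical hurdle is verifying the hypotheses of Corollary~12 hold almost surely: one must check that the support of $P_{M(X)|w(X)=c}$ contains those of the conditionals $P_{M(X)|g(X)=a,w(X)=c}$, so that $Q$-a.e. the log-density ratio is well defined, and that the pointwise sup/inf bounds supplied by the mixture representation above genuinely dominate the essential supremum required by the cited result. Both of these reduce to routine measurability arguments under the standing assumption that $f_{M(X),g(X),w(X)}$ exists, so no deeper ingredient beyond \cite{xu2020continuity} is needed.
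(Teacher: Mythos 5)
Your high-level plan (decompose the conditional mutual information and convert a TV bound into a KL/entropy bound with a log-density-ratio constant) is reasonable, but two of your concrete steps do not deliver the constant $\gamma(c\,;g,w,P_X)$ as stated, so the proof has a genuine gap. First, the inequality you attribute to Corollary~12, namely $\dkl(P\|Q)\le\lambda\|P-Q\|_{\tv}$ with $\lambda$ an essential bound on $|\log(dP/dQ)|$, is false as written: on a two-point space with $P=(0.9,0.1)$ and $Q=(0.5,0.5)$ one has $\lambda=\log 1.8\approx0.588$ and $\|P-Q\|_{\tv}=0.4$, yet $\dkl(P\|Q)\approx0.368>\lambda\|P-Q\|_{\tv}\approx0.235$; the elementary symmetrization argument only gives $\dkl(P\|Q)\le\dkl(P\|Q)+\dkl(Q\|P)\le 2\lambda\|P-Q\|_{\tv}$. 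Second, and more importantly, your identification of $\lambda$ with $\gamma$ fails. The mixture sandwich gives $f_{M(X)|w(X)=c}(z)\ge\inf_{a'}f_{M(X)|g(X)=a',w(X)=c}(z)$, where the infimum is over $a'$ at the fixed point $z$; this is \emph{not} bounded below by $\inf_{z'}f_{M(X)|g(X)=a,w(X)=c}(z')$, because another component $a''\neq a$ may dip below the minimum of the $a$-conditional at that $z$. The uniform bound your argument actually yields is the cross-index spread $\sup_{a,z}f(z,a,c)/\inf_{a',z'}f(z',a',c)$, whose logarithm can be as large as $2\gamma$. Combined with the missing factor of $2$ above, your route proves the lemma only with $\gamma$ replaced by a constant that can be as large as $4\gamma$.

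The bound with the constant $\gamma$ exactly as stated comes out of a different (and simpler) decomposition, which is what the cited corollary encapsulates: write
\[
\sI\big(g(X);M(X)|w(X)=c\big)=\EE_{a\sim P_{g(X)|w(X)=c}}\Big[\sh\big(P_{M(X)|w(X)=c}\big)-\sh\big(P_{M(X)|g(X)=a,w(X)=c}\big)\Big],
\]
and bound the first entropy via the variational form $\sh(P)=\inf_q\EE_P[-\log q]\le\EE_{P_{M(X)|w(X)=c}}\big[-\log f_{M(X)|g(X)=a,w(X)=c}\big]$ (Gibbs' inequality with the suboptimal density $q=f_{M(X)|g(X)=a,w(X)=c}$), so that each bracketed term is at most $\int\big(f_{M(X)|g(X)=a,w(X)=c}-f_{M(X)|w(X)=c}\big)\log f_{M(X)|g(X)=a,w(X)=c}$. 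Since integrating a function against a signed measure of total mass zero is invariant to additive constants, this is at most $\|P_{M(X)|g(X)=a,w(X)=c}-P_{M(X)|w(X)=c}\|_{\tv}$ times the oscillation $\sup_z\log f_{M(X)|g(X)=a,w(X)=c}(z)-\inf_{z'}\log f_{M(X)|g(X)=a,w(X)=c}(z')$, which equals the log-ratio of the joint density (the marginal $f_{g(X),w(X)}(a,c)$ cancels) and is bounded by $\gamma(c\,;g,w,P_X)$ uniformly in $a$. Taking the expectation over $a$ gives \eqref{eq: bounded densities MI bound}. Note that this argument never forms the density ratio against the mixture $f_{M(X)|w(X)=c}$, which is precisely where your constant degrades.
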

\cref{lem:Bounded Densities} together with the fact that $(\epsilon,\delta)\textnormal{-PP} \implies (0,\delta^{'})\textnormal{-PP}$ implies MI PP with parameter \[\epsilon^\star_2:=\delta'\sup_{\substack{P_X \in \Theta,\\(g,w)\in\cG\times\cW:\, g \sim w,\\c\in\mathrm{Im}(w)}}\gamma(c\,;g,w,P_X).\]
Taking $\epsilon^\star=\epsilon^\star_1\wedge \epsilon^\star_2$ yields the result.

\subsection{Proof of Theorem \ref{thm:MIPP_properties}}\label{proof:MIPP_properties}

For (1), let $I$ be a $k$-ary categorical random variable with the probabilities $p_1,\ldots,p_k$. Then
\begin{align*}
  \mathsf{I}\big(g(X);M(X)\big|w(X)\big)&\leq  \mathsf{I}\big(g(X);M(X),I\big|w(X)\big)\\
  &\stackrel{(a)}= \mathsf{I}\big(g(X);M(X)\big|w(X),I\big) \\
  &\stackrel{(b)}=\sum_{i=1}^k p_i \ \mathsf{I}\big(g(X);M_i(X)\big|w(X)\big) 
\end{align*}
where (a) and (b) follow from the independence of $I$ and  $\big(X,M_1(X),\ldots, M_k(X)\big)$. The claim now follows since $M_1,\ldots, M_k$ satisfy $\epsilon$-MI PP.

\medskip
Part (2) directly follows from the chain rule of mutual information \[\mathsf{I}\big(g(X);M^k\big|w(X)\big) = \sum_{i=1}^k \mathsf{I}\big(g(X);M_i(X)\big|w(X),M_1,..,M_{i-1}\big),\]
while Part (3) is a consequence of the data processing inequality \[\mathsf{I}\big(g(X);A(M(X))\big|w(X)\big) \leq \mathsf{I}\big(g(X);M(X)\big|w(X)\big).\]

\subsection{Proof of Proposition~\ref{thm:general_nonAdaptivityMIPP}} \label{Proof:thm:general_nonAdaptivityMIPP}
First use induction to prove that 
\begin{equation}
 \sI\big(g(X);M^k(X)|w(X)\big) \leq \sum_{i=1}^m \epsilon_i + \eta',\label{eq:induction_goal}
\end{equation}
where $\eta'= \sum_{i=2}^k \sI\big(M_i(X);M_1(X),\ldots,M_{i-1}(X)|w(X),g(X)\big)$. Given this inequality, supremizing over $P_X\in\Theta$ and $\cG\ni g\sim w\in\cW$ yields the result of \cref{thm:general_nonAdaptivityMIPP}. 

For $k=2$, consider
\begin{align}
    \sI\big(M_1(X),M_2(X);g(X)|w(X)\big)
    \nonumber & = \sI\big(M_1(X);g(X)|w(X)\big) + \sI\big(M_2(X);g(X)|w(X),M_1(X)\big) \\
    & \leq \epsilon_1 + \sI\big(M_2(X);g(X)|w(X),M_1(X)\big), \label{k=2 proof}
\end{align}
where the last step uses the fact that $M_1$ is $\epsilon_1$-MI PP. For the second above, we have
\begin{align*}
    \sI\big(M_2(X);g(X)&\big|w(X),M_1(X)\big)\\
    &=\sh\big(M_2(X)|w(X),M_1(X) \big) -\sh\big(M_2(X)|w(X),M_1(X), g(X) \big) \\
    & \leq \sh\big(M_2(X)\big|w(X) \big) -\sh\big(M_2(X)\big|w(X),M_1(X), g(X) \big) 
    \pm \sh\big(M_2(X)\big|w(X),g(X) \big) \\ 
    & = \sI\big(M_2(X);g(X)\big|w(X)\big) + \sI\big(M_1(X);M_2(X)\big|w(X),g(X)\big) \\
    & \leq \epsilon_2 + \sI\big(M_1(X);M_2(X)\big|w(X),g(X)\big) ,
\end{align*}
where the last step is since $M_2$ is $\epsilon_2$-MI PP. Collecting the terms, proves the claim for $k=2$. 

\medskip
Assume that the result holds for $k=m$, i.e., 
\begin{align}
  \sI\big(g(X);M^m(X)|w(X)\big) \leq \sum_{i=1}^m \epsilon_i + \eta'' \label{k=m case}
\end{align} 
with $\eta'' = \sum_{i=2}^m \sI\big(M_i(X);M_1(X),\ldots,M_{i-1}(X)|w(X),g(X)\big)$, and consider the following for $k=m+1$:
\begin{align*}
     \sI\big(g(X);M^{m+1}(X)|w(X)\big) &= \sI\big(g(X);M^{m}(X)|w(X)\big) + \sI\big(M_{m+1}(X); g(X)| M^m(X), w(X)\big) \\
     & \stackrel{(a)}\leq \sum_{i=1}^m \epsilon_i + \eta'' + \sI\big(M_{m+1}(X); g(X)| M^m(X), w(X)\big) \\
     & \stackrel{(b)}\leq \sum_{i=1}^m \epsilon_i + \eta'' + \epsilon_{m+1} +  \sI\big(M_{m+1}(X); g(X)| M^m(X), w(X)\big),
\end{align*}
where (a) uses the induction assumption, while (b) follows since $M_{m+1}$ is $\epsilon_{m+1}$-MI PP. This establishes \eqref{eq:induction_goal} and the proof is concluded by supermizing the RHS over $P_X\in\Theta$ and $\cG\ni g\sim w\in\cW$.

{
\subsection{Proof of \cref{prop:Universal composability of MI PP}} \label{proof: Universal composability of MI PP}
\underline{Part (i):} We prove this part by induction. Fix $P_X \in \Theta$ and  $\cG\ni g\sim w\in\cW$. For $k=2$, we use the chain rule together with the fact that $M_1$ is $\epsilon_1$-MI PP to first obtain
\[\sI\big(g(X);M_1(X),M_2(X)\big|w(X) \big)  \leq\epsilon_1 + \sI\big(g(X);M_2(X)\big|w(X),M_1(X) \big).\]
Then, notice that under the assumption that $\Theta\subseteq\Theta_\mathsf{UC}$, for any $P_X\in\Theta$, we have
\begin{align*}
  \sI\big(g(X);M_2(X)\big|w(X),M_1(X) \big)  & \leq \sh\big(M_2(X)\big|w(X)\big) -  \sh\big(M_2(X)\big|w(X), g(X),M_1(X)\big) \\
  &=   \sI\big(g(X);M_2(X)\big|w(X)\big)\\
  &\leq \epsilon_2,
\end{align*}
where the penultimate equality follows since $M_2(X)\leftrightarrow\big(g(X),w(X)\big)\leftrightarrow M_1(X)$ forms a Markov chain whenever $P_X$ is a UC distribution, while the last step is due to $M_2$ satisfying $\epsilon_2$-MI PP. Combining the above, the statement for $k=2$ follows.

\medskip
Assume that for $k=m$, we have 
\[\sI\big(g(X);M^m(X)\big|w(X)\big) \leq \sum_{i=1}^m \epsilon_i.\]
For $k=m+1$, consider 
\[
   \sI\big(g(X);M^{m+1}(X)\big|w(X)\big)  =  \sI\big(g(X);M^{m}(X),M_{m+1}\big|w(X)\big) \leq   \sum_{i=1}^m \epsilon_i + \epsilon_{m+1},
\]
where the inequality follows from the $k=2$ case applied to the mechanisms $(M^m,M_{m+1})$. By induction, we deduce that 
\[\sI\big(g(X);M^k(X)|w(X)\big) \leq \sum_{i=1}^k \epsilon_i,\] 
and the claim follows by supremizing the LHS over $P_X \in \Theta$ and  $\cG\ni g\sim w\in\cW$.

\medskip
\underline{Part (ii):} We again use induction. Fix $P_X \in \Theta$ and  $\cG\ni g\sim w\in\cW$, and for $k=2$ first note that
\begin{align*}
    &\sI\big(g(X);M_1(X),M_2(X)\big|w(X)=c\big) \\
    &\leq \int  \mspace{-3mu}\dkl  \big(P_{M_1(X),M_2(X)|g(X)=a,w(X)=c} \big\| P_{M_1(X),M_2(X)|{g}(X)=\tilde{a},w(X)=c} \big)  \dd P_{g(X)|w(X)}(a |c)  \dd P_{g(X)|w(X)}(\tilde{a} |c)
\end{align*}
which follows by convexity of the KL divergence. Observe that 
\[P_{M_1(X),M_2(X)|g(X),w(X)}(\cdot|a,c)=\int_\cX P_{M_1(X),M_2(X)|X}(\cdot|x)\dd P_{X|g(X),w(X)}(x|a,c)
\]
(which uses the conditional independence of $\big(M_1(X),M_2(X)\big)$ from $\big(g(X),w(X)\big)$ given $X$ itself) and leverage convexity once more to bound
\begin{align*}
    &\dkl  \big(P_{M_1(X),M_2(X)|g(X)=a,w(X)=c} \big\| P_{M_1(X),M_2(X)|{g}(X)=\tilde{a},w(X)=c} \big)\\
    &\leq \int  \mspace{-3mu}\dkl  \big(P_{M_1(X),M_2(X)|X=x} \big\| P_{M_1(X),M_2(X)|X=\tilde x} \big)  \dd P_{X|g(X),w(X)}(x|a,c)  \dd P_{X|g(X),w(X)}(\tilde{x}|\tilde{a},c)\\
    &\leq \int  \Big[\dkl\big(P_{M_1(x)}\big\|P_{M_1(\tilde x)}\big)+\dkl\big(P_{M_2(x)}\big\|P_{M_2(\tilde x)}\big)\Big]\dd P_{X|g(X),w(X)}(x|a,c)  \dd P_{X|g(X),w(X)}(\tilde{x}|\tilde{a},c)
\end{align*}
where the last step is due to the conditional independence of $M_1(X)$ and $M_2(X)$ given $X$ and tensorization of KL divergence. Recall that by definition of UC distributions, for any $(x,\tilde x)\in\supp(P_{X|g(X)=a,w(X)=c})\times \supp(P_{X|g(X)=\tilde a,w(X)=c})$ we have $\lambda \delta_x+(1-\lambda) \delta_{\tilde x}\in\Theta_{\mathsf{UC}}$ for $\lambda \in(0,1)$. Since Part (ii) assumes $\Theta_{\mathsf{UC}}\subseteq\Theta$, the fact that $M_i$, for $i=1,2$, is $\epsilon_i$-PP in the framework with distribution class $\Theta$, we obtain $\dkl\big(P_{M_i(x)}\big\|P_{M_i(\tilde x)}\big)\leq \epsilon_i$. Inserting this into the bounds above yields
\[\sI\big(g(X);M_1(X),M_2(X)\big|w(X)\big) \leq  \epsilon_1 + \epsilon_2,\]
which is the desired claim for $k=2$. 

\medskip
The induction assumption for $k=m$ reads as
\[\sI\big(g(X);M^n(X)\big|w(X)\big) \leq \sum_{i=1}^n \epsilon_i, \]
and for $k=m+1$, we have
\[
    \sI\big(g(X);M^{m+1}(X)\big|w(X)\big) =  \sI\big(g(X);M^{m}(X), M_{m+1}(X)\big|w(X)\big) \leq     \sum_{i=1}^{m+1} \epsilon_i, 
\]
where the inequality follows by the $k=2$ case as before. We conclude again by induction and taking the appropriate supremum.

}

\subsection{Proof of Theorem~\ref{thm:LaplaceMechanismMIPP}}\label{Proof: Laplace mechanism MIPP}
Let $Z_{\mathsf{L}}=(Z_1,\ldots, Z_d)$, with $Z_j\sim \mathrm{Lap}(0,b)$ i.i.d. We have
\begin{align*}
   \sh\big(f(X)+Z_\mathsf{L}\big|w(X)\big)&\stackrel{(a)} \leq \sum_{j=1}^d  \int \sh\big(f_j(X)+Z_j-m_j(c)\big|w(X)=c\big) d P_{w(X)}(c) \\ 
   & \stackrel{(b)}\leq \sum_{j=1}^d \int \log\Big(2e\EE\big[|f_j(X)-m_j(c)+Z_j|\big|w(X)=c\big] \Big)dP_{w(X)}(c)  \\
    & \stackrel{(c)}\leq  d \log \left(2e  \left(\sum_{j=1}^d \frac{1}{d}\, \EE \left[\sqrt{\mathrm{Var}\big(f_j(X) \big|w(X)\big)}\right] + b \right)\right),\numberthis\label{eq:laplace_proof}
\end{align*}
where (a) uses the chain rule, the fact that conditioning cannot increase differential entropy, and its translation invariance with $m_j(c):=~\EE[f_j(X)|w(X)=c]$; (b) is because the Laplace distribution maximizes differential entropy subject to an expected absolute deviation constraint; 
while (c) follows from Jensen's inequality, along with $\EE [|Z_j|]=b$ and $\EE[|X|]^2 \leq \EE[X^2]$.

Combining \eqref{eq:laplace_proof} with $\sh\big(f(X)+Z_{\mathsf{L}}\big|g(X),w(X)\big)\geq \sh(Z_{\mathsf{L}}) = d \log(2be)$ yields: 
\begin{align*}
    \sI \big(g(X);M_\mathsf{L}(X)\big|w(X)\big) &= \sh\big(f(X)+Z_\mathsf{L}\big|w(X)\big)- \sh\big(f(X)+Z_{\mathsf{L}}\big|g(X),w(X)\big) \\
    & \leq d \log \left(2e  \left(\sum_{j=1}^d \frac{1}{d}\, \EE \left[\sqrt{\mathrm{Var}\big(f_j(X) \big|w(X)\big)}\right] + b \right)\right)- d\log (2be). \numberthis \label{Laplace last bound}
\end{align*}
To conclude, further upper bound \eqref{Laplace last bound} by $\epsilon$ and solve for~$b$.

\subsection{Proof of Corollary~\ref{cor:LaplaceMIDP}}\label{Sec: Proof of Cor LaplaceMIDP}
Using the fact that $\mathrm{Var}(Z)=\frac{1}{2} \EE \big[(Z-Z')^2 \big]$ for $Z'$ an i.i.d. copy of $Z$, we have
\begin{align*}
\sum_{j=1}^{d} \mathrm{Var}\left(f_j(X) \middle| \big(X(k,\cdot)\big)_{k\neq i} \mspace{-5mu}=\mspace{-2mu}\big(x(k,\cdot)\big)_{k\neq i} \right)&  =\frac{1}{2}\sum_{k=1}^{d}  \EE \left[\big(f_j(X)-\tilde{f}_j(X) \big)^2 \middle |\big(X(k,\cdot)\big)_{k\neq i} \mspace{-5mu}=\mspace{-2mu}\big(x(k,\cdot)\big)_{k\neq i}\right] \\
& =\frac{1}{2}\, \EE\left[\sum_{j=1}^d \big(f_j(X) -\tilde{f}_j(X) \big)^2 \middle|\big(X(k,\cdot)\big)_{k\neq i} \mspace{-5mu}=\mspace{-2mu}\big(x(k,\cdot)\big)_{k\neq i} \right] \\
& \leq  \frac{\Delta^2_2(f)}{2}, \numberthis\label{eq:sensitivity l2 DP bound}
\end{align*}
where the last step comes from the definition of $\ell^2$-~sensitivity. Insert \eqref{eq:sensitivity l2 DP bound} into the entropy bound from \eqref{eq:laplace_proof} and use the fact that  $\Delta_2(f) \leq  \Delta_1(f)$ to obtain
\[\mathsf{I}\big(g(X);M_\mathsf{L}(X)\big|w(X)\big)\leq d \log \left(2e  \left(\frac{\Delta_1(f)}{\sqrt{2}} + b \right)\right)- d\log (2be).\]
To conclude, proceed as in the proof of Theorem~\ref{thm:LaplaceMechanismMIPP} to find the parameter $b$.

\subsection{Proof of Theorem \ref{thm:GaussianMechanismMIPP}}\label{proof:GaussianMechanismMIPP}
We follow a similar argument to that in the proof of Theorem~\ref{thm:LaplaceMechanismMIPP}. Denote the conditional covariance matrix of $f(X)$ given $\{w(X)=c\}$ by $\Sigma_{f|w=c}$ and consider
\begin{align*}
\sh\big(f(X)+Z_{\mathsf{G}}\big|w(X)\big)&\stackrel{(a)}\leq  \int { \frac{1}{2} \log \big ((2 \pi e)^d \,\mathrm{det}(\Sigma_{f|w=c}+\sigma^2 \mathrm{I}_d ) \big)} d P_{w(X)}(c) \\ & \stackrel{(b)} \leq \int \frac{1}{2} \log \left( (2 \pi e)^d \prod_{j=1}^d \big (a_j(c)+ \sigma^2\big) \right) d P_{w(X)}(c) \\
& \stackrel{(c)}\leq \frac{d}{2}\int  \log \left( 2 \pi e \left( \frac{1}{d}\sum_{j=1}^d \mathrm{Var}\big(f_j(X) \big|w(X)=c\big)+ \sigma^2\right) \right) d P_{w(X)}(c) \\
& \stackrel{(d)} \leq  \frac{d}{2} \log \left( 2 \pi e \left( \frac{1}{d}\sum_{j=1}^d \EE \Big[\mathrm{Var}\big(f_j(X) \big|w(X)\big)\Big]+ \sigma^2\right) \right) ,\numberthis\label{eq:Gaussian_proof}
\end{align*}
where (a) follows from the Gaussian distribution maximizing differential entropy subject to a variance constant, with $|K|$ denoting the determinant of $K$; (b) denotes $a_j(c):=\mathrm{Var}\big(f_j(X)|w(X)=c\big)$ and uses $|K| \leq \prod_{j=1}^d K(j,j)$, which applies to any positive semi-definite matrix; and (c)-(d) from concavity of $x\mapsto \log x$ and Jensen's inequality. 
 
Combining \eqref{eq:Gaussian_proof} with $\sh\big(f(X)+Z_{\mathsf{G}}\big|g(X),w(X)\big)\geq \sh(Z_{\mathsf{G}})=\frac{d}{2}\log(2\pi e \sigma^2)$ yields 
\[
\mathsf{I}\big(g(X);M_\mathsf{G}(X)\big|w(X)\big)\leq \frac{d}{2} \log \left( 2 \pi e \left( \frac{1}{d}\sum_{j=1}^d \EE \Big[\mathrm{Var}\big(f_j(X) \big|w(X)\big)\Big]+ \sigma^2\right) \right) - \frac{d}{2}\log(2\pi e \sigma^2).
\]
Upper bounding the RHS above by $\epsilon$ and solving for $\sigma^2$  concludes the proof.

\subsection{Proof of Corollary~\ref{cor:GaussianMechanismMIDP}}
We first insert the upper bound from \eqref{eq:sensitivity l2 DP bound} into \eqref{eq:Gaussian_proof} to obtain an $\ell^2$-sensitivity bound on $\sh\big(f(X)+Z_{\mathsf{G}}\big|w(X)\big)$. Combining this with $\sh\big(f(X)+Z_{\mathsf{G}}\big|X\big)\geq \sh(Z_{\mathsf{G}})=\frac{d}{2}\log(2\pi e\sigma^2)$ yields 
\[\mathsf{I}\big(X(i,\cdot);M_\mathsf{G}(X)\big|\big(X(k,\cdot)\big)_{k\neq i}\big)\leq \frac{d}{2} \log \left( 2 \pi e \left(  \frac{\Delta^2_2(f)}{2}+ \sigma^2\right) \right) - \frac{d}{2}\log(2\pi e \sigma^2) \]
Upper bounding the RHS above by $\epsilon$ and solving for~$\sigma^2$ produces the result. 

For the case when $\cX$ is compact and $f:\cX^{n\times k}\to \RR$ is of continuous, we apply the Popoviciu inequality for variance to obtain 
\begin{align*}
    \mathrm{Var}&\left(f_j(X) \middle| \big(X(k,\cdot)\big)_{k\neq i}=\big(x(k,\cdot)\big)_{k\neq i} \right)\\
    &\qquad\qquad\leq \frac{1}{4} \left( \sup_{x(i,\cdot)} f_j\left(x(i,\cdot),\big(x(k,\cdot)\big)_{k\neq i}\right) -\inf_{x(i,\cdot)} f_j\left(x(i,\cdot),\big(x(k,\cdot)\big)_{k\neq i}\right)  \right)^2\\
    &\qquad\qquad\leq \frac{\Delta^2_2(f)}{4}.
\end{align*}
Applying this relation and proceeding with the same argument as above leads to the desired result.

\subsection{Proof of Theorem~\ref{Thm: Gaussian mechanism with slicing and noise injection}}

Invoking Theorem~\ref{thm:GaussianMechanismMIPP} for the query function $g(X)=\rA^\intercal f(X)$, where $\rA=[\phi_1,\ldots, \phi_\ell]$, we obtain
\begin{equation} \label{eq:Gaussian slicing}
     \sigma^2 \geq \sup_{P_X \in \Theta,\, w \in \cW^\star} \frac{ \sum_{j=1}^\ell \EE \sqparen{\mathrm{Var}\big(\phi_j^\intercal f(X)\big|w(X)\big)}}{\ell(e^{\frac{2\epsilon}{\ell}}-1)}. 
\end{equation}
Recall that $\Sigma_{f|w}$ denotes the conditional covariance matrix of $f(X)$ given $w(X)$ by, and bound
\[\EE \sqparen{\mathrm{Var}\big(\phi_j^\tr f(X)\big|w(X)\big)}= \EE \big[\phi_j^\tr \Sigma_{f|w} \phi_j\big]
    \leq \EE\big[\|\Sigma_{f|w}\|_{\mathrm{op}} \|\phi_j\|^2 \big],\quad \forall j=1\,\ldots,\ell.
  \]
Combining the above with \eqref{eq:Gaussian slicing} shows the sufficiency of the variance parameter in Part (1) of the theorem. 

\medskip
For a random projection matrix $\rA=[\Phi_1, \ldots, \Phi_\ell]$, since $\Phi_j$ is centered and independent of $X$, we have $\EE\big[\phi_j^\tr f(X)\big|w(X)\big]=0$. Recalling the notation $\mu_{f|w}:=\EE\big[f(X)|w(X)\big]$, we consequently have 
\begin{align*}
    \EE\big[\mathrm{Var}\big(\Phi_j^\tr f(X)\big|w(X)\big)\big]&= \EE\Big[\Phi_j^\tr \EE\big[ f(X) f(X)^\tr\big|w(X)\big] \Phi_j\Big]\\
    &=\EE\Big[\Phi_j^\tr \big(\Sigma_{f|w} + \mu_{f|w}\mu_{f|w}^\tr\big)\Phi_j\Big]\\
    &\leq\EE\Big[\|\Sigma_{f|w}\|_{\mathsf{op}}+\|\mu_{f|w}\|^2\Big]
\end{align*}
where the last step uses $\EE\big[\|\Phi_j\|^2\big]=1$ and the fact that $\|a a^\tr\|_{\mathsf{op}}\leq \|a\|^2$, for any vector $a\in\RR^d$. Given the variance bound, we proceed as in the proof of the deterministic projection case to obtain the result.

\subsection{Proof of Theorem~\ref{Thm:GaussianMechanismMIPP_with entropy law}}

First, rewrite \eqref{eq:Gaussian_proof} in terms of $A$ given in \cref{Thm:GaussianMechanismMIPP_with entropy law}, to obtain 
\[\sh\big(f(X)+Z_{\mathsf{G}}\big|w(X)\big) \leq \frac{d}{2} \log \paren{2 \pi e \paren{  \frac{A}{d} + \sigma^2}}. \]
From entropy power inequality, we have 
\[ e^{\frac{2}{d}\sh(f(X)+Z_{\mathsf{G}}|g(X),w(X))} \geq e^{\frac{2}{d} \sh(f(X)|g(X),w(X))} + e^{\frac{2}{d} \sh(Z_{\mathsf{G}})}.\]
Noting that $\sh(Z_{\mathsf{G}})=0.5d \log(2 \pi e \sigma^2)$ and by the choice of $B$ in the statement of the theorem, we arrive at
\[\sh\big(f(X)+Z_{\mathsf{G}}\big|g(X),w(X)\big) \geq 0.5 d \log \big(2 \pi e ( B+ \sigma^2)\big),\]
which combined with the above yields
\[\sI\big(g(X);f(X)+Z_{\mathsf{G}}|w(X)\big) \leq \frac{d}{2} \log \paren{2 \pi e \paren{  \frac{A}{d} + \sigma^2}}- \frac{d}{2} \log \big(2 \pi e ( B+ \sigma^2)\big).\]
Upper bounding the RHS by $\epsilon$  and solving for $\sigma^2$ completes the proof. 

\subsection{Proofs related to Remark~\ref{rem:free_MIPP_regime}}\label{sec:Proofs related to Remark free MI PP}

We first show that $A \geq dB$ holds for any $P_X\in\cP(\cX^{n \times k})$ and functions $f$, $g\in\cG$, and $w\in\cW$ with $g\sim w$. Reusing the notation $\Sigma_{f|w}$ for the conditional covariance matrix of $f(X)$ given $w(X)$, we have
\begin{align*}
    \sh\big(f(X)|w(X),g(X)\big) &\leq \sh\big(f(X)|w(X)\big) \\
    &\stackrel{(a)}\leq \frac{1}{2}\, \EE \Big[ \log \left( (2\pi e)^d \mathrm{det}(\Sigma_{f|w})\right)\Big]\\
    & \stackrel{(b)}\leq \frac{1}{2}\, \EE \left[ \log \left( (2\pi e)^d  \prod_{j=1}^d \mathrm{Var}\big( f_j(X)\big|w(X)\big)\right)\right] \\
    & \leq \frac{d}{2}\,  \EE \left[ \log \left( \frac{2\pi e}{d}  \sum_{j=1}^d\mathrm{Var}\big( f_j(X)|w(X)\big)\right)\right] \\
    & \leq \frac{d}{2}\,  \log \left( \frac{2\pi e}{d}  \sum_{j=1}^d \EE \left[ \mathrm{Var}\big( f_j(X)|w(X)\big)\right]\right)
\end{align*}
where (a) is since the Gaussian distribution maximizing entropy under finite variance constraint; (b) uses $|K| \leq \prod_{j=1}^d K(j,j)$, which applies to any positive semidefinite matrix;
and the last two steps follow from Jensen's inequality. Substituting the above bound in place of $B$ in Theorem~\ref{Thm:GaussianMechanismMIPP_with entropy law} yields the result. 

\medskip 
Next, we show that when $d=1$ and $\big(f(X),g(X),w(X)\big)$ are jointly Gaussian, we indeed have $A \leq d \, e^{2\epsilon/d} B$ under the said correlation coefficient bound. Under this Gaussian setting, we have 
\[ A= \EE\big[ \mathrm{Var}\big(f(X)|w(X)\big)\big] = \mathrm{Var}\big(f(X)|w(X)=c\big),\quad \forall c \in \mathrm{Im}(w).\]
Similarly, joint Gaussianity of the involved variables implies 
\[ B= \mathrm{Var}\big(f(X)|w(X)=c, g(X)=a\big),\quad \forall(a,c) \in \mathrm{Im}({g})\times \mathrm{Im}(w).\]
Inserting this into the inequality $A \leq d \, e^{2\epsilon/d} B$ with $d=1$, while observing that 
\[\mathrm{Var}\big(f(X)\big|w(X)=c, g(X)=a\big)=\left( 1-  \left(\rho\left(f(X),g(X)\big|w(X)=c \right) \right)^2  \right) \mathrm{Var}\big(f(X)|w(X)=c\big) \]
completes the proof.

\subsection{Proof of Proposition~\ref{Prop:High dimensional mean estimation accuracy guarantees} } \label{Proof of Prop:High dimensional mean estimation accuracy guarantees}

We show that the estimate produced by Algorithm~\ref{alg:d large MI DP} satisfies $\epsilon$-MI DP, and establish that given $n \geq n_0$ samples, the estimator also achieves accuracy $\alpha$ with high probability. Denote $c:=\EE\big[\|X-\mu\|^2\big] < \infty$ and notice that the noisy mean estimates $(\tilde{\mu}_1,\ldots,\tilde{\mu}_m)$ produced during the $m$ rounds of Algorithm~\ref{alg:d large MI DP} are i.i.d. Further assume that $n=mk$ (otherwise, simple modifications of the subsequent argument using ceiling/floor operation may be needed). Furthermore denote $A_p:=\{(p-1)k+1,\ldots, pk \}$ for $p=1,\ldots,m$, and note that $\bigcup_{p=1}^m A_p=\{1, \ldots, n \}$. We first establish privacy of the mechanism and then analyze its accuracy. 

\medskip
\underline{Privacy analysis:} For privacy, we first show that each $\tilde{\mu}_p$, $p=1,\ldots,m$, is $\epsilon$-MI DP. Having that, we argue that $M^m(X)=(\tilde{\mu}_1, \ldots \tilde{\mu}_m)$ is private in the same sense via composition, and finally deduce the privacy of the geometric median $\hat{\mu}_n$ via post-processing (Property (2) of \cref{thm:MIPP_properties}). Consider $\tilde{\mu}_1$ and notice that it satisfies $\epsilon$-MI DP by Corollary~\ref{cor:GaussianMechanismMIDP}. Indeed, for each $i=1,\ldots,k$ and $\ell=1,\ldots, d$, we have
\[\mathrm{Var}\left(\tilde{\mu}_1 (\ell)\middle| \big(X_j\big)_{j \in A_1 \backslash \{i\}}\right) = \frac{1}{k^2} \mathrm{Var}\big(X_i(\ell)\big) \leq \frac{m^2 c}{n^2}, \]
where the first equality is since $X_1,\ldots,X_k$ are i.i.d. and the second uses the 2nd moment bound. Therefore,
\[\frac{\sum_{\ell=1}^d\mathrm{Var}\left(\tilde{\mu}_1 (\ell)\middle| \big(X_j\big)_{j \in A_1 \backslash \{i\}}\right)}{d(e^{\frac{2\epsilon}{d}}-1)}  \leq \frac{d m^2 c}{2n^2 \epsilon}, \]
where we have used $e^x\geq 1+x$. By  Corollary~\ref{cor:GaussianMechanismMIDP} we now see that the noise level stated in Algorithm~\ref{alg:d large MI DP} suffices to guarantee $\epsilon$-MI DP of $\tilde{\mu}_1$. By symmetry, the same hold for all $\tilde{\mu}_p$, $p=1,\ldots,m$.

Next, we show that $(\tilde{\mu}_1, \ldots \tilde{\mu}_m)$ is $\epsilon$-MI DP w.r.t. the entire database $(X_1,\ldots,X_n)$. Fix $i=1,\ldots,n$ and let $p(i)\in\{1,\ldots,m\}$ be such that $i \in A_{p(i)}$. Then consider
\begin{align*}
    \sI\big (X_i; \tilde{\mu}_1, \ldots ,\tilde{\mu}_m \big|(X_j)_{j\neq i} \big) &=  \sI\big(X_i; \tilde{\mu}_{p(i)} \big|\big(X_j\big)_{j\neq i} \big) + \sI\big(X_i; ( \tilde{\mu}_q)_{q\neq p(i)} \big|(X_j)_{j\neq i},\tilde{\mu}_{p(i)} \big) \\ 
    & \stackrel{(a)}= \sI\big(X_i; \tilde{\mu}_{p(i)} \big|(X_j)_{j\in A_{p(i)}\backslash \{i\}} \big)\\
   & \stackrel{(b)} \leq  \epsilon
\end{align*}
where (a) follows since 
$(X_i,\tilde{\mu}_{p(i)}) \leftrightarrow (X_j)_{j\in A_{p(i)} \backslash\{i\}} \leftrightarrow (X_j)_{j \notin A_{p(i)}}$ and $ X_i \leftrightarrow \big( (X_j)_{j\neq i},\tilde{\mu}_{p(i)} \big) \leftrightarrow (\tilde{\mu}_q)_{q\neq {p(i)}}$ form Markov chains, while (b) is since $\tilde{\mu}_{p(i)}$ satisfies $\epsilon$-MI DP. 
Maximizing the LHS above over $i=1,\ldots,n$ yields 
\[ \sup_{i=1,\ldots,n} \sI\big (X_i; \tilde{\mu}_1, \ldots ,\tilde{\mu}_m \big|(X_j)_{j\neq i} \big) \leq \epsilon, \]
which shows that $(\tilde{\mu}_1, \ldots ,\tilde{\mu}_m) $ is $\epsilon$-MI DP.
Recalling that post-processing preserves $\epsilon$-MI DP (cf. Property (3) of Theorem~\ref{thm:MIPP_properties}), we conclude that the geometric median of $(\tilde{\mu}_1, \ldots ,\tilde{\mu}_m)$ also satisfies $\epsilon$-MI DP.

\medskip
\underline{Accuracy analysis:} We first derive a lower bound on $k$ (the number of samples used to evaluate $\tilde{\mu}_p$ for each $p=1, \ldots, m$) such that the mean estimate $\tilde{\mu}_p$ is closer to true mean $\mu$ with high probability. Then, we invoke \cite[Theorem 3.1]{minsker2015geometric} on confidence boosting via geometric medians to 
argue that the geometric median of these $m$ estimates satisfies the accuracy level of accuracy stated in the theorem.

The following lemma states the lower bound on $k$ for a single iteration of Algorithm~\ref{alg:d large MI DP}, i.e., for each $1,\ldots,m$.
\begin{lemma}[Accuracy of a single iteration of Algorithm~\ref{alg:d large MI DP}] \label{lemma:Accuracy of one iteration d }
Fix $p=1,\ldots,m$ and suppose that \[ k \geq O\left(  \frac{d}{\alpha'^2}  + \frac{d}{\alpha' \sqrt{\epsilon}}   \right). \]
Then the mean estimate $\tilde{\mu}_p$ produced by the $p$th iteration of Algorithm~\ref{alg:d large MI DP} satisfies $\PP\big(\|\tilde{\mu}_p - \mu\| \leq \alpha'\big) \geq 0.8$. 
\end{lemma}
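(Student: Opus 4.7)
The plan is to split the error $\tilde\mu_p-\mu$ into a statistical part coming from the empirical average and a privacy part coming from the injected Gaussian noise, and then control each by a second-moment (Chebyshev/Markov) argument. Writing $\bar X_p:=\frac{1}{k}\sum_{i\in A_p}X_i$, one has $\tilde\mu_p-\mu=(\bar X_p-\mu)+Z_p$, and by the triangle inequality
\[
\|\tilde\mu_p-\mu\|\;\leq\;\|\bar X_p-\mu\|+\|Z_p\|.
\]
It thus suffices to show that each of the two terms on the right is at most $\alpha'/2$ with probability at least $0.9$, and then to use a union bound to get $0.8$.

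For the statistical term, the $X_i$ are i.i.d.\ with mean $\mu$ and $c:=\EE[\|X-\mu\|^2]<\infty$, so $\EE[\|\bar X_p-\mu\|^2]=c/k$. By Markov's inequality applied to $\|\bar X_p-\mu\|^2$, we get $\PP(\|\bar X_p-\mu\|>\alpha'/2)\leq 4c/(k\alpha'^2)$, which is at most $0.1$ as soon as $k\geq 40c/\alpha'^2=O(d/\alpha'^2)$ (using that for distributions of practical interest $c$ scales linearly in $d$, or absorbing $c$ into the $O(\cdot)$). For the noise term, recall from Algorithm~\ref{alg:d large MI DP} and $n=mk$ that $\sigma^2=d/(2k^2\epsilon)$, so $\EE[\|Z_p\|^2]=d\sigma^2=d^2/(2k^2\epsilon)$. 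Markov gives $\PP(\|Z_p\|>\alpha'/2)\leq 2d^2/(k^2\epsilon\alpha'^2)$, which is at most $0.1$ when $k\geq\sqrt{20}\,d/(\alpha'\sqrt{\epsilon})=O(d/(\alpha'\sqrt{\epsilon}))$.

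Combining both conditions via a union bound yields $\PP(\|\tilde\mu_p-\mu\|\leq\alpha')\geq 0.8$ whenever $k$ exceeds the maximum of the two thresholds, which is at most their sum and hence is of order $d/\alpha'^2+d/(\alpha'\sqrt{\epsilon})$, matching the hypothesis of the lemma. The argument is essentially routine once the decomposition is in place; the only mildly delicate point is selecting the split $\alpha'/2,\alpha'/2$ and the probability budgets $0.1,0.1$ so that the two terms produce the two summands $d/\alpha'^2$ and $d/(\alpha'\sqrt{\epsilon})$ in the stated bound on $k$. I expect no genuine obstacle; the main thing to watch is keeping track of how the Gaussian noise variance $\sigma^2=d/(2k^2\epsilon)$ (which depends on $k$) produces the $k^{-2}$ scaling and thus, after taking a square root, the $1/\sqrt{\epsilon}$ dependence in the final bound.
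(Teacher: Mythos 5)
Your proposal is correct and follows essentially the same route as the paper: the same decomposition $\tilde\mu_p-\mu=(\bar X_p-\mu)+Z_p$, a Chebyshev/second-moment bound on the empirical-mean term, a bound on the noise term, and a union bound with the $0.1+0.1$ probability budget. The only (immaterial) difference is that the paper controls $\|Z_p\|$ via a Gaussian tail bound, giving a constant $2\sqrt{\log 20}$ in place of your $\sqrt{20}$ from Markov's inequality; both yield the same $O\bigl(d/(\alpha'\sqrt{\epsilon})\bigr)$ threshold.
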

\begin{proof}
We first bound the empirical mean estimation error. By Chebyshev's inequality and since the second moment is bounded by $c$, we have
\begin{align*}
    \PP\left(\left\|\mu- \frac{1}{k} \sum\nolimits_{i \in A_p} X_i \right \| > \frac{\alpha'}{2}  \right) &\leq \frac{4\EE\left[\big\|\mu- \frac{1}{k} \sum_{i \in A_p} X_i  \big \|^2 \right]}{\alpha'^2} \\
    &\leq \frac{4d c}{k {\alpha'}^2},
\end{align*}
Setting $k \geq 4dc/(0.1{\alpha'}^2)$ guarantees that $\big\|\mu- \frac{1}{k} \sum_{i \in A_p} X_i  \big \| \leq {\alpha'}/{2}$ at least with probability 0.9. 

Next, consider the error due to noise injection for privacy with parameter $\sigma^2= dc/(2k^2 \epsilon)$. Using the tail bounds for $d$-dimensional Gaussian vector we have 
\[ \PP\left( \|Z_p\| > \frac{\alpha'}{2} \right) 
\leq 2\exp\left(-\frac{\alpha'^2}{8 \sigma^2 d}\right).\] 
Choosing $k \geq \frac{2dc}{\alpha \sqrt{\epsilon}} \sqrt{\log(20)}$ guarantees that $\|Z_p  \| \leq \frac{\alpha'}{2}$  at least with probability 0.9.

The choice of $k$ as stated in the Lemma is sufficient to satisfy both bounds. Now consider
\begin{align*}
    \PP\big(\|\tilde{\mu}_p - \mu\| \leq \alpha'\big) &  \geq \PP\left(\left\|\mu-\frac{1}{k} \sum\nolimits_{ i \in A_p} X_i \right \| \leq \frac{\alpha'}{2}, \|Z_p  \| \leq  \frac{\alpha'}{2} \right)\\
    &\geq 1- \PP\left(\left\|\mu-\frac{1}{k} \sum\nolimits_{i \in A_p} X_i \right \| >  \frac{\alpha'}{2}  \right)-\PP\left( \|Z_p  \| >\frac{\alpha'}{2}  \right)
\end{align*}
The result follows by recalling that each term on the RHS is less than 0.1.
\end{proof}

From Lemma~\ref{lemma:Accuracy of one iteration d }, we have  $\PP\big(\|\tilde{\mu}_p - \mu\| \geq \alpha'\big) \leq 0.2$ for all $p=1,\ldots,m$, so long that $k$ satisfies the prescribed lower bound. Applying Theorem~3.1 of~\cite{minsker2015geometric} on boosting the confidence via geometric medians, with our choice of $m=200 \log(1/\beta)$ yields\footnote{Specifically, adapting to their notation, we invoke \cite[Theorem~3.1]{minsker2015geometric} with $p=0.2$, $\epsilon=\alpha'$, and $\alpha=0.21$.}
\[ \PP\big(\|\hat{\mu}_n - \mu\| \geq 1.04 \alpha'\big) \leq \beta.\] 
Setting $\alpha= 1.04 \alpha'$ and noticing that $n= k m $ completes the proof of the accuracy guarantee.

\section{Concluding Remarks and Future directions} \label{Sec: Conclusion}
This work established an information-theoretic characterization, termed $\epsilon$-MI PP, of the structured PP framework, where private information is modeled by functions of the database. The characterization was leveraged to derive properties of $\epsilon$-MI PP and obtain sufficient conditions for noise-injection (Laplace and Gaussian) mechanisms. Our results highlight the virtues of an information-theoretic perspective on PP, enabling more flexible composability theorems and variance-dependent noise parameter bounds that exploit the distributional assumptions in the PP framework. As applications of $\epsilon$-MI PP we explored auditing privacy frameworks, statistical inference tasks, and privacy-utility tradeoffs. 

Future research directions are abundant. First, we aim to better and strengthen the reverse implication in \cref{thm:equivalence}, i.e., derive relaxed and general conditions under which $\epsilon$-MI PP implies PP in the classic sense. This is a non-trivial endeavour as mutual information is an average quantity, while classic privacy notions are typically worst-case. We also target a power (namely, Type II error) analysis of the auditing hypothesis test in \cref{Sec: Auditing}. A possible direction from which to tackle this question is to derive a limit distribution theory under the alternative for the test statistic and use that to analyze the power of the test under local alternatives via LeCam's Third Lemma. Lastly, we are interested in further exploring to privacy-utility tradeoffs  \cite{zamani2022bounds,PrivacyUtilityTotalVariation,PrivacyUUtilityActiveHypothesis} via $\epsilon$-MI PP and connect those to tradeoffs for standard PP mechanisms (preliminary results in this direction are found in \cref{app:utility}). In particular, we aim to characterize the achievable privacy-utility region for PP mechanisms and design optimal mechanisms for different points in that region.

\section*{Acknowledgment}

The authors thank Rachel Cummings for her feedback and helpful suggestions regarding an earlier version of the $\epsilon$-MI PP idea.

 \bibliographystyle{ieeetr}
\bibliography{reference}

\end{document}